\newtheorem{proposition}{Proposition}
\newtheorem{theorem}{Theorem}
\newtheorem{lemma}{Lemma}
\newtheorem{corollary}{Corollary}
\theoremstyle{definition}
\newtheorem{remark}{Remark}
\newtheorem{definition}{Definition}
\newcommand{\real}{\mathbb R} 
\newcommand{\nat}{\mathbb N} 
\newcommand{\no}[1]{\left\|#1\right\|} 
\newcommand{\A}{\mathsf{A}}
\newcommand{\B}{\mathsf{B}}
\newcommand{\C}{\mathsf{C}}
\newcommand{\E}{\mathsf{E}}
\newcommand{\F}{\mathsf{F}}
\newcommand{\G}{\mathsf{G}}
\newcommand{\Q}{\mathsf{Q}}
\newcommand{\J}{\mathsf{J}}
\newcommand{\T}{\mathsf{T}}
\newcommand{\N}{\mathsf{N}}
\newcommand{\chan}{\mathfrak{C}}
\newcommand{\state}{\mathcal{S}} 
\newcommand{\effect}{\mathcal{E}(\state)} 
\newcommand{\eff}{\mathcal{E}} 
\newcommand{\obs}{\mathcal{O}} 
\newcommand{\trivial}{\mathcal{T}} 
\newcommand{\etrivial}{\mathcal{ET}} 
\newcommand{\simu}[1]{\mathfrak{sim}(#1)} 
\newcommand*{\comp}{\hbox{\hskip0.85mm$\circ\hskip-1mm\circ$\hskip0.85mm}}
\newcommand{\conv}[1]{{\rm conv}\left( #1 \right)}
\newcommand{\aff}[1]{{\rm aff}\left( #1 \right)}
\newcommand{\cone}[1]{{\rm cone}\left( #1 \right)}
\newcommand{\interior}[1]{{\rm int}\left( #1 \right)}
\def\Pe{\mathcal{P}}
\def\Ha{\mathcal{H}}
\def\1{\mathds{1}}
\def\tmin{\, \dot{\otimes} \, }
\DeclareMathOperator{\Tr}{Tr}
\DeclareMathOperator{\intr}{intr}
\begin{document}

\title{No-free-information principle in general probabilistic theories}
\date{\today}
\author{Teiko Heinosaari}
\email{teiko.heinosaari@utu.fi}
\affiliation{QTF Centre of Excellence, Department of Physics and Astronomy, University of Turku, Turku 20014, Finland}
\orcid{0000-0003-2405-5439}

\author{Leevi Lepp\"{a}j\"{a}rvi}
\email{leille@utu.fi}
\affiliation{QTF Centre of Excellence, Department of Physics and Astronomy, University of Turku, Turku 20014, Finland}
\orcid{0000-0002-9528-1583}

\author{Martin Pl\'{a}vala}
\email{martin.plavala@mat.savba.sk}
\affiliation{Mathematical Institute, Slovak Academy of Sciences, \v Stef\' anikova 49, Bratislava, Slovakia}
\orcid{0000-0002-3597-2702}

\maketitle

\begin{abstract}
In quantum theory, the no-information-without-disturbance and no-free-information theorems express that those observables that do not disturb the measurement of another observable and those that can be measured jointly with any other observable must be trivial, i.e., coin tossing observables. 
We show that in the framework of general probabilistic theories these statements do not hold in general and continue to completely specify these two classes of observables. 
In this way, we obtain characterizations of the probabilistic theories where these statements hold. 
As a particular class of state spaces we consider the polygon state spaces, in which we demonstrate our results and show that while the no-information-without-disturbance principle always holds, the validity of the no-free-information principle depends on the parity of the number of vertices of the polygons.
\end{abstract}


\section{Introduction}

Quantum theory implies three simple, yet significant and powerful theorems: the no-broadcasting theorem \cite{BarnumCavesFuchsJozsaSchumacher-noBroadcast}, the no-information-without-disturbance theorem \cite{Busch-niwd}, and the no-free-information theorem (which can be extracted e.g. from \cite[Prop. 3.25]{HeinosaariZiman-MLQT}). The no-broadcasting theorem says that quantum states cannot be copied; the no-information-without-disturbance theorem states that a quantum observable that can be measured without any disturbance must be trivial, meaning that it does not give any information on the input state; and the no-free-information theorem states that a quantum observable that can be measured jointly with any other observable must be a trivial observable. In other words, there is no free information, in the sense that a measurement of any non-trivial observable precludes the measurement of some other observable.

Each of the previous three statements can be formulated in the framework of general probabilistic theories (GPTs for short). GPTs constitute a wide class of theories that are based on operational notions such as states, measurements and transformations, where many of the key features of quantum theory, such as non-locality and incompatibility, can be formulated more generally. Including both quantum and classical theory as well as countless toy theories, GPTs then allow us to compare these theories to each other based on their features and quantify their properties. 

In the context of GPTs we find it better to call the previous statements as \emph{principles} instead of theorems as they are not valid in all probabilistic theories. In particular, the no-broadcasting principle is known to be valid in any non-classical general probabilistic theory \cite{BarnumBarretLeiferWilce-noBroadcast, BarnumHowardBarretLeifer-noBroadcast}. 
In this work, we concentrate on the latter two principles and investigate their validity in the realm of GPTs. 
The no-information-without-disturbance principle has been shown to hold within GPTs with some additional assumptions, such as purification \cite{ChiribellaDArianoPerinotti-GPT}; however, the validity of this principle has only been mentioned in \cite{BarnumWilcze-infProc} but never fully investigated in all probabilistic theories. The reverse of the principle was studied in \cite{PfisterWehner2013}. The no-free-information principle seems to have not been investigated at all in any other theory than quantum theory.

\begin{figure}
\centering
\includegraphics[width=8cm]{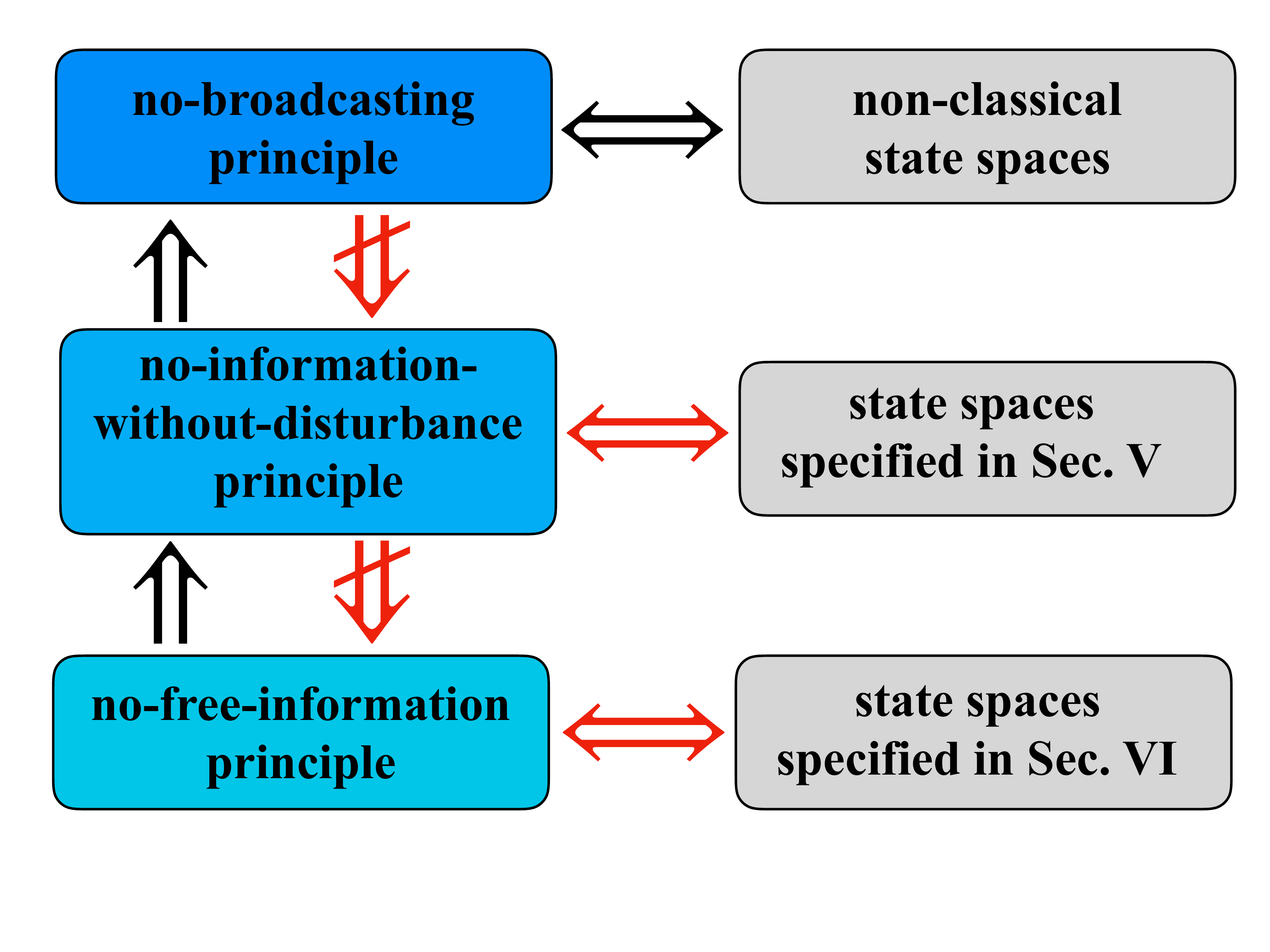}
\caption{\label{fig:summary} The three principles form a hierarchy, where the no-free-information is the most stringent principle. The main results of this paper (red color) are, firstly, to prove that the three principles are not equivalent and, secondly, to provide full characterizations of the state spaces where the no-free-information and no-information-without-disturbance principles are valid.}
\end{figure}

Amongst these principles, no-free-information principle is conceptually the strongest, with no-broadcasting the weakest: If the no-free-information principle is valid in some GPT---that is, for every non-trivial observable there exists another incompatible with it---then the no-information-without-disturbance principle must also be valid, as a non-disturbing observable would be compatible with every other observable. Furthermore, if the no-information-without-disturbance principle is valid and hence no non-trivial observable is non-disturbing, then the no-broadcasting principle has to hold, otherwise we would be capable of using the broadcasting map to create non-trivial non-disturbing observables.

We will define three classes of observables, the first one consisting of those observables that always yield a constant outcome independent of the measured state, the second one consisting of those observables that can be measured without any disturbance and the third one consisting of those observables that are compatible with any other observable. We will then characterize these classes, enabling us to show that the properties are different in some GPTs. 
We will also derive a necessary and sufficient criterion for a GPT to have both the no-information-without-disturbance principle and no-free-information principle be valid. Finally, we demonstrate the difference between the three principles by analyzing them in polygon state spaces. The main results of our investigation are summarized in Fig. \ref{fig:summary}.

\section{Motivating example} \label{sec:motivation}

In this section we will present a simple example to motivate our current investigation. A proper mathematical formulation of the general framework will follow in later sections; in the following example we are going to work with the set $B_h(\Ha)$ of square self-adjoint matrices over a finite dimensional Hilbert space $\Ha$. We denote by $\1$ the identity matrix and $0$ the zero matrix. For $A \in B_h(\Ha)$, we write $A \geq 0$ if $A$ is positive-semidefinite. Let $A, B \in B_h(\Ha)$, then if $A \geq 0$ and $\Tr(A) = 1$, then $A$ is a state and if $0 \leq B \leq \1$, then $B$ is an effect. We refer the reader to \cite{HeinosaariZiman-MLQT} for a more throughout treatment of states and effects and their operational meanings in quantum theory.

Imagine that we have an imperfect state preparation device that is meant to prepare qubits in a state $\rho$, but may malfunction and prepare a qutrit in a state $\sigma$. Moreover we assume that the machine malfunctions with a probability $p_e$, thereby the final state should be a mixture of $\rho$ and $\sigma$ with probabilities $1-p_e$ and $p_e$, respectively. This means that the machine is going to output a state $\Psi$ that should formally be given as $\Psi = (1-p_e) \rho + p_e \sigma$. But how does one understand the mixture of the $2 \times 2$ matrix $\rho$ and the $3 \times 3$ matrix $\sigma$? And how does one describe the output state-space of such a machine? We are going to present one possible way to handle this situation; in some cases one should consider the qubit Hilbert space as a subspace of the qutrit Hilbert space, but in other cases (as e.g. when dealing with bosons and fermions) one cannot.

Qubits are effectively a spin-$\frac{1}{2}$ systems and qutrits a spin-$1$ systems, hence the joint Hilbert space $\Ha$ containing both representations of the group $\text{SU}(2)$ is going to be $5$ dimensional and divided into two superselection sectors \cite{WickWightmanWigner-superselection} of dimensions $2$ and $3$, corresponding to the qubit and qutrit respectively. The output state $\Psi$ is going to be a block-diagonal $5 \times 5$ matrix given as
\begin{equation*}
\Psi =
\begin{pmatrix}
(1-p_e) \rho & 0 \\ 
0 & p_e \sigma \\ 
\end{pmatrix}.
\end{equation*}
Let $M$ be an effect on $\Ha$, then $M$ is of the form
\begin{equation*}
M =
\begin{pmatrix}
M_1 & M_3 \\ 
M_3^* & M_2 \\ 
\end{pmatrix},
\end{equation*}
where $M_1$, $M_2$, $M_3$ are matrices of corresponding sizes. We have
\begin{align*}
\Tr(\Psi M) &= \Tr
\begin{pmatrix}
(1-p_e) \rho M_1 & (1-p_e) \rho M_3 \\ 
p_e \sigma M_3^* & p_e \sigma M_2 \\ 
\end{pmatrix} = (1-p_e) \Tr(\rho M_1) + p_e \Tr (\sigma M_2),
\end{align*}
hence from the operational viewpoint we may set $M_3 = 0$ without loss of generality.

Let $N$ be an effect given as
\begin{equation*}
N =
\begin{pmatrix}
\1 & 0 \\ 
0 & 0 \\ 
\end{pmatrix}
\end{equation*}
then $N$ and $\1-N$ form a projective POVM. Moreover, both $N$ and $\1-N$ commute with all other block-diagonal effects, hence we conclude that the observable corresponding to the POVM $N$, $\1-N$ is compatible with every other measurement.

This is hardly a surprise, rather a known property of the superselection sectors. Yet this opens the questions of whether this is the only case when an observable is compatible with every other observable; whether no-information-without-disturbance still holds; and whether an observable does not disturb any other observables if it is compatible with them all.

As we saw in this example, we need to at least describe the set of states containing only block-diagonal matrices. For this reason we will work in the GPT formalism as it will provide a unified, cleaner and better suited apparatus for our calculations.

\section{Preliminaries}
The formulation of GPTs that we adopt here is also called the convex operational framework \cite{BarnumWilcze-infProc}. In this framework we assume that a state space is convex as we want to interpret convex combinations as mixtures of states. To describe observables, we will introduce effects as functions that assign probabilities to states.

\subsection{Structure of general probabilistic theories}
A state space $\state$ is a compact convex subset of an ordered real finite-dimensional vector space $\mathcal{V}$ such that $\state$ is a compact base for a generating positive cone $\mathcal{V}_+ = \{x \in \mathcal{V} \, | \, x \geq 0 \}$. 
Let $\mathcal{V}^*$ denote the dual vector space to $\mathcal{V}$, then the effect algebra $\effect \subset \mathcal{V}^*$ is the set of linear functionals $e: \mathcal{V} \to \mathbb{R}$ such that $0 \leq e(x) \leq 1$ for every $x \in \state$. 
The zero and the unit effects $o\in \effect$ and $u \in \effect$ are the unique effects satisfying $o(x) =0$ and $u(x) = 1$ for all $x \in \state$. We note that here we assume the No-Restriction Hypothesis so that every mathematically valid functional is assumed to be a physical effect in the theory \cite{ChiribellaDArianoPerinotti-GPT}.

The state space can be expressed as
\begin{equation*}
\state = \{ x \in \mathcal{V} \, | \, x \geq0, \ u(x) = 1\},
\end{equation*}
i.e. as an intersection of the positive cone $\mathcal{V}_+$ and an affine hyperplane determined by the unit effect $u$ on $\mathcal{V}$. Similarly we can define subnormalised states as
\begin{equation*}
\state^{\leq 1} = \{ x \in \mathcal{V} \, | \, x \geq0, \ u(x) \leq 1\}.
\end{equation*}
If $\dim({\rm aff}(\state))=d$, we say that the state space $\state$ is $d$-dimensional, and then we can choose $\mathcal{V}$ such that $\dim(\mathcal{V})=\dim(\mathcal{V}^*)=d+1$. It follows that the effects can be expressed as linear functionals on $\mathcal{V}$ such that
\begin{equation*}
\effect = \{ e \in \mathcal{V}^* \, | \, o \leq e \leq u \},
\end{equation*}
where the partial order in the dual space is the dual order defined by the positive dual cone $\mathcal{V}^*_+ = \{ f \in \mathcal{V}^* \, | \, f(x) \geq 0 \mathrm{\ for \ all\ } x \in \mathcal{V}_+ \}$ of $\mathcal{V}_+$. In fact $\effect$ is then just the intersection of the positive dual cone $\mathcal{V}^*_+$ and the set $u- \mathcal{V}^*_+$.

We say that a non-zero effect $e \in \effect$ is indecomposable if a decomposition $e=e_1+e_2$ for some effects $e_1,e_2 \in \effect$ is possibly only if $e_1$ and $e_2$ are positive scalar multiples of $e$ \cite{KimuraNuidaImai-indecomposability}. The indecomposable effects are exactly the ones that lie on the extreme rays of the positive dual cone $\mathcal{V}^*_+$. Indecomposable effects are also known as fine-grained effects (see e.g. \cite{ShortWehner-entropy}).

When dealing with systems composed of several systems we have to prescribe a procedure for how to construct a joint state space of the composed system. 
Mathematically, this amounts of specifying a tensor product.
We are going to use a tensor product only in cases where the other state space is classical.
Therefore, there is a unique choice known as the minimal tensor product \cite{NamiokaPhelps-tensorProd}.

\begin{definition}
Let $\state_1$, $\state_2$ be state spaces, then their minimal tensor product, denoted as $\state_1 \tmin \state_2$, is given as
\begin{equation*}
\state_1 \tmin \state_2 = \conv{ \{ x_1 \otimes x_2 \, | \, x_1 \in \state_1, x_2 \in \state_2 \} }.
\end{equation*}
\end{definition}

\subsection{Observables and channels}
In this section we will introduce the main objects of interest to us - observables, channels and compatibility. We will begin with observables and their compatibility, and build our way towards channels.

\begin{definition}
An observable $\A$ with a finite outcome set $\Omega_\A$ on a state space $\state$ is a mapping $\A: x \mapsto \A_x$ from the outcome set $\Omega_\A$ to the set of effects $\effect$ such that $\sum_{x \in \Omega_\A} \A_x =u$. The set of observables on $\state$ is denoted by $\obs(\state)$. For each $\A\in \obs(\state)$ we refer to $\Omega_\A$ as the outcome set of $\A$.
\end{definition}

Let $\A, \B \in \obs(\state)$ with respective outcome sets $\Omega_\A$, $\Omega_\B$. We say that $\B$ is a post-processing of $\A$, denoted by $\A \rightarrow \B$, if there is a right-stochastic matrix $\nu$ with elements $\nu_{xy}$, $x \in \Omega_\A$, $y \in \Omega_\B$, $0 \leq \nu_{xy} \leq 1$, $\sum_{y \in \Omega_\B} \nu_{xy} = 1$ such that
\begin{equation*}
\B_y = \sum_{x \in \Omega_\A} \nu_{xy} \A_x \, , 
\end{equation*}
in which case we also write $\B = \nu \circ \A$.
The operational interpretation is straightforward: we have $\A \rightarrow \B$ only if we can obtain the probabilities given by $\B$ from the probabilities given by $\A$. The condition $\sum_{y \in \Omega_\B} \nu_{xy} = 1$ follows from $\sum_{y \in \Omega_\B} \B_y = u$.

\begin{definition} \label{def:prelim-obs-compat}
A collection of $m$ observables $\A^{(1)},\ldots, \A^{(m)} \in \obs(\state)$ is compatible if there exists an observable $\J_{\A^{(1)}, \ldots, \A^{(m)}} \in \obs(\state)$ such that $\J_{\A^{(1)}, \ldots, \A^{(m)}} \rightarrow \A^{(i)}$ for all $i=1, \ldots,m$. If two observables $\A$ and $\B$ are compatible we denote it $\A \comp \B$.
\end{definition}
Compatibility of observables and of observables and channels will play a central role in our calculations.

\begin{definition}\label{def:operation}
Let $\state_1$, $\state_2$ be a state spaces. An operation is an affine map $\Psi: \state_1 \to \state^{\leq 1}_2$. A channel is an affine map $\Phi: \state_1 \to \state_2$. The set of channels from $\state_1$ to $\state_2$ is denoted by $\chan(\state_1,\state_2)$ and in the special case where $\state_1 = \state_2 \equiv \state$ we denote it by $\chan(\state)$.
\end{definition}

Definition \ref{def:operation} is, in a way, no-restriction hypothesis for channels. 
In quantum theory we also require channels to be completely positive, but we omit this within GPTs as in general it is problematic to specify what complete positivity means since it is not clear which ancillary state space is to be used in order to define it.
Even if in quantum theory the set of channels is smaller than in Definition \ref{def:operation}, our results are still valid. This is clarified in Remark \ref{cp} in Sec. \ref{sec:T2}.

Let $\state$ be a state space and let $\A \in \obs(\state)$ with an outcome set $\Omega_\A$ of $n$ elements. We can identify the points of $\Omega_\A$ with the extreme points of a simplex, which allows us to form convex combinations of the points of $\Omega_\A$. Moreover we will denote this simplex $\Pe(\Omega_\A)$ and its extreme points $\delta_1, \ldots, \delta_n$ as they correspond to classical measures on $\Omega_\A$ supported on a single point. Now we can see the observable $\A$ as a channel $\A: \state \to \Pe(\Omega_\A)$, where a state $s \in \state$ is mapped into a probability distribution $\sum_{i=1}^n \A_i(s) \delta_i$. Furthermore, a post-processing $\nu$ can be seen as a channel mapping the classical state spaces corresponding to outcome sets of observables.

As mentioned above, similarly to compatibility of measurements, we can introduce the compatibility of a measurement and a channel. The central role is going to be played by a generalization of partial trace, which is as follows: let $\state_1$, $\state_2$ be state spaces and let $x \in \state_1 \tmin \state_2$, then by definition we have $x = \sum_{i=1}^n \lambda_i x_i^1 \otimes x_i^2$ for some $x_i^1 \in \state_1$, $x_i^2 \in \state_2$, $\lambda_i \geq 0$ for $i \in \{1, \ldots, n\}$ and $\sum_{i=1}^n \lambda_i = 1$. We then define the linear maps $u_1: \state_1 \tmin \state_2 \to \state_2$ and $u_2: \state_1 \tmin \state_2 \to \state_1$ as
\begin{align*}
u_1(x) &= \sum_{i=1}^n \lambda_i u(x_i^1) x_i^2 = \sum_{i=1}^n \lambda_i x_i^2 \, , \\
u_2(x) &= \sum_{i=1}^n \lambda_i u(x_i^2) x_i^1 = \sum_{i=1}^n \lambda_i x_i^1 \, . \\
\end{align*}
The maps $u_1$, $u_2$ are direct generalizations of partial traces. $u_1$ and $u_2$ are well-defined and independent of the convex decomposition of the state $x \in \state_1 \tmin \state_2$: Let $\state_1 \subset \mathcal{V}_1$ and $\state_2 \subset \mathcal{V}_2$ where $\mathcal{V}_1$ and $\mathcal{V}_2$ are the real finite-dimensional vector spaces. Then $u_1: \mathcal{V}_1 \otimes \mathcal{V}_2 \to \mathcal{V}_2$ is the unique linear map such that for $v^1 \in \mathcal{V}_1$ and $v^2 \in \mathcal{V}_2$ we have
\begin{equation*}
u_1 ( v^1 \otimes v^2) = u(v^1) v^2
\end{equation*}
and by extending it to $\mathcal{V}_1 \otimes \mathcal{V}_2$ as a linear map it is well-defined and independent of the decomposition. Similar result holds also for $u_2: \mathcal{V}_1 \otimes \mathcal{V}_2 \to \mathcal{V}_1$.

\begin{definition}\label{def:compatible}
A channel $\Phi: \state \to \state$ is compatible with an observable $\A \in \obs(\state)$ with outcome set $\Omega_\A$ if and only if there is a channel $\tilde{\Phi}: \state \to \state \tmin \Pe(\Omega_\A)$ such that for all $x \in \state$ we have
\begin{align*}
\Phi(x) &= (u_2 \circ \tilde{\Phi})(x), \\
\A(x) &= (u_1 \circ \tilde{\Phi})(x),
\end{align*}
where $\circ$ denotes the composition of maps.
\end{definition}
If the channel $\Phi$ were an observable, we would obtain a definition of compatibility of observables which can be shown to be equivalent to Def. \ref{def:prelim-obs-compat}; see \cite{FilippovHeinosaariLeppajarvi-GPTcompatibility}. 
In a similar fashion one may also formulate the definition of compatibility of channels \cite{HeinosaariMiyadera-compOfChan}.

We will start with a simple lemma for the compatibility of an observable and a channel.
\begin{lemma} \label{lemma:iwd-comp}
A channel $\Phi\in\chan(\state)$ and an $n$-outcome observable $\A \in \obs(\state)$ are compatible if and only if for $i \in \{1, \ldots, n \}$ there are operations $\Phi_i: \state \to \mathcal{V}_+$ such that
\begin{align}
\Phi &= \sum_{i=1}^n \Phi_i, \label{eq:iwd-comp-1} \\
\A_i &=u \circ \Phi_i. \label{eq:iwd-comp-2}
\end{align}
\end{lemma}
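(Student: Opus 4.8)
The plan is to unfold Definition~\ref{def:compatible} by exploiting the special structure of the minimal tensor product with the classical simplex $\Pe(\Omega_\A)$. The essential observation is that, because $\Pe(\Omega_\A)$ is a simplex whose extreme points $\delta_1,\ldots,\delta_n$ form a basis of the underlying vector space, every element $z$ of $\state \tmin \Pe(\Omega_\A)$ can be written \emph{uniquely} as $z = \sum_{i=1}^n s_i \otimes \delta_i$ with $s_i \in \mathcal{V}_+$, and the normalization condition $u\!\left(\sum_i s_i\right) = 1$ characterizes exactly which tuples $(s_1,\ldots,s_n)$ arise. I would first record this coordinatewise decomposition, since its uniqueness is what lets us split any joint channel into well-defined components.

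Given a joint channel $\tilde{\Phi}: \state \to \state \tmin \Pe(\Omega_\A)$, I would apply this decomposition pointwise and \emph{define} $\Phi_i$ by $\tilde{\Phi}(x) = \sum_{i=1}^n \Phi_i(x) \otimes \delta_i$. Uniqueness of the decomposition forces each $\Phi_i$ to be affine (it is the composition of the affine map $\tilde{\Phi}$ with the fixed linear projection onto the $\delta_i$-component) and valued in $\mathcal{V}_+$, hence an operation. The two conditions then follow by evaluating the generalized partial traces on this decomposition: since the unit effect of the simplex satisfies $u(\delta_i)=1$, we get $u_2(\tilde{\Phi}(x)) = \sum_i \Phi_i(x)$, which equals $\Phi(x)$, giving \eqref{eq:iwd-comp-1}; and $u_1(\tilde{\Phi}(x)) = \sum_i u(\Phi_i(x))\, \delta_i$, which must equal $\A(x) = \sum_i \A_i(x)\delta_i$, giving $\A_i = u \circ \Phi_i$ and hence \eqref{eq:iwd-comp-2}.

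For the converse, given operations $\Phi_i$ satisfying \eqref{eq:iwd-comp-1} and \eqref{eq:iwd-comp-2}, I would simply set $\tilde{\Phi}(x) = \sum_{i=1}^n \Phi_i(x) \otimes \delta_i$ and verify it is a legitimate channel into $\state \tmin \Pe(\Omega_\A)$. Affineness is inherited from the $\Phi_i$; for the codomain, I would use that $u(\Phi_i(x)) = \A_i(x) \geq 0$ with $\sum_i \A_i(x) = u(x) = 1$, so that each $\Phi_i(x)$ is a subnormalized state and $\tilde{\Phi}(x)$ is a convex combination of product states $\hat{x}_i \otimes \delta_i$, hence lies in the minimal tensor product. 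Reversing the partial-trace computation above recovers $u_2 \circ \tilde{\Phi} = \Phi$ and $u_1 \circ \tilde{\Phi} = \A$, which is precisely Definition~\ref{def:compatible}.

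The main obstacle is the structural lemma about $\state \tmin \Pe(\Omega_\A)$: one must justify both the existence and the uniqueness of the component decomposition $z = \sum_i s_i \otimes \delta_i$ with $s_i \in \mathcal{V}_+$. Uniqueness is immediate from the linear independence of the $\delta_i$, but one has to check that positivity of $z$ in the minimal tensor cone really forces each $s_i$ into $\mathcal{V}_+$, and, in the converse direction, that positivity of the components together with the normalization $u(\sum_i s_i)=1$ is \emph{sufficient} for membership in $\state \tmin \Pe(\Omega_\A)$. Here the classical (simplex) nature of $\Pe(\Omega_\A)$ is doing all the work, since for a general second factor the minimal tensor cone need not admit such a clean coordinatewise description.
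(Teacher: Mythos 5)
Your proposal is correct and follows essentially the same route as the paper's proof: your ``linear projection onto the $\delta_i$-component'' is exactly the paper's dual-basis functional $b_i$ applied after $\tilde{\Phi}$, your positivity argument for the components $\Phi_i(x) \in \mathcal{V}_+$ is the same computation the paper performs when it notes that $b_i \circ \tilde{\Phi}$ is a positive map, and the converse direction (setting $\tilde{\Phi} = \sum_{i=1}^n \Phi_i \otimes \delta_i$) is identical. The structural fact you flag as the main obstacle—that membership in $\state \tmin \Pe(\Omega_\A)$ is equivalent to having a unique decomposition $\sum_i s_i \otimes \delta_i$ with $s_i \in \mathcal{V}_+$ and $\sum_i u(s_i)=1$—is true and is precisely what the paper uses implicitly, so spelling it out only makes the argument more self-contained.
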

\begin{proof}
Let $\Omega_\A$ denote the outcome space with $n$ points and let $\Pe(\Omega_A) = \conv{\delta_1, \ldots, \delta_n}$ be the set of probability distributions on $\Omega_A$, where $\delta_i$ for $i \in \{1, \ldots, n \}$ is the Dirac measure supported on $i$-th point of $\Omega_A$. Moreover let $b_1, \ldots, b_n$ denote the dual base of affine functions $\Pe(\Omega_\A) \to \mathbb{R}$, such that $b_i(\delta_j) = 1$ if and only if $i = j$. It is easy to see that all of the functions $b_1, \ldots, b_n$ are positive on $\Pe(\Omega_A)$. If $\Phi$ and $\A$ are compatible, then there exists a channel $\tilde{\Phi}: \state \to \state \tmin \Pe(\Omega_\A)$ such that $\Phi = u_2 \circ \tilde{\Phi}$ and $\A = u_1 \circ \tilde{\Phi}$.

In general, we have $\tilde{\Phi} \in \mathcal{V}^* \otimes \mathcal{V} \otimes \Pe(\Omega_\A)$, i.e.
\begin{equation*}
\tilde{\Phi} = \sum_{i = 1}^n \sum_{j \in J} f_{ij} \otimes \psi_j \otimes \delta_i
\end{equation*}
for some $f_{ij} \in \mathcal{V}^*$ and $\psi_j \in \mathcal{V}$ and for some index $j$ from a finite index set $J$. Denote $\Phi_i = \sum_{j \in J} f_{ij} \otimes \psi_j$ and notice that $\Phi_i$ are linear maps $\mathcal{V} \to \mathcal{V}$.

Since $\tilde{\Phi}$ must be a channel then $b_i \circ \tilde{\Phi} : \state \to \state$ must also be a positive map and since $b_i \circ \tilde{\Phi} = \Phi_i$, we see that $\Phi_i$ are positive maps. Since $\tilde{\Phi}$ is a joint channel of $\Phi$ and $\A$ we must have
\begin{align*}
\Phi &= u_2 \circ \tilde{\Phi} = \sum_{i=1}^n \Phi_i, \\
\A &= u_1 \circ \tilde{\Phi} = \sum_{i=1}^n (u \circ \Phi_i) \otimes \delta_i.
\end{align*}
$\sum_{i=1}^n (u \circ \Phi_i)(x) = 1$ for all $x \in \state$ implies that $\Phi_i$ are operations.

If there exist operations $\Phi_i$ satisfying \eqref{eq:iwd-comp-1} and \eqref{eq:iwd-comp-2}, then define $\tilde{\Phi} = \sum_{i=1}^n \Phi_i \otimes \delta_i$. Positivity and normalisation of $\tilde{\Phi}$ follows from the positivity of $\Phi_i$ and \eqref{eq:iwd-comp-2}. The fact that $\tilde{\Phi}$ is a joint channel of $\Phi$ and $\A$ follows from \eqref{eq:iwd-comp-1} and \eqref{eq:iwd-comp-2}.
\end{proof}

\section{Formulation of the two principles}

The purpose of measuring an observable is to learn something about the input state via the obtained measurement outcome probability distribution.
An observable is called \emph{trivial} if it cannot provide any information on input states. 
More precisely, this means that a trivial observable $\T$ assigns the same measurement outcome probability distribution to all states, i.e., $\T=pu$ for some probability distribution $p$ on $\Omega_\T$.
Physically speaking, a measurement of a trivial observable can be implemented simply by rolling a dice and producing a probability distribution independently of the input state. 
We denote by $\trivial_1$ the set of all trivial observables, i.e., 
\begin{align*}
\trivial_1 &= \{ \T \in \obs(\state) \, | \, \T_x(s) = \T_x(s') \ \forall x \in \Omega_\T, \ \forall s,s' \in \state \} \\
&= \{ \T \in \obs(\state) \, | \, \exists p \in \mathcal{P}(\Omega_\T): \  \T_x = p(x) u \ \forall x \in \Omega_\T \} \, .
\end{align*}

From the banal structure of trivial observables it follows that any such observable is compatible with every other observable. Formally, if $\T=pu$ is a trivial observable and $\A$ is some other observable, then we can define an observable $\J_{\T,\A}$ with effects $\J_{\T,\A}(x,y)=p(x)\A_y$, and we have $\sum_x \J_{\T,\A}(x,y) = \A_y$ and $\sum_y \J_{\T,\A}(x,y) = \T_x$. 

Furthermore, a trivial observable is compatible with every channel. 
Namely, if $\T=pu$ is a trivial observable and $\Phi$ is a channel, then we can define operations $\Phi_i: \state \to \mathcal{V}_+$ as $\Phi_i = p(i) \Phi$ for all $i \in \Omega_\T$. 
Clearly, then $\sum_{i \in \Omega_\T } \Phi_i = \Phi$ and $(u \circ \Phi_i)(x) = p(i) = \T_i(x)$ for all $i \in \Omega_\T$ so that by Lemma \ref{lemma:iwd-comp} we conclude that $\T$ and $\Phi$ are compatible.

These two features of trivial observables raise natural questions: are there observables other than trivial ones that have these features? If so, what is the structure of such observables? As we have seen in Sec. \ref{sec:motivation}, the answer to the first question is affirmative, hence the second question urges an investigation. 

To properly analyze the two mentioned features, we consider them as independent properties that determine a subclass of observables. 
Hence, for a state space $\state$, we define the following subsets of observables:
\begin{align*}
\trivial_2 &= \{ \T \in \obs(\state) \, | \, \T  \comp \Phi \ \forall \Phi \in \chan(\state) \} \, ,  \\
\trivial_3 &= \{ \T \in \obs(\state) \, | \, \T \comp \A \ \forall \A \in \obs(\state)  \} \, . \\
\end{align*}

If an observable $\T$ is compatible with the identity channel $id$, then $\T$ is compatible with any channel $\Phi\in\chan(\state)$.
Namely, suppose that $\T$ is compatible with $id$, so there exist operations $\Psi_i: \state \to \mathcal{V}_+$ such that $\sum_{i \in \Omega_\T} \Psi_i = id$ and $u \circ \Psi_i = \T_i$.
Then we can define a new set of operations as $\Phi \circ \Psi_i$, and these operations give $\sum_{i \in \Omega_\T} \Phi \circ \Psi_i = \Phi \circ id = \Phi$ and $u \circ (\Phi \circ \Psi_i) = (u \circ \Phi) \circ \Phi_i = u \circ \Phi_i = \T_i$.
Therefore, we can concisely write
\begin{equation*}
\trivial_2 = \{ \T \in \obs(\state) \, | \, \T  \comp id \}
\end{equation*}
so that there exist measurement set-ups for observables in $\trivial_2$ such that the measured states remain unchanged but nevertheless we get the outcome probability distribution of the observable. We conclude that \emph{$\trivial_2$ is the set of observables that can be measured without causing any disturbance}.

Now, suppose that $\T\in\trivial_2$, so there exist operations $\Phi_i: \state \to \mathcal{V}_+$ such that $\sum_{i \in \Omega_\T} \Phi_i = id$ and $u \circ \Phi_i = \T_i$ for all $i \in \Omega_\T$. If $\A \in \obs(\state)$, we define a joint observable $\G$ of $\A$ and $\T$ by $\G_{ij} = \A_j \circ \Phi_i$ for all $i \in \Omega_\T$ and $j \in \Omega_\A$. We then see that
\begin{align*}
\sum_j \G_{ij} = \sum_j (\A_j \circ \Phi_i) = \left( \sum_j \A_j \right) \circ \Phi_i = u \circ \Phi_i = \T_i, \\
\sum_i \G_{ij} = \sum_i (\A_j \circ \Phi_i) = \A_j \circ \left( \sum_i \Phi_i \right) = \A_j \circ id = \A_j
\end{align*}
for all $i \in \Omega_\T$ and $j \in \Omega_\A$. Thus,  $\A = \nu^{\A} \circ \G$ and $\T = \nu^{\T} \circ \G$, where $\nu^{\A}: \Omega_\T \times \Omega_\A \to \Omega_\A$ and $\nu^{\T}: \Omega_\T \times \Omega_\A \to \Omega_\T$ are defined as $\nu^{\A}_{(i,j)k}=\delta_{jk}$ and $\nu^{\T}_{(i,j)l}= \delta_{il}$ for all $j,k \in \Omega_\A$ and $i,l \in \Omega_\T$, so that $\A$ and $\T$ are compatible, and since $\A$ was an arbitrary observable, it follows that $\T \in \trivial_3$. We conclude that
\begin{equation*}
\trivial_1 \subseteq \trivial_2 \subseteq \trivial_3 \, .
\end{equation*}
These three sets and the previous chain of inclusions allows us to give a simple and concise formulation of the two principles:
\emph{The no-information-without-disturbance principle means that $\trivial_2=\trivial_1$, while the no-free-information principle means that $\trivial_3= \trivial_1$.} 

Indeed, these formulations capture the ideas behind the principles so that observables that can be measured without any disturbance, i.e. observables in $\trivial_2$, should be trivial and similarly observables that can be measured jointly with any other observable, i.e. observables in $\trivial_3$, should be a trivial as well so that only the measurement of a trivial observable allows for the joint measurement of any other observable.

\section{Characterization of $\trivial_2$}\label{sec:T2}
The aim of this section is to characterize non-disturbing observables and the structure of the state spaces they may exist on. We will have to introduce additional mathematical results to provide the full description of such state spaces.

\subsection{Direct sum of state spaces}
We will introduce a direct sum of state spaces as a generalized description of using only block-diagonal quantum states. Our aim is to mathematically formalize the operational idea of having an ordered pair of weighted states from two different state spaces.

\begin{definition}
Let $\mathcal{V}_1$, $\mathcal{V}_2$ be real finite-dimensional vector spaces and let $\state_1 \subset \mathcal{V}_1$ and $\state_2 \subset \mathcal{V}_2$ be state spaces. We define a state space $\state_1 \oplus \state_2 \subset \mathcal{V}_1 \times \mathcal{V}_2$ as the set of ordered and weighted pairs of states from $\state_1$ and $\state_2$, i.e.,
\begin{equation*}
\state_1 \oplus \state_2 = \{ (\lambda x_1, (1-\lambda)x_2) \, | \, x_1 \in \state_1, x_2 \in \state_2, \lambda \in [0, 1] \}.
\end{equation*}
\end{definition}

Given state spaces $\state_1, \ldots, \state_n$ one can define $\state_1 \oplus \ldots \oplus \state_n$ in a similar fashion as a subset of $\mathcal{V}_1 \times \ldots \mathcal{V}_n$, i.e., one would have
\begin{align*}
\state_1 \oplus \cdots \oplus \state_n = \left\lbrace (\lambda_1 x_1, \ldots, \lambda_n x_n) \, | \, x_i \in \state_i, \lambda_i \geq 0, \forall i \in \{1, \ldots, n \}, \sum_{i=1}^n \lambda_i = 1 \right\rbrace .
\end{align*}

In what follows we will present a few basic results about $\state_1 \oplus \state_2$. We will limit only to direct sum of two state spaces for the sake of not drowning in a sea of symbols, but it will be straightforward to see that all of the results hold for any finite direct sum as well.

\begin{proposition}
$\mathcal{E}(\state_1 \oplus \state_2) = \mathcal{E}(\state_1) \times \mathcal{E}(\state_2)$, where $\mathcal{E}(\state_1) \times \mathcal{E}(\state_2) = \{ (e_1, e_2) \, | \, e_1 \in \ \mathcal{E}(\state_1), e_2 \in \ \mathcal{E}(\state_2) \}$.
\end{proposition}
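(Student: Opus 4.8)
The plan is to exploit the canonical identification of the dual space $(\mathcal{V}_1 \times \mathcal{V}_2)^*$ with $\mathcal{V}_1^* \times \mathcal{V}_2^*$: every linear functional $e$ on $\mathcal{V}_1 \times \mathcal{V}_2$ is uniquely of the form $e(v_1, v_2) = e_1(v_1) + e_2(v_2)$ for some $e_1 \in \mathcal{V}_1^*$ and $e_2 \in \mathcal{V}_2^*$, and I will write $e = (e_1, e_2)$ under this identification. With this in hand the claimed equality becomes a statement about which pairs $(e_1, e_2)$ satisfy $0 \le e(y) \le 1$ for all $y \in \state_1 \oplus \state_2$, and I would prove it by establishing the two inclusions separately.

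For the inclusion $\mathcal{E}(\state_1) \times \mathcal{E}(\state_2) \subseteq \mathcal{E}(\state_1 \oplus \state_2)$, I would fix $e_1 \in \mathcal{E}(\state_1)$, $e_2 \in \mathcal{E}(\state_2)$ and an arbitrary element $y = (\lambda x_1, (1-\lambda)x_2)$ of the direct sum. Linearity gives $e(y) = \lambda e_1(x_1) + (1-\lambda) e_2(x_2)$, which is a convex combination of the two numbers $e_1(x_1), e_2(x_2) \in [0,1]$ and therefore lies in $[0,1]$. Hence $e = (e_1, e_2)$ is a valid effect on $\state_1 \oplus \state_2$.

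For the reverse inclusion I would evaluate an arbitrary $e = (e_1, e_2) \in \mathcal{E}(\state_1 \oplus \state_2)$ on the two ``boundary'' families of states obtained at $\lambda = 1$ and $\lambda = 0$. Choosing $\lambda = 1$ makes $(x_1, 0)$ range over all of $\state_1 \times \{0\}$ as $x_1$ runs through $\state_1$, and there $e(x_1, 0) = e_1(x_1)$, so the effect condition forces $0 \le e_1(x_1) \le 1$ for every $x_1 \in \state_1$, i.e.\ $e_1 \in \mathcal{E}(\state_1)$; the symmetric choice $\lambda = 0$ yields $e_2 \in \mathcal{E}(\state_2)$.

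The argument is essentially routine, so there is no serious obstacle; the only points that need care are the bookkeeping of the dual-space identification and the verification that the degenerate elements $(x_1, 0)$ and $(0, x_2)$ genuinely belong to $\state_1 \oplus \state_2$ --- which they do, since these are exactly the endpoint cases $\lambda \in \{0,1\}$ permitted in the definition (using also that $\state_1$ and $\state_2$ are nonempty). The extension to a finite direct sum $\state_1 \oplus \cdots \oplus \state_n$ proceeds identically, isolating each component by sending the corresponding weight to $1$.
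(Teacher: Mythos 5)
Your proof is correct and follows essentially the same route as the paper's: both use the identification $(\mathcal{V}_1 \times \mathcal{V}_2)^* \cong \mathcal{V}_1^* \times \mathcal{V}_2^*$, verify the inclusion $\mathcal{E}(\state_1) \times \mathcal{E}(\state_2) \subseteq \mathcal{E}(\state_1 \oplus \state_2)$ via the convex-combination identity $e\bigl((\lambda x_1, (1-\lambda)x_2)\bigr) = \lambda e_1(x_1) + (1-\lambda)e_2(x_2)$, and obtain the reverse inclusion by specializing to $\lambda = 1$ and $\lambda = 0$. Your added remarks on why $(x_1,0)$ and $(0,x_2)$ lie in $\state_1 \oplus \state_2$ are a small but welcome extra precision over the paper's more terse presentation.
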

\begin{proof}
$\state_1 \oplus \state_2 \subset \mathcal{V}_1 \times \mathcal{V}_2$ so we must have $\mathcal{E}(\state_1 \oplus \state_2) \subset \mathcal{V}_1^* \times \mathcal{V}_2^*$. Let $(e_1, e_2) \in \mathcal{V}_1^* \times \mathcal{V}_2^*$ and let $(\lambda x_1, (1-\lambda)x_2) \in \state_1 \oplus \state_2$, then from
\begin{equation} \label{eq:dirSum-effect-function}
(e_1, e_2) ( (\lambda x_1, (1-\lambda)x_2) ) = \lambda e_1(x_1) + (1-\lambda) e_2 (x_2)
\end{equation}
it follows that $\mathcal{E}(\state_1) \times \mathcal{E}(\state_2) \subset \mathcal{E}(\state_1 \oplus \state_2)$. Assuming $(e_1, e_2) \in \mathcal{E}(\state_1 \oplus \state_2)$ and setting $\lambda = 0$ and $\lambda = 1$ in \eqref{eq:dirSum-effect-function} we get $e_1 \in \mathcal{E}(\state_1)$ and $e_2 \in \mathcal{E}(\state_2)$.
\end{proof}

It follows that if $\A \in \obs(\state_1 \oplus \state_2)$, then we have $\A_i = (\A^1_i, \A^2_i)$ for some $\A^1 \in \obs(\state_1)$, $\A^2 \in \obs(\state_2)$.

\begin{proposition}
Let $\A, \B \in \obs(\state_1 \oplus \state_2)$, such that $\A_i = (\A^1_i, \A^2_i)$, $\B_j = (\B^1_j, \B^2_j)$, then $\A \comp \B$ if and only if $\A^1 \comp \B^1$ and $\A^2 \comp \B^2$.
\end{proposition}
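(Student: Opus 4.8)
The plan is to reduce compatibility on the direct sum to coordinate-wise conditions on the summands, using the product structure of effects from the previous proposition. First I would record a convenient reformulation of Definition \ref{def:prelim-obs-compat}: two observables $\A,\B\in\obs(\state)$ are compatible precisely when there exists a \emph{joint observable} $\G$ indexed by the product outcome set $\Omega_\A\times\Omega_\B$ whose marginals recover them, i.e. $\sum_j\G_{ij}=\A_i$ and $\sum_i\G_{ij}=\B_j$. This follows from the bare definition by a one-line construction: from $\J\to\A$ via $\nu$ and $\J\to\B$ via $\mu$ one sets $\G_{ij}=\sum_k\nu_{ki}\mu_{kj}\J_k$, and conversely such a $\G$ post-processes to $\A$ and $\B$ by marginalization. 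The point of passing to this form is that it gives both component problems a \emph{common} outcome set $\Omega_\A\times\Omega_\B$, which is exactly what allows the two pieces to be glued.

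Next I would invoke the two structural facts already available: $\eff(\state_1\oplus\state_2)=\eff(\state_1)\times\eff(\state_2)$, and, reading \eqref{eq:dirSum-effect-function} at $\lambda=0,1$, that the unit effect on the direct sum is the pair $u=(u_1,u_2)$ of the units on the summands. Consequently every observable effect on $\state_1\oplus\state_2$ splits coordinate-wise, summation acts coordinate-wise, and the normalization $\sum_i\A_i=u$ is equivalent to the pair of normalizations $\sum_i\A^1_i=u_1$ and $\sum_i\A^2_i=u_2$.

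For the ``if'' direction I would assume $\A^1\comp\B^1$ and $\A^2\comp\B^2$, take joint observables $\G^1,\G^2$ in the product-outcome marginal form described above, and define $\G_{ij}=(\G^1_{ij},\G^2_{ij})$. Each pair is an effect on the direct sum, so $\G$ is a candidate joint observable, and since all sums act coordinate-wise the normalization $\sum_{ij}\G_{ij}=(u_1,u_2)=u$ and the marginal identities $\sum_j\G_{ij}=(\A^1_i,\A^2_i)=\A_i$ and $\sum_i\G_{ij}=\B_j$ follow at once; hence $\G$ witnesses $\A\comp\B$. For the ``only if'' direction I would take a product-form joint observable $\G$ of $\A$ and $\B$, split each effect as $\G_{ij}=(\G^1_{ij},\G^2_{ij})$, and read off that $\G^1$ and $\G^2$ are observables (their total sums are $u_1$ and $u_2$) with the correct marginals $\A^1,\B^1$ and $\A^2,\B^2$, giving $\A^1\comp\B^1$ and $\A^2\comp\B^2$.

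The verifications are entirely routine once the set-up is fixed; the one genuine point requiring care is the choice of a common outcome set for the component joint observables. Using only Definition \ref{def:prelim-obs-compat} directly, the joint observables of $(\A^1,\B^1)$ and $(\A^2,\B^2)$ could a priori live on different outcome sets and could not be combined coordinate-wise. Passing to the canonical marginal form on $\Omega_\A\times\Omega_\B$ removes this obstacle, so I regard establishing (or citing) that reformulation as the key preparatory step, after which the two implications are symmetric coordinate-wise computations.
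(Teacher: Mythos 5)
Your proof is correct, and its core idea---that effects on $\state_1\oplus\state_2$ are pairs, so joint observables can be glued and split coordinate-wise---is the same as the paper's. The difference is in how the outcome-set and post-processing bookkeeping is handled, and here your version is actually tighter. The paper works directly with Definition \ref{def:prelim-obs-compat}: it glues arbitrary joint observables $\J_{\A^1,\B^1}$ and $\J_{\A^2,\B^2}$ coordinate-wise, padding one with zero effects so the outcome counts match, and then ``applies the respective post-processings''. This is terse on exactly the point you flag: a post-processing of an observable on the direct sum is a \emph{single} stochastic matrix acting on both coordinates at once, so one needs a common post-processing that simultaneously recovers $\A^1$ from the first components and $\A^2$ from the second---something arbitrary joint observables (with possibly different post-processings $\nu^1\neq\nu^2$) need not provide. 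Your preparatory step---replacing each component joint observable by the canonical one on the product outcome set $\Omega_\A\times\Omega_\B$ whose marginals are the two observables, so that the post-processing is marginalization in both coordinates---removes this obstacle cleanly and makes the gluing a one-line verification; it is the standard reformulation the paper itself invokes elsewhere (e.g. in the construction $\G_{ij}=\A_j\circ\Phi_i$ showing $\trivial_2\subseteq\trivial_3$). Your converse direction, splitting the effects of a joint observable into coordinates via $\eff(\state_1\oplus\state_2)=\eff(\state_1)\times\eff(\state_2)$, is equivalent to the paper's device of restricting to states of the form $(x_1,0)$, since evaluating a pair $(e_1,e_2)$ on such states is exactly reading off its first coordinate. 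In short: same approach, but your treatment is more careful at the one step where the paper's argument is loose.
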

\begin{proof}
If $\A^1 \comp \B^1$ and $\A^2 \comp \B^2$ then $\A \comp \B$ as we can form the joint observable as $(\J_{\A, \B})_k = ((\J_{\A^1, \B^1})_k, (\J_{\A^2, \B^2})_k)$ and apply the respective post-processings to the respective observables, hence $\A \comp \B$. Note that to make the observables have the same number of outcomes, we can always pad out one with zero effects corresponding to some extra outcomes that never happen.

If $\A \comp \B$, then by restricting the state space only to states of the form $(x_1, 0) \in \state_1 \oplus \state_2$, where $x_1 \in \state_1$ it follows that $\A^1 \comp \B^1$ are compatible as we can obtain $\J_{\A^1, \B^1}$ from $\J_{\A, \B}$. $\A^2 \comp \B^2$ follows in the same manner.
\end{proof}

This explains our motivational example in Sec. \ref{sec:motivation}. One can also prove a similar result for the compatibility of an observable and a channel, but we will leave that for the next section, where we will investigate the conditions for the compatibility of an observable and the identity channel $id: \state \to \state$, where direct sums of state spaces will play a role.

This last result will help us identify the direct sum structure of a state space.
\begin{proposition} \label{prop:dirSum-structureCond}
Let $\state$ be a state space and let $\state_1, \state_2 \subset \state$ be convex, closed sets, such that $\conv{\state_1 \cup \state_2} = \state$ and for every $x \in \state$ there are unique $x_1 \in \state_1$, $x_2 \in \state_2$ and $\lambda \in [0, 1]$ such that $x = \lambda x_1 + (1-\lambda) x_2$. It follows that $\state = \state_1 \oplus \state_2$.
\end{proposition}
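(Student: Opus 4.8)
The plan is to exhibit an affine bijection $\Gamma$ from $\state$ onto $\state_1 \oplus \state_2$, which is precisely what is needed for the two to coincide as state spaces. Regarding $\state_1$ and $\state_2$ as sitting in two copies $\mathcal{V}_1,\mathcal{V}_2$ of the ambient space, I would define $\Gamma$ by sending each $x \in \state$, written as $x = \lambda x_1 + (1-\lambda) x_2$ with $x_1 \in \state_1$, $x_2 \in \state_2$ and $\lambda \in [0,1]$, to the weighted pair
\[
\Gamma(x) = (\lambda x_1, (1-\lambda) x_2) \in \mathcal{V}_1 \times \mathcal{V}_2 .
\]
The object that actually appears in the direct sum is this weighted pair, so it is in terms of the pair (rather than the triple $(x_1,x_2,\lambda)$) that I would phrase the whole argument.

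The two facts that carry the proof are that $\Gamma$ is well defined and that it is affine, and both rest on the uniqueness hypothesis. For well-definedness I would note that the weighted pair is uniquely determined by $x$: when $\lambda \in (0,1)$ this is exactly the assumed uniqueness of the decomposition, while when $\lambda \in \{0,1\}$ the pair collapses to $(0,x)$ or $(x,0)$, so the ambiguity in the unused factor never reaches the pair. For affineness, given decompositions $x = \lambda x_1 + (1-\lambda) x_2$ and $y = \mu y_1 + (1-\mu) y_2$ together with $t \in [0,1]$, I would regroup $tx+(1-t)y$ according to which terms lie in $\cone{\state_1}$ and which in $\cone{\state_2}$. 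Setting $\nu = t\lambda + (1-t)\mu$, convexity of $\state_1$ and $\state_2$ produces $z_1 \in \state_1$ and $z_2 \in \state_2$ with $\nu z_1 = t\lambda x_1 + (1-t)\mu y_1$ and $(1-\nu) z_2 = t(1-\lambda) x_2 + (1-t)(1-\mu) y_2$, so that $tx+(1-t)y = \nu z_1 + (1-\nu) z_2$ is a valid decomposition. By uniqueness this is the decomposition that $\Gamma$ reads off, and therefore $\Gamma(tx+(1-t)y) = t\Gamma(x) + (1-t)\Gamma(y)$.

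It then remains to check that $\Gamma$ is a bijection onto $\state_1 \oplus \state_2$. Surjectivity is immediate, since any $(\lambda x_1, (1-\lambda) x_2) \in \state_1 \oplus \state_2$ is the image of $x = \lambda x_1 + (1-\lambda) x_2$, which lies in $\conv{\state_1 \cup \state_2} = \state$; injectivity holds because $\Gamma(x) = \Gamma(y)$ forces the two weighted pairs, and hence their sums $x$ and $y$, to agree. An affine bijection between compact convex sets has an affine inverse and extends to a linear order-isomorphism of the ambient ordered vector spaces taking one positive cone onto the other, so $\state$ and $\state_1 \oplus \state_2$ are indeed the same state space; here the closedness of $\state_1$ and $\state_2$ (making them compact, and making $\state_1 \oplus \state_2$ a genuine state space) is what keeps everything in the right category.

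I expect the only genuinely delicate point to be the behaviour at the boundary $\lambda \in \{0,1\}$, where the triple $(x_1,x_2,\lambda)$ is not literally unique; phrasing $\Gamma$ through the weighted pair $(\lambda x_1,(1-\lambda)x_2)$, the quantity that is unambiguous and that matches the definition of $\oplus$, is what removes this obstacle. Everything else is the routine bookkeeping of convex combinations.
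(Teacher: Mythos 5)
Your proposal is correct and follows essentially the same route as the paper: the paper's proof also defines the map $x = \lambda x_1 + (1-\lambda)x_2 \mapsto (\lambda x_1, (1-\lambda)x_2)$ into $\mathcal{V}_1 \times \mathcal{V}_2$ and asserts it is an affine isomorphism, which is exactly your $\Gamma$. The only difference is one of detail: the paper states that the isomorphism property "can easily be seen," whereas you spell out the well-definedness, affinity, and bijectivity checks (all of which are sound, including the treatment of the boundary cases $\lambda \in \{0,1\}$ via the weighted pair).
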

\begin{proof}
Let $\mathcal{V}_1$ and $\mathcal{V}_2$ denote the subspaces of $\mathcal{V}$ generated by $\state_1$ and $\state_2$ respectively. Define map $P: \state \to \mathcal{V}_1 \times \mathcal{V}_2$ given for $x \in \state$, $x = \lambda x_1 + (1-\lambda) x_2$, $x_1 \in \state_1$, $x_2 \in \state_2$ as $P(x) = (\lambda x_1, (1 - \lambda ) x_2)$. It follows that we have $P: \state \to \state_1 \oplus \state_2$, moreover one can easily see that $P$ is an affine isomorphism. It follows that $\state$ is affinely isomorphic to $\state_1 \oplus \state_2$, the result follows by simply omitting the isomorphism.
\end{proof}

\subsection{Compatibility of an observable and the identity channel}
We are going to derive conditions for an observable to be compatible with the identity channel $id: \state \to \state$. Our results will be similar to the results mentioned in \cite{BarnumWilcze-infProc, BarrettLindenMassarPironioPopescuRoberts-nonlocCorrelations}, but we will approach the problem from a different angle and with a different objective in mind.

\begin{lemma} \label{lemma:iwd-joint}
An observable $\A$ with an $n$-outcome space $\Omega_\A$ is compatible with the identity channel $id: \state \to \state$ if and only if there is a channel $\Phi: \state \to \state \tmin \Pe(\Omega_\A)$ such that  for every extreme point $y \in \state$ we have
\begin{equation}
\Phi(y) = \sum_{i=1}^n \A_i(y) y \otimes \delta_i. \label{eq:iwd-joint-channel}
\end{equation}
\end{lemma}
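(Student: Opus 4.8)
The plan is to read off the characterization directly from Definition~\ref{def:compatible} specialized to $\Phi = id$, and then to exploit extremality of $y$. By that definition, $\A$ is compatible with $id$ precisely when there exists a channel $\Phi:\state\to\state\tmin\Pe(\Omega_\A)$ satisfying the two marginal conditions $u_2\circ\Phi = id$ and $u_1\circ\Phi = \A$. The first preliminary observation I would record is that, because $\Pe(\Omega_\A)$ is a simplex with extreme points $\delta_1,\ldots,\delta_n$, every element of $\state\tmin\Pe(\Omega_\A)$ can be written in the form $\sum_{i=1}^n q_i\, s_i\otimes\delta_i$ with $q_i\geq 0$, $\sum_i q_i = 1$ and $s_i\in\state$; this is immediate from the definition of the minimal tensor product since each product $s\otimes(\sum_i p_i\delta_i)$ already has this shape and the set of such combinations is convex. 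This is the representation I will apply the partial traces $u_1,u_2$ to.

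For the forward direction, assume $\A$ and $id$ are compatible with joint channel $\Phi$, fix an extreme point $y\in\state$, and write $\Phi(y)=\sum_i q_i\, s_i\otimes\delta_i$. Applying $u_1$ gives $u_1(\Phi(y))=\sum_i q_i\delta_i$, which by the marginal condition must equal $\A(y)=\sum_i\A_i(y)\delta_i$; since the $\delta_i$ are affinely independent this forces $q_i=\A_i(y)$. Applying $u_2$ gives $u_2(\Phi(y))=\sum_i q_i s_i=\sum_i\A_i(y)s_i$, which must equal $id(y)=y$. The decisive step is now to invoke extremality: $y$ is expressed as the convex combination $\sum_i\A_i(y)s_i$ of states $s_i\in\state$, so extremality of $y$ forces $s_i=y$ for every index $i$ with $\A_i(y)>0$. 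Substituting this back, and noting that terms with $\A_i(y)=0$ contribute nothing, yields $\Phi(y)=\sum_i\A_i(y)\,y\otimes\delta_i$, which is exactly \eqref{eq:iwd-joint-channel}.

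For the converse, suppose a channel $\Phi$ satisfies \eqref{eq:iwd-joint-channel} at every extreme point $y$. I would simply verify the two marginal conditions of Definition~\ref{def:compatible} at extreme points and then extend by affinity. Directly, $u_2(\Phi(y))=\sum_i\A_i(y)\,u(\delta_i)\,y=\sum_i\A_i(y)\,y=y$ because $\sum_i\A_i(y)=1$, and $u_1(\Phi(y))=\sum_i\A_i(y)\,u(y)\,\delta_i=\sum_i\A_i(y)\delta_i=\A(y)$. Since $\Phi$, $u_1$, $u_2$, $id$ and $\A$ are all affine, the maps $u_2\circ\Phi$ and $id$ agree on $\state$ (whose points are convex combinations of extreme points), and likewise $u_1\circ\Phi=\A$; hence $\Phi$ is a joint channel and $\A\comp id$.

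The main subtlety to flag is that the right-hand side of \eqref{eq:iwd-joint-channel}, namely $y\mapsto\sum_i\A_i(y)\,y\otimes\delta_i$, is \emph{quadratic} in $y$ and therefore not affine; this is precisely why the condition is imposed only at extreme points, while the channel $\Phi$ itself remains affine on all of $\state$ and in general departs from this expression at non-extreme states. The genuinely nontrivial content of the lemma is the extremality argument in the forward direction, which is what collapses the a priori state-dependent post-measurement states $s_i(y)$ onto $y$ itself; the remaining verifications are routine marginal computations followed by affine extension.
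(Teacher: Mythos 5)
Your proof is correct, and it isolates the same key mechanism as the paper --- extremality of $y$ forcing the conditional post-measurement states to coincide with $y$ --- but it gets there by a different decomposition. The paper's forward direction never decomposes $\Phi(y)$ inside $\state\tmin\Pe(\Omega_\A)$; it invokes Lemma~\ref{lemma:iwd-comp} to replace the joint channel by operations $\Phi_1,\ldots,\Phi_n$ with $id=\sum_{i}\Phi_i$ and $\A_i=u\circ\Phi_i$, and applies extremality at the cone level: $y=\sum_i\Phi_i(y)$ with each $\Phi_i(y)\in\mathcal{V}_+$ forces $\Phi_i(y)=\lambda_i(y)\,y$, whence $\lambda_i(y)=\A_i(y)$ and the joint channel $\sum_i\Phi_i\otimes\delta_i$ takes the stated form. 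You instead work straight from Definition~\ref{def:compatible}: you first prove the structural fact that every element of $\state\tmin\Pe(\Omega_\A)$ can be written $\sum_i q_i\,s_i\otimes\delta_i$ (true precisely because $\Pe(\Omega_\A)$ is a simplex), then read $q_i=\A_i(y)$ and $y=\sum_i\A_i(y)\,s_i$ off the two marginals, and let extremality act on a convex decomposition of $y$ within $\state$ rather than on a positive decomposition within the cone. These are equivalent uses of extremality (an extreme point of a base spans an extreme ray of the cone), so the mathematical content is the same; what your route buys is self-containedness, since it needs neither Lemma~\ref{lemma:iwd-comp} nor the dual basis $b_i$ from its proof, while the paper's route is shorter given that lemma and produces the operations $\Phi_i$ explicitly, which the paper reuses later (e.g.\ in Prop.~\ref{prop:iwd-effectCond} and Remark~\ref{cp}). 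Your converse (check the marginals on extreme points, extend by affinity) is the same as the paper's, and your closing observation that the right-hand side of \eqref{eq:iwd-joint-channel} is quadratic in $y$ --- hence can only be demanded at extreme points --- is correct and apt.
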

\begin{proof}
Assume that an observable $\A$ is compatible with $id$, then due to Lemma \ref{lemma:iwd-comp} we must have operations $\Phi_1, \ldots, \Phi_n$ such that $id = \sum_{i=1}^n\Phi_i$ and $\A_i = u \circ \Phi_i$. To prove our claim we will use the defining property of extreme points. We have
\begin{equation*}
y = id(y) = \sum_{i=1}^n \Phi_i (y)
\end{equation*}
that implies $\Phi_i(y) = \lambda_i(y) y$, where $\lambda_i(y) \in [0, 1]$ may in general depend on $i$ and $y$. From $\A_i = u \circ \Phi_i$ we obtain $\lambda_i(y) = \A_i(y)$. For the joint channel $\Phi$ of $id$ and $\A$ we have
\begin{equation*}
\Phi(y) = \sum_{i=1}^n \Phi_i(y) \otimes \delta_i = \sum_{i=1}^n \A_i(y) y \otimes \delta_i.
\end{equation*}

Now assume that for a channel $\Phi: \state \to \state \tmin \Pe(\Omega_\A)$ the equation \eqref{eq:iwd-joint-channel} holds. For every extreme point $y \in \state$ we have
\begin{align*}
(u_2 \circ \Phi)(y) &= \sum_{i=1}^n \A_i(y) y = y, \\
(u_1 \circ \Phi)(y) &= \sum_{i=1}^n \A_i(y) \otimes \delta_i = \A(y).
\end{align*}
Since this holds for every extreme point of $\state$ it follows that $\Phi$ is a joint channel of $\A$ and $id$.
\end{proof}

\begin{proposition} \label{prop:iwd-effectCond}
Observable $\A$ is compatible with $id$ if and only if there is a set of affinely independent extreme points of $\state$, denoted by $x_j$, where $j \in \{1, \ldots, d\}$, such that $\state \subset \aff{\{x_1, \ldots, x_d\}}$ and for every extreme point $y \in \state$, $y = \sum_{j=1}^d \alpha_j x_j$ it holds that
\begin{equation}
\alpha_j (\A_i (x_j) - \A_i(y) ) = 0. \label{eq:iwd-effectCond}
\end{equation}
\end{proposition}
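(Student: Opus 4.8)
The plan is to reduce the statement to Lemma~\ref{lemma:iwd-joint}, which already recasts compatibility of $\A$ with $id$ as the existence of a channel $\Phi:\state\to\state\tmin\Pe(\Omega_\A)$ satisfying \eqref{eq:iwd-joint-channel} on every extreme point. Since a channel is by definition an affine map, it is completely determined by its values on any affine basis of $\aff(\state)$. So first I would fix such a basis consisting of extreme points: because $\state$ is compact and convex in a finite-dimensional space, Minkowski's theorem gives $\state=\conv{\mathrm{ext}(\state)}$, hence $\aff{\mathrm{ext}(\state)}=\aff(\state)$, and one can extract affinely independent extreme points $x_1,\dots,x_d$ with $\state\subset\aff{\{x_1,\dots,x_d\}}$, which is exactly the set named in the statement.

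The key structural observation, and the point I expect to be the crux of the argument, is that these points are not only affinely but \emph{linearly} independent in $\mathcal{V}$. Indeed, each $x_j$ lies on the hyperplane $u(x_j)=1$, so any linear relation $\sum_j c_j x_j=0$ forces $\sum_j c_j=u(\sum_j c_j x_j)=0$, and affine independence then makes every $c_j$ vanish. Together with the fact that the extreme points $\delta_i$ of $\Pe(\Omega_\A)$ are linearly independent, so that the expansion of any element of $\mathcal{V}\otimes\mathrm{span}\{\delta_i\}$ in the $\delta_i$ is unique, this is precisely what will let me pass from a vanishing vector identity to the scalar conditions \eqref{eq:iwd-effectCond}.

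For the forward implication, given $\A\comp id$ I take the channel $\Phi$ from Lemma~\ref{lemma:iwd-joint}. Evaluating \eqref{eq:iwd-joint-channel} at the extreme points $x_j$ gives $\Phi(x_j)=\sum_i \A_i(x_j)\,x_j\otimes\delta_i$, and affinity yields, for any extreme point $y=\sum_j\alpha_j x_j$,
\begin{equation*}
\Phi(y)=\sum_i\Big(\sum_j\alpha_j\A_i(x_j)x_j\Big)\otimes\delta_i .
\end{equation*}
Comparing with the value $\sum_i\A_i(y)\,y\otimes\delta_i$ demanded by \eqref{eq:iwd-joint-channel}, uniqueness of the $\delta_i$-expansion gives $\sum_j\alpha_j(\A_i(x_j)-\A_i(y))x_j=0$ for each $i$, and linear independence of the $x_j$ then delivers \eqref{eq:iwd-effectCond} for every $i,j$.

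For the converse I would simply run this backwards: define $\Phi$ on the affine basis by $\Phi(x_j)=\sum_i\A_i(x_j)\,x_j\otimes\delta_i$ and extend affinely, which is well-defined since the $x_j$ form an affine basis of $\aff(\state)$. Using \eqref{eq:iwd-effectCond} in the form $\alpha_j\A_i(x_j)=\alpha_j\A_i(y)$ collapses the expansion of $\Phi(y)$ back to $\sum_i\A_i(y)\,y\otimes\delta_i$, so \eqref{eq:iwd-joint-channel} holds at every extreme point. The only remaining check is that $\Phi$ is genuinely a channel: at each extreme point its image is a convex combination of the product states $y\otimes\delta_i$ with weights $\A_i(y)\geq0$ summing to $u(y)=1$, hence lies in $\state\tmin\Pe(\Omega_\A)$, and affinity together with $\state=\conv{\mathrm{ext}(\state)}$ pushes all of $\state$ into the tensor product. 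Lemma~\ref{lemma:iwd-joint} then concludes that $\A$ is compatible with $id$.
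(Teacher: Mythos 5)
Your proposal is correct and follows essentially the same route as the paper's own proof: both directions reduce to Lemma~\ref{lemma:iwd-joint}, evaluate the joint channel on an affine basis of extreme points, and exploit the linear independence of the $\delta_i$ together with the (linear) independence of the $x_j$. If anything, you are slightly more careful than the paper, which invokes ``affine independence'' where your hyperplane argument ($u(x_j)=1$ upgrading affine to linear independence) is the precise justification, and which leaves the verification that the affinely extended map actually lands in $\state \tmin \Pe(\Omega_\A)$ implicit.
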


\begin{proof}
Assume that an observable $\A$ is compatible with $id$ and let $x_1, \ldots, x_d \in \state$ be a set of affinely independent extreme points, such that $\state \subset \aff{\{x_1, \ldots, x_d\}}$. Let $y \in \state$ be an extreme point, then $y = \sum_{j = 1}^d \alpha_j x_j$, where $\sum_{j = 1}^d \alpha_j = 1$. According to Lemma \ref{lemma:iwd-joint} there is a channel $\Phi$ such that \eqref{eq:iwd-joint-channel} holds. Plugging in the expression $y = \sum_{j = 1}^d \alpha_j x_j$ we obtain
\begin{equation*}
\Phi(y) = \sum_{j = 1}^d \alpha_j \Phi(x_j) = \sum_{j = 1}^d \alpha_j \sum_{i=1}^n \A_i(x_j) x_j \otimes \delta_i
\end{equation*}
which implies
\begin{equation*}
\sum_{i=1}^n \A_i(y) y \otimes \delta_i = \sum_{j = 1}^d \sum_{i=1}^n \alpha_j \A_i(x_j) x_j \otimes \delta_i.
\end{equation*}
Since $\delta_1, \ldots, \delta_n$ are linearly independent we must have $\A_i(y) y = \sum_{j = 1}^d \alpha_j \A_i(x_j) x_j$ which yields
\begin{equation*}
\sum_{j = 1}^d \alpha_j \left( \A_i(x_j) - \A_i(y) \right) x_j = 0.
\end{equation*}
Eq. \eqref{eq:iwd-effectCond} follows by affine independence of $x_1, \ldots, x_d$.

Assume that \eqref{eq:iwd-effectCond} holds for an observable $\A$ and define a map $\Phi: \state \to \state \tmin \Pe(\Omega_\A)$ given for $j \in \{1, \ldots, d\}$ as
\begin{equation*}
\Phi(x_j) = \sum_{i=1}^n \A_i (x_j) x_j \otimes \delta_i
\end{equation*}
and extended by affinity to all of $\state$. One can show that the map $\Phi$ is well defined and does not depend on the choice of the points $x_1, \ldots, x_d$ and the proof relies on Eq. \eqref{eq:iwd-effectCond}. Let $y \in \state$ be an extreme point, then we have $y = \sum_{j=1}^d \alpha_j x_j$, $\sum_{j=1}^d \alpha_j = 1$ and
\begin{align*}
\Phi (y) &= \sum_{j=1}^d \alpha_j \Phi(x_j) = \sum_{j=1}^d \alpha_j \sum_{i=1}^n \A_i (x_j) x_j \otimes \delta_i = \sum_{j=1}^d \sum_{i=1}^n \alpha_j \A_i (y) x_j \otimes \delta_i = \sum_{i=1}^n \A_i (y) y \otimes \delta_i
\end{align*}
where we have used \eqref{eq:iwd-effectCond} in the third step. By lemma \ref{lemma:iwd-joint} it follows that $\A$ is compatible with $id$.
\end{proof}

Note that if $\state$ is a simplex, then the set $\{ x_1, \ldots, x_d \}$ is unique and contains all extreme points of $\state$, hence the requirement of Prop. \ref{prop:iwd-effectCond} is trivially satisfied.

It is important to note that Prop. \ref{prop:iwd-effectCond} provides a condition on the effects $\A_i$, not on $\A$ as a whole. Therefore it will be interesting to investigate the set of effects that satisfy the condition \eqref{eq:iwd-effectCond}.

\begin{definition}
We denote $\etrivial_2$ set of effects on a state space $\state$ that satisfy the condition \eqref{eq:iwd-effectCond}, i.e. $f \in \etrivial_2$ if there is some set $\{x_1, \ldots, x_d \}$ of affinely independent extreme points of $\state$ such that $\state \subset \aff{\{x_1, \ldots, x_d\}}$ and for every extreme point $y \in \state$, $y = \sum_{j=1}^d \alpha_j x_j$ it holds that
\begin{equation} \label{eq:iwd-effectCond-2}
\alpha_j (f (x_j) - f(y) ) = 0.
\end{equation}
\end{definition}

The following is straightforward.
\begin{lemma} \label{lemma:iwd-etrivialEffects}
$\etrivial_2$ is a convex subeffect algebra of $\effect$, i.e., if $f, g \in \etrivial_2$ and $0 \leq \lambda \leq 1$, then
\begin{enumerate}
\item $o, u \in \etrivial_2$ and if $f + g \in \effect$, then we must have $f + g \in \etrivial_2$,
\item $\lambda f + (1-\lambda) g \in \etrivial_2$.
\end{enumerate}
\end{lemma}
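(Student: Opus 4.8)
The plan is to reduce both closure statements to two observations: that, for a \emph{fixed} affine basis of extreme points, the defining condition \eqref{eq:iwd-effectCond-2} is linear in the effect, and that membership in $\etrivial_2$ does not actually depend on which admissible basis one chooses. Once these are in place the lemma is immediate, and all the work is concentrated in the second observation.

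First I would reformulate the condition in a more usable form. Fix an affine basis $\{x_1,\ldots,x_d\}$ of extreme points of $\state$ with $\state\subseteq\aff{\{x_1,\ldots,x_d\}}$, and for an effect $f$ let $\Phi_f$ be the unique affine map on $\aff(\state)$ determined by $\Phi_f(x_j)=f(x_j)\,x_j$. Exactly as in the proof of Proposition~\ref{prop:iwd-effectCond}, for an extreme point $y=\sum_j\alpha_j x_j$ one has $\Phi_f(y)=\sum_j\alpha_j f(x_j)x_j$, while $f(y)\,y=\sum_j\alpha_j f(y)x_j$; subtracting and using that the coefficients $\alpha_j(f(x_j)-f(y))$ sum to zero (because $f$ is affine), affine independence of the $x_j$ shows that \eqref{eq:iwd-effectCond-2} holds for this basis if and only if $\Phi_f(y)=f(y)\,y$ for every extreme point $y$.

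The key step, and the only genuine obstacle, is basis independence: if $f$ satisfies this condition for one admissible basis, it satisfies it for every admissible basis. Given a second basis $\{x_1',\ldots,x_d'\}$, each $x_k'$ is itself an extreme point, so the reformulated condition for the first basis yields $\Phi_f(x_k')=f(x_k')x_k'$; hence the map $\Phi_f'$ built from the primed basis agrees with $\Phi_f$ on an affine basis and therefore equals it, giving $\Phi_f'(y)=\Phi_f(y)=f(y)\,y$ on all extreme points. This is precisely what neutralizes the ``there exists a basis'' quantifier in the definition of $\etrivial_2$: it lets me use one common basis for two a priori differently-certified effects $f$ and $g$.

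With a common basis $\{x_1,\ldots,x_d\}$ fixed, the remainder is routine. For $o$ and $u$ the condition holds trivially, since $o(x_j)-o(y)=0$ and $u(x_j)-u(y)=0$. For fixed basis and fixed extreme $y=\sum_j\alpha_j x_j$, the assignment $h\mapsto\alpha_j(h(x_j)-h(y))$ is linear in $h$, so if $f,g\in\etrivial_2$ both satisfy \eqref{eq:iwd-effectCond-2} with respect to this common basis, then so does any linear combination $af+bg$. Taking $a=b=1$ gives $f+g$, which lies in $\etrivial_2$ whenever it is an effect; taking $a=\lambda$ and $b=1-\lambda$ gives $\lambda f+(1-\lambda)g$, which is automatically an effect by convexity of $\effect$. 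Both therefore lie in $\etrivial_2$, establishing items (1) and (2).
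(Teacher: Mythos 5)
Your proof is correct, and its computational core---linearity of condition \eqref{eq:iwd-effectCond-2} in the effect---is exactly what the paper invokes: the paper's entire proof is the single sentence that the results follow immediately from this linearity. What you do differently, and what makes your write-up more than a restatement, is that you identify and close a gap which that one-line argument silently steps over: because the definition of $\etrivial_2$ only demands that \emph{some} admissible basis of extreme points certify each effect, $f$ and $g$ could a priori be certified by different bases, and linearity with respect to a fixed basis then says nothing about $f+g$ or $\lambda f+(1-\lambda)g$. Your reformulation of \eqref{eq:iwd-effectCond-2} as the identity $\Phi_f(y)=f(y)\,y$ on extreme points (which is legitimate: the coefficients $\alpha_j\left(f(x_j)-f(y)\right)$ sum to zero since $f$ is linear and $\sum_j\alpha_j=1$, so affine independence kills them all), together with the observation that two affine maps agreeing on an affine basis of $\aff{\state}$ coincide, shows that membership in $\etrivial_2$ is independent of the chosen basis; only then may one fix a common basis and run the linearity argument. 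This buys genuine rigor, and it also supplies, as a by-product, the well-definedness claim that the paper asserts without proof inside Prop.~\ref{prop:iwd-effectCond} (that the map $\Phi$ constructed there ``does not depend on the choice of the points $x_1,\ldots,x_d$''). The only cost is length: once basis independence is in hand, your treatment of $o$, $u$, sums and convex combinations is identical to the paper's.
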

\begin{proof}
The results follow immediately from linearity of \eqref{eq:iwd-effectCond-2}.
\end{proof}

\begin{proposition}\label{prop:iwd-noNoise}
Let $0 < \lambda \leq 1$ and $0 \leq \mu \leq 1$, then $f \in \etrivial_2$ if and only if $\lambda f + (1-\lambda) \mu u \in \etrivial_2$.
\end{proposition}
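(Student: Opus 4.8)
The plan is to exploit the fact that the defining condition \eqref{eq:iwd-effectCond-2} depends on an effect only through the differences $f(x_j)-f(y)$, and that these differences are invariant under adding a constant multiple of $u$ and merely rescale under multiplication by a positive constant. So the whole statement should reduce to one short linear computation together with the observation that $\lambda>0$ lets us pass freely in both directions.

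First I would record that whenever $f$ is an effect the combination $g:=\lambda f+(1-\lambda)\mu u$ is again an effect, since for every $x\in\state$ we have $0\le g(x)=\lambda f(x)+(1-\lambda)\mu\le \lambda+(1-\lambda)=1$, using $0\le f(x)\le1$, $0\le\mu\le1$ and $0\le\lambda\le1$. This makes the statement ``$g\in\etrivial_2$'' meaningful, so that the only thing left to check is whether $f$ and $g$ satisfy the condition \eqref{eq:iwd-effectCond-2} simultaneously.

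Next comes the key computation. For any extreme points $x_j$ and $y$ of $\state$, the term $(1-\lambda)\mu u$ takes the constant value $(1-\lambda)\mu$ on both and therefore cancels in the difference, giving
\[
g(x_j)-g(y)=\lambda\bigl(f(x_j)-f(y)\bigr).
\]
Hence, for any set $\{x_1,\ldots,x_d\}$ of affinely independent extreme points with $\state\subset\aff{\{x_1,\ldots,x_d\}}$ and any extreme point $y=\sum_{j=1}^d\alpha_j x_j$,
\[
\alpha_j\bigl(g(x_j)-g(y)\bigr)=\lambda\,\alpha_j\bigl(f(x_j)-f(y)\bigr).
\]
Since $\lambda>0$, the left-hand side vanishes for every $j$ if and only if the right-hand side does. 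Thus $f$ satisfies \eqref{eq:iwd-effectCond-2} with a given witness set exactly when $g$ satisfies it with the same witness set, and because the witness set in the definition of $\etrivial_2$ is existentially quantified and the equivalence holds for every candidate set, I conclude $f\in\etrivial_2\iff g\in\etrivial_2$.

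I would also note that the forward implication can be seen more conceptually from Lemma \ref{lemma:iwd-etrivialEffects}: the constant effect $\mu u$ trivially satisfies \eqref{eq:iwd-effectCond-2}, since all of its differences $\mu u(x_j)-\mu u(y)$ vanish, so $\mu u\in\etrivial_2$, and convexity of $\etrivial_2$ then yields $\lambda f+(1-\lambda)\mu u\in\etrivial_2$ whenever $f\in\etrivial_2$. The reverse implication, by contrast, genuinely uses $\lambda>0$ in order to divide out and recover the condition on $f$. I expect the only point requiring care to be exactly this: tracking that the same witness set transfers between $f$ and $g$, and that strict positivity of $\lambda$ is what makes the ``only if'' direction go through.
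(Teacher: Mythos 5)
Your proposal is correct and follows essentially the same route as the paper: the forward direction via Lemma \ref{lemma:iwd-etrivialEffects} (convexity of $\etrivial_2$ together with $\mu u \in \etrivial_2$), and the reverse direction via the same linear computation, using that $u(x_j)-u(y)=0$ so the noise term cancels and $\lambda>0$ lets you divide it out. If anything, your explicit remark that the equivalence holds for \emph{every} candidate witness set, so that the existential quantifier in the definition of $\etrivial_2$ transfers correctly, is slightly more careful than the paper's own wording.
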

\begin{proof}
If $f \in \etrivial_2$ then $\lambda f + (1-\lambda) \mu u \in \etrivial_2$ follows by Lemma \ref{lemma:iwd-etrivialEffects}. If $\lambda f + (1-\lambda) \mu u  \in \etrivial_2$, then
\begin{equation*}
\alpha_j ((\lambda f + (1-\lambda) \mu) (x_j) - (\lambda f + (1-\lambda) \mu)(y) ) = 0
\end{equation*}
which is the same as
\begin{equation*}
\lambda \alpha_j ( f (x_j) - f (y) ) + \alpha_j (1-\lambda) \mu (u(x_j) - u(y) ) = 0
\end{equation*}
and $\alpha_j (f (x_j) - f(y) ) = 0$ follows because $\lambda \neq 0$ and $u(x_j)=u(y)=0$.
\end{proof}
The result of Prop. \ref{prop:iwd-noNoise} is non-trivial. As we will see, there are observables that are compatible with all other observables because they are ``noisy enough". But according to Prop. \ref{prop:iwd-noNoise} this is not the case for compatibility with the identity channel $id$. Loosely speaking Prop. \ref{prop:iwd-noNoise} together with the next result show that the structure of $\trivial_2$ is more like $\trivial_1$, than $\trivial_3$ in the sense that observables in $\trivial_2$ are in some sense classical; such as was the case in Sec. \ref{sec:motivation}.

\begin{corollary}\label{cor:T_2}
Observable $\A \in \trivial_2$ if and only if $\A_i \in \etrivial_2$ for all $i$.
\end{corollary}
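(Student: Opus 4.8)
The plan is to prove the two implications separately, in each case reducing matters to Proposition~\ref{prop:iwd-effectCond}. The forward implication is immediate: if $\A \in \trivial_2$, then by Proposition~\ref{prop:iwd-effectCond} there is a \emph{single} affinely independent set of extreme points $\{x_1, \ldots, x_d\}$ spanning $\aff(\state)$ for which \eqref{eq:iwd-effectCond} holds for every $\A_i$ at once. Reading this off one index at a time, each $\A_i$ satisfies \eqref{eq:iwd-effectCond-2} with respect to this same set, so $\A_i \in \etrivial_2$ for all $i$.

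The converse is the substantive direction, and the obstacle is precisely that the definition of $\etrivial_2$ allows each effect $\A_i$ to certify its membership with its \emph{own} spanning set, whereas Proposition~\ref{prop:iwd-effectCond} demands one common set serving all the $\A_i$ simultaneously. I would remove this mismatch by first establishing an independence lemma: if \eqref{eq:iwd-effectCond-2} holds for an effect $f$ with respect to one affinely independent set of extreme points spanning $\aff(\state)$, then it holds with respect to every such set. Granting this, I fix once and for all a single affinely independent spanning set $\{x_1, \ldots, x_d\}$ of extreme points (such a set exists since, by Krein--Milman, the extreme points affinely span $\aff(\state)$). Each hypothesis $\A_i \in \etrivial_2$ then yields \eqref{eq:iwd-effectCond-2} for $\A_i$ on this common set, i.e.\ \eqref{eq:iwd-effectCond} holds for all $i$ simultaneously, and Proposition~\ref{prop:iwd-effectCond} gives $\A \in \trivial_2$.

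To prove the independence lemma I would exploit the intrinsic characterization behind the well-definedness remark following Proposition~\ref{prop:iwd-effectCond}. Given $f \in \etrivial_2$ witnessed by $\{x_j\}$, let $\phi\colon \state \to \mathcal{V}$ be the affine map extending the assignment $x_j \mapsto f(x_j)\, x_j$. Exactly the computation in the second half of the proof of Proposition~\ref{prop:iwd-effectCond} shows, using \eqref{eq:iwd-effectCond-2}, that $\phi(y) = f(y)\, y$ for \emph{every} extreme point $y$, a value that makes no reference to the chosen basis. Now let $\{x_j'\}$ be any other such spanning set and let $y'$ be an arbitrary extreme point with $y' = \sum_j \alpha_j' x_j'$. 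Evaluating $\phi$ in two ways, once through its affinity together with $\phi(x_j') = f(x_j')\, x_j'$ and once through $\phi(y') = f(y')\, y'$, and subtracting, yields $\sum_j \alpha_j' (f(x_j') - f(y'))\, x_j' = 0$. Since $f$ is the restriction of a linear functional we have $\sum_j \alpha_j'(f(x_j') - f(y')) = f(y') - f(y') = 0$, so affine independence of $\{x_j'\}$ forces $\alpha_j'(f(x_j') - f(y')) = 0$ for every $j$, which is \eqref{eq:iwd-effectCond-2} for $\{x_j'\}$.

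The main obstacle is this backward direction, and within it the independence lemma; the remaining assembly is bookkeeping with Proposition~\ref{prop:iwd-effectCond}. The one point demanding care is verifying that the scalar $\sum_j \alpha_j'(f(x_j') - f(y'))$ genuinely vanishes, so that the full force of \emph{affine} (not merely linear) independence can be invoked; this is exactly where the affinity of $f$ on $\state$ enters.
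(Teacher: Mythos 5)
Your proposal is correct and takes essentially the same route as the paper, whose entire proof of this corollary is the single line ``Follows from Prop.~\ref{prop:iwd-effectCond}.'' Your added independence lemma --- showing that the witnessing set of extreme points can be taken common to all effects $\A_i$, so that the per-effect condition defining $\etrivial_2$ really does combine into the single-set condition of Prop.~\ref{prop:iwd-effectCond} --- is sound, and it makes rigorous exactly the point the paper leaves implicit (hinted at only in the remark, inside the proof of Prop.~\ref{prop:iwd-effectCond}, that the map $\Phi$ is independent of the choice of $x_1,\ldots,x_d$).
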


\begin{proof}
Follows from Prop. \ref{prop:iwd-effectCond}.
\end{proof}

\begin{theorem} \label{thm:iwd-directSum}
$f \in \etrivial_2$ if and only if $\state = \oplus_{k = 1}^N \state_k$ and $f$ is constant on each $\state_k$.
\end{theorem}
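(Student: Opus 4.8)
The plan is to prove both directions by translating condition \eqref{eq:iwd-effectCond-2} into a partition of an affine basis of $\state$ according to the values of $f$. The first observation I would record is that the affinely independent extreme points $x_1,\ldots,x_d$ from the definition of $\etrivial_2$, satisfying $\state\subset\aff{\{x_1,\ldots,x_d\}}$ with each $x_j\in\state$, form an affine basis of $\aff(\state)$, so $d=\dim(\mathcal{V})$; since moreover all $x_j$ lie on the hyperplane $u=1$, affine independence upgrades to \emph{linear} independence, making $\{x_1,\ldots,x_d\}$ a linear basis of $\mathcal{V}$. Letting $c_1,\ldots,c_N$ be the distinct values taken by $f$ on these points and $I_k=\{\,j : f(x_j)=c_k\,\}$, I would set $\mathcal{V}_k=\mathrm{span}\{\,x_j : j\in I_k\,\}$ so that $\mathcal{V}=\bigoplus_{k=1}^N\mathcal{V}_k$, and $\state_k=\state\cap\mathcal{V}_k$, aiming to establish $\state=\bigoplus_{k=1}^N\state_k$ through the $N$-ary version of Proposition \ref{prop:dirSum-structureCond}.

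For the forward implication the decisive consequence of \eqref{eq:iwd-effectCond-2} is that each extreme point is confined to a single sector: if $y=\sum_j\alpha_j x_j$ is the barycentric expansion of an extreme point $y$, then \eqref{eq:iwd-effectCond-2} gives $f(x_j)=f(y)$ for every $j$ with $\alpha_j\neq0$, so the support of $y$ lies inside one index set $I_k$ and hence $y\in\mathcal{V}_k$. From this I would verify the hypotheses of Proposition \ref{prop:dirSum-structureCond}. Generation, $\conv{\state_1\cup\cdots\cup\state_N}=\state$, holds because $\state$ is the convex hull of its extreme points and each extreme point lies in some $\state_k$. Uniqueness of a decomposition $x=\sum_k\lambda_k x_k$ with $x_k\in\state_k$ is automatic from $\mathcal{V}=\bigoplus_k\mathcal{V}_k$, since the linear projection $P_k\colon\mathcal{V}\to\mathcal{V}_k$ recovers $P_k(x)=\lambda_k x_k$ and thus $\lambda_k=u(P_k(x))$; existence with weights $\lambda_k\geq0$ reduces by linearity and $\state=\conv(\mathrm{ext}\,\state)$ to checking $P_k(x)\geq0$ at extreme points, where it is immediate since each extreme point sits in a single $\mathcal{V}_k$. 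Finally $f$ is constant on $\state_k$: any $x\in\state\cap\mathcal{V}_k$ can be written $x=\sum_{j\in I_k}\beta_j x_j$ with $\sum_{j\in I_k}\beta_j=u(x)=1$, so $f(x)=c_k\sum_{j\in I_k}\beta_j=c_k$ irrespective of the signs of the $\beta_j$.

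For the converse I would choose, in each summand $\state_k$, an affine basis of extreme points and take the union of their embeddings into $\bigoplus_k\state_k$; these are extreme points of $\state$, they are affinely independent (embeddings of distinct sectors lie in complementary coordinate subspaces), and they number $\sum_k(\dim(\state_k)+1)=\dim(\state)+1$, hence qualify as the required basis. Every extreme point $y$ of $\bigoplus_k\state_k$ lies in a single sector $\state_{k_0}$, so its barycentric support is contained in $I_{k_0}$; for any $j$ with $\alpha_j\neq0$ we then have $f(x_j)=c_{k_0}=f(y)$ because $f$ is constant on $\state_{k_0}$, and \eqref{eq:iwd-effectCond-2} follows, giving $f\in\etrivial_2$. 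I expect the main obstacle to be the forward direction, and specifically the passage from the coordinate-wise relation \eqref{eq:iwd-effectCond-2}, stated relative to one fixed basis, to a genuine basis-independent internal direct-sum decomposition of the convex body $\state$. The linchpin that removes this difficulty is the confinement of extreme points to single $f$-levels together with the fact that the affine basis is simultaneously a linear basis of $\mathcal{V}$: these together let the $f$-value partition split $\mathcal{V}$ linearly and carry the compact convex $\state$ along with it, even though the barycentric coordinates of a general extreme point need not be nonnegative.
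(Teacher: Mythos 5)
Your proof is correct, and although it shares the paper's skeleton, it takes a cleaner route through the key technical step. In both arguments the decisive observation is the same: condition \eqref{eq:iwd-effectCond-2} confines every extreme point of $\state$ to a single $f$-level of the chosen basis, after which the $N$-ary form of Prop.~\ref{prop:dirSum-structureCond} yields the direct sum. But the paper realizes the summands as $\state_c = \conv{ \{ z \in \state : f(z)=c,\ z \text{ extreme}\} }$ and then proves uniqueness of the decomposition $z=\sum_c \lambda_c y_c$ by a hands-on manipulation of convex decompositions, whose crucial step ($\lambda_c y_c = \lambda'_c y'_c$ for $c\neq c'$, argued ``as otherwise the aforementioned result would be violated'') is rather terse. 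You instead exploit the fact that the affine basis, lying on the hyperplane $u=1$, is a \emph{linear} basis of $\mathcal{V}$, so that the $f$-level partition induces a linear splitting $\mathcal{V}=\oplus_{k=1}^N \mathcal{V}_k$; taking $\state_k=\state\cap\mathcal{V}_k$, uniqueness of the convex decomposition and nonnegativity of the weights then come for free from the linear projections $P_k$ via $\lambda_k = u(P_k(x))$, which makes the step the paper labours over automatic. A second genuine difference is that you prove the converse implication explicitly, by assembling an affine basis adapted to the direct sum and verifying \eqref{eq:iwd-effectCond-2} on extreme points, whereas the paper's proof addresses only the forward direction and leaves the converse implicit. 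Two small points you could state rather than leave tacit: that $\state_k=\state\cap\mathcal{V}_k$ is closed and convex (immediate, since $\state$ is compact and $\mathcal{V}_k$ is a subspace), as required by Prop.~\ref{prop:dirSum-structureCond}, and that the extreme points of a direct sum are exactly the embedded extreme points of the summands, which your converse direction uses; both are easy lemmas consistent with the paper's framework.
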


\begin{proof}
If $\state$ is a simplex, then there is only one set $\{x_1, \ldots, x_d\}$ of affinely independent points and we have $\state = \oplus_{j=1}^d x_j$. The claim follows.

Let $x_1, \ldots, x_d$ be a set of affinely independent extreme points of $\state$ and let $y \in \state$ be an extreme point, then we have $y = \sum_{j=1}^d \alpha_j x_j$, $\sum_{j=1}^d \alpha_j = 1$. Since $\state$ is not a simplex we can find a pure state $y \in \state$ such that $\alpha_{j'} \neq 0$ and $\alpha_{j''} \neq 0$ for some $j',j''\in\{1,\ldots,d\}$. Eq. \eqref{eq:iwd-effectCond-2} implies $f(x_{j'}) = f(y)$ and $f(x_{j''}) = f(y)$, which gives $f(x_{j'}) = f(x_{j''})$.

Denote $\state_c = \conv{\{ z \in \state : f(z) = c, z \text{ is extreme}\}}$. We have just proved that that there is only finite number of the sets $\state_c$, $\state_c \subset \aff{ \{ x_j : f(x_j) = c \} } $.

Let $z \in \state$, then we have already proved that we have
\begin{equation}
z = \sum_{c \in [0, 1]} \lambda_c y_c, \label{eq:iwd-directSum-decomposition}
\end{equation}
where $0 \leq \lambda_c \leq 1$, $\sum_{c \in [0, 1]} \lambda_c = 1$ and $y_c \in \state_{c}$. Note that $y_c$ is not necessarily an extreme point of $\state$. We will show that the decomposition \eqref{eq:iwd-directSum-decomposition} is unique. Assume there is another decomposition $z = \sum_{c \in [0, 1]} \lambda'_c y'_c$, where again $0 \leq \lambda'_c \leq 1$, $\sum_{c \in [0, 1]} \lambda'_c = 1$ and $y'_c \in \state_{c}$. Moreover assume that $\lambda_{c'} \neq 0$, then from $\sum_{c \in [0, 1]} \lambda_c y_c = \sum_{c \in [0, 1]} \lambda'_c y'_c$ we have
\begin{equation*}
y_{c'} = \dfrac{1}{\lambda_{c'}} \left( \sum_{c \in [0, 1]} \lambda'_c y'_c -  \sum_{c \in [0, 1] \setminus \{ c' \}} \lambda_c y_c\right).
\end{equation*}
We can decompose $y_{c'} = \sum_{k = 1}^n \mu_k y_{c', k}$, where $0 \leq \mu_k \leq 1$, $\sum_{k=1}^n \mu_k = 1$ and $y_{c', k}$ are extreme points of $\state_{c'}$. Moreover assume that $\mu_{k'} \neq 0$, then we have
\begin{align*}
y_{c', k'} = \dfrac{1}{\mu_{k'}}& \left( \dfrac{1}{\lambda_{c'}} \left( \sum_{c \in [0, 1]} \lambda'_c y'_c -  \sum_{c \in [0, 1] \setminus \{ c' \}} \lambda_c y_c\right)  - \sum_{k=1, k\neq k'}^n \mu_k y_{c', k}  \right).
\end{align*}
It follows that the right-hand side must be an affine combination of $x_j$, $j \in \{1, \ldots, n\}$ such that $f(x_j) = c'$. This implies that for $c \neq c'$ we must have $\lambda_c y_c = \lambda'_c y'_c$ as otherwise the aforementioned result would be violated. We get
\begin{equation*}
y_{c', k'} = \dfrac{1}{\mu_{k'}} \left( \dfrac{\lambda'_{c'}}{\lambda_{c'}}  y'_{c'} - \sum_{k=1, k\neq k'}^n \mu_k y_{c', k}  \right).
\end{equation*}
It follows that
\begin{equation*}
y_{c'} = \dfrac{\lambda'_{c'}}{\lambda_{c'}}  y'_{c'},
\end{equation*}
hence the two decompositions of $z$ are the same. The result follows from Prop. \ref{prop:dirSum-structureCond}.
\end{proof}

By combining Cor. \ref{cor:T_2} and Thm. \ref{thm:iwd-directSum} we get our main result regarding $\trivial_2$:
\begin{corollary}
Observable $\A \in \trivial_2$ if and only if one can represent the state space $\state$ as a direct sum $\state = \bigoplus_{k=1}^N \state_k$ such that each effect $\A_x$ is constant on each $\state_k$.
\end{corollary}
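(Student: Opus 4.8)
The plan is to read the statement as the composition of two results already in hand: Corollary \ref{cor:T_2}, which reduces $\A\in\trivial_2$ to the pointwise condition $\A_i\in\etrivial_2$ for every outcome $i$, and Theorem \ref{thm:iwd-directSum}, which converts membership in $\etrivial_2$ for a \emph{single} effect into a direct sum decomposition on whose summands that effect is constant. The genuine difficulty is that Theorem \ref{thm:iwd-directSum} applied to each $\A_i$ produces its \emph{own} decomposition $\state=\bigoplus_k\state_k^{(i)}$, and these need not agree across different $i$; the real work is to manufacture one common direct sum on which all of $\A_1,\dots,\A_n$ are simultaneously constant.

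The easy implication comes first. If $\state=\bigoplus_{k=1}^N\state_k$ and every effect $\A_x$ is constant on each summand $\state_k$, then Theorem \ref{thm:iwd-directSum} gives $\A_x\in\etrivial_2$ for each $x$, whence $\A\in\trivial_2$ by Corollary \ref{cor:T_2}.

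For the converse I would first record, from Theorem \ref{thm:iwd-directSum} applied to each $\A_i\in\etrivial_2$ (the decomposition it yields is finite and $\A_i$ is constant on each block), that every $\A_i$ takes only finitely many values on the extreme points of $\state$. Hence the joint-value map $y\mapsto(\A_1(y),\dots,\A_n(y))$ takes finitely many distinct values $v_1,\dots,v_N\in[0,1]^n$ on extreme points. The key device is to collapse the $n$ conditions into one. I would choose coefficients $c_1,\dots,c_n$ in the open region $\{c:c_i>0 \ \forall i,\ \sum_i c_i<1\}$ so that the functional $v\mapsto\sum_i c_i v_i$ separates the finitely many points $v_1,\dots,v_N$; such $c$ exists because the ``bad'' vectors lie in the finite union of proper hyperplanes $\{c:c\cdot(v_k-v_l)=0\}$, which cannot exhaust a nonempty open set. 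Setting $g=\sum_i c_i\A_i$, one has $0\le g\le u$ since $c_i>0$ and $\sum_i c_i<1$, so $g$ is an effect; moreover $g\in\etrivial_2$, because each $c_i\A_i\in\etrivial_2$ by Proposition \ref{prop:iwd-noNoise} (with $\mu=0$ and $0<c_i\le 1$) and the sum remains in $\etrivial_2$ by Lemma \ref{lemma:iwd-etrivialEffects}, as it lands in $\effect$.

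Finally I would apply Theorem \ref{thm:iwd-directSum} to this single effect $g$. It supplies a direct sum $\state=\bigoplus_{k=1}^N\state_k$ whose summands are the convex hulls of the level sets $\{z:g(z)=c\}$ of $g$ on the extreme points, and on which $g$ is constant. Since $g$ was arranged to separate the joint values $v_1,\dots,v_N$, each level set of $g$ is exactly one joint-value class, so every $\A_i$ is constant on each $\state_k$ by construction. This is the required common decomposition, and Corollary \ref{cor:T_2} then closes the argument. The step I expect to be most delicate is precisely this separating-functional construction together with checking that $g$ stays an effect in $\etrivial_2$: it is what lets us invoke Theorem \ref{thm:iwd-directSum} only once, rather than proving directly that the common refinement of the individual decompositions is again a direct sum.
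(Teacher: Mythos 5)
Your proof is correct, and it does more than the paper itself, which states this corollary with no separate argument, presenting it as an immediate consequence ``by combining'' Cor.~\ref{cor:T_2} and Thm.~\ref{thm:iwd-directSum}. You have put your finger on exactly what that one-line combination glosses over: Thm.~\ref{thm:iwd-directSum} applied to each effect $\A_i$ separately produces a decomposition $\state = \bigoplus_k \state_k^{(i)}$ that may depend on $i$, whereas the corollary asserts a \emph{single} decomposition working for all outcomes simultaneously. Your resolution --- scalarizing the joint-value map by a generic effect $g = \sum_i c_i \A_i$ with $c_i>0$, $\sum_i c_i<1$, chosen off the finite union of hyperplanes $\{c : c\cdot(v_k-v_l)=0\}$ --- is sound: $g\in\effect$; $g\in\etrivial_2$ by Prop.~\ref{prop:iwd-noNoise} (with $\mu=0$) together with the closure under sums in Lemma~\ref{lemma:iwd-etrivialEffects}; and since $g$ separates the finitely many joint value vectors, the level sets of $g$ on extreme points coincide with the joint-value classes, so the decomposition that Thm.~\ref{thm:iwd-directSum} yields for $g$ (whose summands, as its proof shows, are the convex hulls of those level sets) makes every $\A_i$ constant on every summand. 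The route the authors presumably intended is the common refinement: group the extreme points by their joint value vector $(\A_1(y),\ldots,\A_n(y))$ and rerun the uniqueness argument of Thm.~\ref{thm:iwd-directSum} (via Prop.~\ref{prop:dirSum-structureCond}) with the scalar $f$ replaced by this vector-valued map. Your approach buys the convenience of invoking the theorem once, as a black box, at the cost of a genericity argument; the refinement approach avoids genericity but requires repeating the uniqueness proof for vector-valued level sets. Both are valid; yours cleanly fills a step the paper leaves implicit.
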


\begin{remark}\label{cp}
In quantum theory, channels and operations are required to be completely positive. As we have earlier taken all affine maps to be operations, the additional requirement of complete positivity potentially changes our previous results on compatibility. However, the crucial point is that also in quantum theory, the compatibility of an observable with all channels is equivalent to the compatibility of the observable with the identity channel. It is easy to see that the statement of Lemma \ref{lemma:iwd-joint} is valid in quantum theory even if operations are required to be completely positive. The results, including Thm. \ref{thm:iwd-directSum}, were based on Lemma \ref{lemma:iwd-joint} and hence they are true in quantum theory.

In general, assume that we have $\state = \oplus_{k=1}^N \state_k$ and let $x \in \state$ be given as $x = \sum_{k=1}^N \lambda_k x_k$ where $x_k \in \state_k$, $\lambda_k \geq 0$ for all $k \in \{1, \ldots, N \}$ and $\sum_{k=1}^N \lambda_k = 1$. We can easily see that the decomposition of $x$ is unique, which is a result similar to Prop. \ref{prop:dirSum-structureCond}. Define the projections $P_k: \state \to \state_k$ as $P_k(x) = \lambda_k x_k$, one can see that this is a well-defined and positive map and that $u \circ P_k = \A_k \in \etrivial_2$ for all $k \in \{1, \ldots, N \}$. One can show that if the projections $P_k$ are completely positive, then the joint channel of the observable $\A = \sum_{k=1}^N \A_k \otimes \delta_k$ and the identity channel $id$ is completely positive, because the projections $P_k$ are the needed decomposition of $id$, i.e. $id = \sum_{k=1}^N P_k$.

It is natural to assume that the projections $P_k$ are completely positive due to our interpretation of direct sum of state spaces; we see $\oplus_{k=1}^N \state_k$ as a randomization of some underlying state spaces $\state_k$ that form the superselection sectors. Especially in superselected quantum theory as in Sec. \ref{sec:motivation} it is easy to see that this is the case and that the projections are completely positive.
\end{remark}

Using Thm. \ref{thm:iwd-directSum} we can easily characterize all two-dimensional state spaces that have observables compatible with the identity channel, i.e. that have information without disturbance. Remember that if a state space is two-dimensional, then $\dim(\mathcal{V}) = 3$ where $\mathcal{V}$ is the vector space containing the cone $\mathcal{V}^+$ which has the base $\state$.
\begin{corollary} \label{coro:iwd-2d}
Let $\dim(\mathcal{V}) = 3$, then $\state = \state^1 \oplus \state^2$ if and only if $\state$ is the triangle state space.
\end{corollary}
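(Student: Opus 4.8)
The plan is to reduce everything to a single dimension count for the direct sum, combined with the trivial classification of $0$- and $1$-dimensional state spaces. The point is that a direct sum is rigidly controlled by the dimensions of its summands, so the constraint $\dim(\state)=2$ leaves essentially one possibility.

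First I would establish the auxiliary fact that $\dim(\state_1 \oplus \state_2) = \dim(\state_1) + \dim(\state_2) + 1$. The extreme points of $\state_1 \oplus \state_2$ are exactly the points $(x_1,0)$ with $x_1$ extreme in $\state_1$ and $(0,x_2)$ with $x_2$ extreme in $\state_2$, since any element with $\lambda \in (0,1)$ is a genuine convex combination $\lambda(x_1,0)+(1-\lambda)(0,x_2)$ of two distinct such points. The unit effect on $\state_1 \oplus \state_2$ is $(u_1,u_2)$ and it evaluates to $1$ on every element, so the whole set lies in an affine hyperplane missing the origin; hence its affine dimension is one less than the dimension of its linear span. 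The linear spans of $\{(x_1,0)\}$ and $\{(0,x_2)\}$ sit in the complementary subspaces $\mathcal{V}_1\times\{0\}$ and $\{0\}\times\mathcal{V}_2$, meeting only at the origin, and each has dimension one larger than the corresponding affine dimension (because each state space lives on the hyperplane $u_k=1$, off the origin). Thus the span has dimension $(\dim(\state_1)+1)+(\dim(\state_2)+1)$, and subtracting one gives the formula.

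With this in hand the forward direction is immediate. If $\state = \state^1 \oplus \state^2$ with $\dim(\state)=2$ (which is what $\dim(\mathcal{V})=3$ means), then $\dim(\state^1)+\dim(\state^2)=1$. Since a $0$-dimensional state space is a single point and a $1$-dimensional one is a segment, and these exhaust the options, up to order $\state^1$ is a point and $\state^2$ a segment. The direct sum is then the convex hull of exactly three extreme points which, being a $2$-dimensional set, must be affinely independent; hence $\state$ is a triangle. For the converse I would start from a triangle $\state=\conv{\{v_1,v_2,v_3\}}$ and exhibit the decomposition directly: set $\state^1=\{v_1\}$ and $\state^2=\conv{\{v_2,v_3\}}$, the opposite edge. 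These are closed convex subsets with $\conv{\state^1\cup\state^2}=\state$, and by uniqueness of barycentric coordinates every $x=\alpha_1 v_1+\alpha_2 v_2+\alpha_3 v_3$ with $\alpha_1<1$ has the unique representation $x=\alpha_1 v_1 + (1-\alpha_1)w$ with $w=(\alpha_2 v_2+\alpha_3 v_3)/(\alpha_2+\alpha_3)\in\state^2$, the value $\alpha_1=1$ being the boundary case $\lambda=1$ at the vertex $v_1$. Proposition \ref{prop:dirSum-structureCond} then yields $\state=\state^1\oplus\state^2$.

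I expect the dimension formula to be the only real obstacle, and within it the two structural points to verify carefully: that the linear spans of the two summands meet only at the origin, and that the normalization $u=1$ forces each summand to contribute one extra dimension to the span beyond its affine dimension. Once the formula is in place the equation $\dim(\state^1)+\dim(\state^2)=1$ does all the work, and the remaining steps are the standard classification of low-dimensional convex bodies and a routine application of Proposition \ref{prop:dirSum-structureCond}.
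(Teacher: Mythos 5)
Your proof is correct and takes essentially the same route as the paper's: a dimension count forcing the summands to be a point and a segment, hence three extreme points and a triangle --- the paper phrases this as $\dim(\mathcal{V}^1)+\dim(\mathcal{V}^2)=\dim(\mathcal{V})=3$, while you prove the equivalent affine-dimension formula $\dim(\state_1\oplus\state_2)=\dim(\state_1)+\dim(\state_2)+1$. The only difference is that you also spell out the converse direction (triangle $\Rightarrow$ direct sum) via Proposition \ref{prop:dirSum-structureCond}, which the paper leaves implicit.
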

\begin{proof}
Assume that $\state = \state^1 \oplus \state^2$, then $\mathcal{V} = \mathcal{V}^1 \times \mathcal{V}^2$, where $\mathcal{V}^1$, $\mathcal{V}^2$ are the vector spaces that contain $\state^1$ and $\state^2$ respectively. This implies $\dim(\mathcal{V}^1) + \dim(\mathcal{V}^2) = \dim(\mathcal{V}) = 3$ and we can assume that $\dim(\mathcal{V}^1)=1$, $\dim(\mathcal{V}^2)=2$. This implies that $\state^1$ contains only one point and $\state^2$ is a line segment, i.e. it has two extreme points. It then follows that $\state$ must have three extreme points, hence it is a triangle state space, which is a simplex.
\end{proof}

In a similar fashion one can show that every three-dimensional state space that has information without disturbance is pyramid shaped, where the base of the pyramid can be any two-dimensional state space.

\section{Characterization of $\trivial_3$}

\subsection{Simulability of observables}

Simulation of observables is a method to produce a new observable from a given collection of observables by a classical procedure, that is, by mixing measurement settings and post-processing the outcome data \cite{GueriniBavarescoCunhaTerraAcin-simulability, OszmaniecGueriniWittekAcin-simulability, FilippovHeinosaariLeppajarvi-simulations, OszmaniecMaciejewskiPuchala-simulability}. For a subset $\mathcal{B}\subseteq\obs(\state)$, we denote by $\simu{\mathcal{B}}$ the set of observables that can be simulated by using the observables from $\mathcal{B}$, i.e., $\A\in\simu{\mathcal{B}}$ if there exists a probability distribution $p$, a finite collection of post-processing matrices $\nu^{(i)}$ and observables $\B^{(i)}\in\mathcal{B}$ such that
\begin{align*}
\A = \sum_i p_i \left(\nu^{(i)}\circ \B^{(i)} \right) \, . 
\end{align*}
We will also denote $\simu{\B} \equiv \simu{\{ \B\} }$.
Clearly,
\begin{align*}
\simu{\B} = \{ \A \in \obs(\state) : \B \to \A \} \, .
\end{align*}

We recall from \cite{FilippovHeinosaariLeppajarvi-simulations} that an observable $\A$ is called \emph{simulation irreducible} if for any subset $\mathcal{B} \subset \obs$, we have $\A \in \simu{\mathcal{B}}$ only if there is $\B \in \mathcal{B}$ such that $\A\in\simu{\B}$ and $\B\in\simu{\A}$. Thus, a simulation irreducible observable can only be simulated by (essentially) itself. Equivalently, an observable is simulation irreducible if and only if it has indecomposable effects and is post-processing equivalent with an extreme observable. We denote by $\obs_{irr}(\state)$ the set of simulation irreducible observables. It was shown in \cite{FilippovHeinosaariLeppajarvi-simulations} that for every observable there exists a finite collection of simulation irreducible observables from which it can be simulated.

It is worth mentioning that simulation irreducible observables are always incompatible, and in fact, a state space is non-classical if and only if there exists at least two inequivalent simulation irreducible observables \cite{FilippovHeinosaariLeppajarvi-simulations}.

\subsection{Intersections of simulation sets}

A trivial observable can be simulated by any other observable, and therefore
\begin{equation}\label{eq:T1-0}
\trivial_1 = \bigcap_{\B \in \obs(\state)} \simu{\B} \, .
\end{equation}
The following stronger statement is less obvious, although not too surprising.

\begin{proposition}\label{prop:T1}
\begin{equation}\label{eq:T1-1}
\trivial_1 = \bigcap_{\B \in \obs(\state)\setminus \trivial_1} \simu{\B}.
\end{equation}
\end{proposition}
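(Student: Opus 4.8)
The plan is to prove the two inclusions separately, noting that one is immediate and all of the work lies in the other. For the inclusion $\trivial_1 \subseteq \bigcap_{\B \in \obs(\state)\setminus\trivial_1}\simu{\B}$ I would simply observe that this intersection is taken over a subfamily of the one in \eqref{eq:T1-0}, so it can only be larger; equivalently, a trivial observable is a post-processing of every observable and hence lies in each $\simu{\B}$.

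The substantive direction is $\bigcap_{\B\in\obs(\state)\setminus\trivial_1}\simu{\B} \subseteq \trivial_1$, which I would establish by contraposition: given a non-trivial $\A$, I will exhibit a \emph{single} non-trivial observable $\B$ that fails to simulate it, so that $\A\notin\simu{\B}$ and therefore $\A$ is not in the intersection. Since $\A\notin\trivial_1$, at least one of its effects, say $g:=\A_{y_0}$, is not a scalar multiple of $u$ (otherwise $\sum_y\A_y=u$ would force $\A$ to be trivial). This effect $g$ is the ``direction'' along which $\A$ carries information, and it is precisely what the separating observable will be built to avoid.

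The separating observable I would use is the biased two-outcome observable $\B=\{b,u-b\}$ with $b=\tfrac{1}{2}(1-\epsilon)u+\epsilon g$ for any fixed $\epsilon\in(0,1)$. Since $g$ is an effect, $b$ takes values in $[\tfrac{1-\epsilon}{2},\tfrac{1+\epsilon}{2}]\subseteq[0,1]$ on every state, so $b\in\effect$ and $u-b\in\effect$; moreover $b$ is not proportional to $u$ because $g$ is not, whence $\B\in\obs(\state)\setminus\trivial_1$ as required. It then remains to check $\B\not\to\A$. If $\B\to\A$ held, each effect of $\A$ would be a conical combination of $b$ and $u-b$; in particular $g=\alpha b+\beta(u-b)$ for some $\alpha,\beta\ge 0$. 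Substituting $b=\tfrac{1}{2}(1-\epsilon)u+\epsilon g$ and comparing coefficients along the linearly independent $u$ and $g$ yields $(\alpha-\beta)\epsilon=1$ and $(\alpha-\beta)\tfrac{1-\epsilon}{2}+\beta=0$, which force $\beta=-\tfrac{1-\epsilon}{2\epsilon}<0$, a contradiction. Hence $g\notin\cone{\{b,u-b\}}$, so $\B\not\to\A$ and $\A\notin\simu{\B}$, completing the contraposition.

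I expect the main obstacle to be conceptual rather than computational: the delicate point is exactly that the separating observable must itself be non-trivial. In \eqref{eq:T1-0} one may include the single-outcome observable $(u)$, whose simulation set is already $\trivial_1$, so the intersection collapses for free; once trivial observables are removed from the index set this shortcut is lost, and one must produce a non-trivial witness for every non-trivial $\A$. The construction above resolves this by placing $b$ on the segment between $\tfrac12 u$ and $g$, which keeps $\B$ non-trivial while, as the sign computation shows, positioning $\cone{\{b,u-b\}}$ so that it never reaches $g$. I note that this argument is elementary and does not invoke the simulation-irreducibility machinery of the preceding subsection; that machinery is what one would reach for instead only if a more structural description of the intersection were the goal.
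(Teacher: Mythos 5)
Your proof is correct and is essentially the paper's own argument in contrapositive form: the paper assumes a non-trivial dichotomic $\T$ lies in the intersection, mixes it with a trivial observable to obtain a non-trivial $\A = \lambda \T + (1-\lambda)\Q$, and uses linear independence of $\T_+$ and $u$ in a coefficient comparison to show $\T \notin \simu{\A}$ --- exactly your construction $b = \epsilon g + (1-\epsilon)\tfrac{1}{2}u$ with the roles of the given and constructed observables swapped, the bias fixed to $\tfrac{1}{2}$, and the binarization of $\A$ taking the place of the paper's w.l.o.g.\ reduction to dichotomic $\T$. The key step (the noisy mixture as witness, followed by the linear-independence comparison forcing an impossible coefficient) is the same in both.
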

\begin{proof}
Since $\trivial_1 \subseteq \bigcap_{\B \in \obs(\state)} \simu{\B}$, it is clear that $\trivial_1 \subseteq \bigcap_{\B \in \obs(\state)\setminus \trivial_1} \simu{\B}$. On the other hand, suppose that the inclusion is strict so that (w.l.o.g.) there exist a dichotomic observable $\T \in \bigcap_{\B \in \obs(\state)\setminus \trivial_1} \simu{\B}$ such that $\T \notin \trivial_1$. This means that the effects $\T_+$ and $\T_-$ are not proportional to the unit effect $u$ so that especially $\T_+$ and $u$ are linearly independent. 

We take $\lambda, q \in (0,1)$ and define another dichotomic observable $\A$ by $\A= \lambda \T + (1-\lambda) \Q$, where $\Q \in \trivial_1$ is defined as $\Q_+ = q u$ and $\Q_-= (1-q)u$. Since $\lambda \neq 0$ and $\T \notin \trivial_1$, we have that $\A \notin \trivial_1$ so that in fact $\A \in \obs(\state) \setminus \trivial_1$. Hence, because $\T \in \simu{\B}$ for all $\B \in \obs(\state)\setminus \trivial_1$ we have that in particular $\T \in \simu{\A}$, i.e. there exists two real numbers $\nu_1, \nu_2 \in [0,1]$ such that $\T_+ = \nu_1 \A_+ + \nu_2 \A_-$. When we expand $\A_+$ and $\A_-$, we find that
\begin{align*}
\T_+ &= \nu_1( \lambda \T_+ + (1-\lambda) q u) + \nu_2( \lambda \T_- + (1-\lambda) (1-q) u) \\
&= \lambda (\nu_1 -\nu_2) \T_+ + (1-\lambda) (\nu_1 - \nu_2) q u + \nu_2 u,
\end{align*}
where on the second line we have used the fact that $\T_- = u- \T_+$. From the linear independence of $u$ and $\T_+$ it follows that we must have $\lambda ( \nu_1 - \nu_2) =1$, which is a contradiction since $0<\lambda <1$ and $\nu_1 - \nu_2 \leq 1$. 
\end{proof}

The equations \eqref{eq:T1-0} and \eqref{eq:T1-1} make one to wonder if the set $\obs(\state)\setminus \trivial_1$ can still be shrunk without altering the resulting set of the intersection of their simulation sets.
Remarkably, taking $\obs_{irr}(\state)$ instead of $\obs(\state)\setminus \trivial_1$ changes the intersection, and leads to the following characterization for the set $\trivial_3$.

\begin{proposition}\label{prop:T3}
\begin{equation}
 \trivial_3 = \bigcap_{\B \in \obs_{irr}(\state)} \simu{\B} \, .
 \end{equation}
\end{proposition}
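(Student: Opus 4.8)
The plan is to prove the two inclusions separately, using the characterization of simulation irreducible observables recalled above together with the fact from \cite{FilippovHeinosaariLeppajarvi-simulations} that every observable can be simulated by a finite collection of simulation irreducible ones.

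For the inclusion $\trivial_3 \subseteq \bigcap_{\B \in \obs_{irr}(\state)} \simu{\B}$, I would take $\T \in \trivial_3$ and fix an arbitrary $\B \in \obs_{irr}(\state)$. Since $\T$ is compatible with $\B$, there is a joint observable $\J$ with $\J \to \T$ and $\J \to \B$. The second relation says exactly that $\B \in \simu{\J}$, and applying the definition of simulation irreducibility to the singleton $\{\J\}$ then forces the existence of an element of $\{\J\}$ that simulates $\B$ and is simulated by it; concretely, $\J \in \simu{\B}$, i.e. $\B \to \J$. Composing the post-processings $\B \to \J \to \T$ yields $\B \to \T$, that is $\T \in \simu{\B}$. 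As $\B$ was arbitrary, $\T$ lies in the intersection.

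For the reverse inclusion I would take $\T$ in the intersection and an arbitrary $\A \in \obs(\state)$, and build a single observable post-processing to both $\T$ and $\A$. Writing $\A = \sum_i p_i (\nu^{(i)} \circ \B^{(i)})$ with $\B^{(i)} \in \obs_{irr}(\state)$, I would define the book-keeping observable $\G$ with outcomes $(i,x)$ by $\G_{(i,x)} = p_i \B^{(i)}_x$. Summing out $x$ against $\nu^{(i)}$ recovers $\A$, so $\G \to \A$. On the other hand, because $\T$ belongs to each $\simu{\B^{(i)}}$ there are post-processings $\mu^{(i)}$ with $\T = \mu^{(i)} \circ \B^{(i)}$; since $\G$ retains the label $i$, the matrix $\lambda_{(i,x)z} = \mu^{(i)}_{xz}$ is a legitimate post-processing (its rows sum to one) and satisfies $\sum_{i,x}\lambda_{(i,x)z}\G_{(i,x)} = \sum_i p_i \T_z = \T_z$, so $\G \to \T$. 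Hence $\G$ is a joint observable of $\T$ and $\A$, giving $\T \comp \A$; as $\A$ was arbitrary, $\T \in \trivial_3$.

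The conceptual heart, and the step most in need of care, is the forward direction's use of simulation irreducibility to upgrade ``$\B$ is a marginal of $\J$'' into ``$\J$ is a post-processing of $\B$''. This is exactly where the argument would break for a generic non-trivial $\B$ (contrast with Prop. \ref{prop:T1}), and it is what singles out $\obs_{irr}(\state)$ as the correct index set. In the reverse direction the only point requiring attention is that the $i$-dependent post-processings merge into a single one, which works precisely because the simulating observable $\G$ keeps the classical label $i$; I would also dispatch the degenerate case in which $\obs_{irr}(\state)$ consists of a single equivalence class (the classical, simplex situation), where both sides reduce to $\obs(\state)$.
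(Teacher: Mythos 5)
Your proof is correct and follows essentially the same route as the paper's: the forward direction uses simulation irreducibility of $\B$ to upgrade the joint observable $\J$ into a post-processing equivalent of $\B$ (so $\T \in \simu{\J} = \simu{\B}$), and the reverse direction builds the same label-preserving mixture observable $\G_{(i,x)} = p_i \B^{(i)}_x$ and merges the $i$-dependent post-processings exactly as the paper does. The only cosmetic difference is your explicit remark about the degenerate single-equivalence-class case, which neither proof actually needs.
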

\begin{proof}
Let first $\T \in \trivial_3$. Since $\T$ is compatible with every other observable, it is in particular compatible with every simulation irreducible observable. Thus, for every  $\B \in \obs_{irr}(\state)$ there exists $\G^{\B} \in \obs(\state)$ such that $\{\B, \T\} \subseteq \simu{\G^{\B}}$. Since $\B$ is simulation irreducible it follows from the definition that $\B \leftrightarrow \G^{\B}$ so that $\simu{\B} = \simu{\G^{\B}}$. Thus, $\T \in \simu{\B}$ for all $\B \in \obs_{irr}(\state)$.

Now let $\A \in  \bigcap_{\B \in \obs_{irr}(\state)} \simu{\B}$ so that $\A \in \simu{\B}$ for all $\B \in \obs_{irr}(\state)$. We must show that $\A$ is compatible with every other observable. Thus, let $\C \in \obs(\state)$. For $\C$ there exists a finite set of simulation irreducible observables $\mathcal{B}=\{\B^{(i)}\}_{i=1}^n$ such that $\C \in \simu{\mathcal{B}}$. Thus, there exists a probability distribution $(p_i)_{i=1}^n$ and a post-processing $\nu: \{1, \ldots,n\} \times \Omega_{\mathcal{B}} \to \Omega_\C$ such that
\begin{equation}
\C_y = \sum_{i,x} p_i \nu_{(i,x)y} \B^{(i)}_x
\end{equation}
for all $y \in \Omega_\C$. If we denote by $\widetilde{\B}$ the (generalized) mixture observable with outcomes set $\{1, \ldots,n\} \times \Omega_{\mathcal{B}}$ defined by $\widetilde{\B}_{(i,x)} = p_i \B^{(i)}_x$ for all $i\in \{1,\ldots,n\}$ and $ x\in \Omega_{\mathcal{B}}$, we see that actually $\C_y = ( \nu \circ \widetilde{\B} )_y$ for all $y \in \Omega_\C$ so that $\C \in \simu{\widetilde{\B}}$. 

Since $\A \in \simu{\B}$ for all $\B \in \obs_{irr}(\state)$, we have that $\A \in \simu{\B^{(i)}}$ for all $i =1,\ldots,n$. Thus, there exists post-processings $\mu^{(i)}: \Omega_{\mathcal{B}} \to \Omega_\A$ such that $\A = \mu^{(i)} \circ \B^{(i)}$ for all $i=1, \ldots,n$. If we use the same probability distribution $(p_i)_i$ as before, we have that for all $z \in \Omega_\A$
\begin{align*}
\A_z &= \sum_i p_i \A_z = \sum_i p_i \sum_{x} \mu^{(i)}_{xz} \B^{(i)}_x = \sum_{i,x}  \mu_{(i,x)z} p_i\B^{(i)}_x = ( \mu \circ \widetilde{\B} )_z,
\end{align*}
where we have defined a new post-processing $\mu: \{1, \ldots,n\} \times \Omega_{\mathcal{B}} \to \Omega_\A$ by setting $\mu_{(i,x)z} = \mu^{(i)}_{xz}$ for all $i \in \{1,\ldots,n\}$, $x \in \Omega_{\mathcal{B}}$ and $z \in \Omega_\A$. Hence, also $\A \in \simu{\widetilde{\B}}$ so that $\A$ and $\C$ are compatible.
\end{proof}

As was shown in Prop. \ref{prop:T3}, the observables that are compatible with every other observable are exactly those that can be post-processed from every simulation irreducible observable. However, we note that it is enough to consider only post-processing inequivalent simulation irreducible observables since two observables $\B$ and $\B'$ are post-processing equivalent, $\B \leftrightarrow \B'$, if and only if $\simu{\B} = \simu{\B'}$. Thus, when we consider the intersection of the simulation sets of simulation irreducible observables, we only need to select some representative for each post-processing equivalence class. 

The natural choice for the representative was presented in \cite{FilippovHeinosaariLeppajarvi-simulations}: we take it to be the extreme simuation irreducible observable as it has linearly independent indecomposable effects with the minimal number of outcomes in the respective post-processing equivalence class. Furthermore, it was shown that such extreme observable exists in every equivalence class for simulation irreducible observables. We denote the set of extreme simulation irreducible observables by $\obs^{ext}_{irr}(\state)$ so that 
\begin{equation*}
\trivial_3 = \bigcap_{B\in \obs_{irr}(\state)} \simu{\B} = \bigcap_{B\in \obs^{ext}_{irr}(\state)} \simu{\B}.
\end{equation*}

\begin{corollary}\label{cor:T3-cone}
An observable $\A \in \obs(\state)$ on a state space $\state$ is included in $\trivial_3$ if and only if
\begin{equation} \label{eq:T3-cor}
\A_y \in \bigcap_{\B \in \obs^{ext}_{irr}(\state)} \cone{ \{\B_x\}_{x \in \Omega_\B}} \quad \forall y \in \Omega_\A.
\end{equation}
\end{corollary}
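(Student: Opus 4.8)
The plan is to reduce the corollary to the identity $\trivial_3 = \bigcap_{\B\in\obs^{ext}_{irr}(\state)}\simu{\B}$ already established above, by giving an effect-level description of each set $\simu{\B}$ when $\B$ is an extreme simulation irreducible observable. First I would recall that $\A\in\simu{\B}$ means exactly that $\A$ is a post-processing of $\B$, i.e. there is a right-stochastic matrix $\nu$ with $\A_y = \sum_{x\in\Omega_\B}\nu_{xy}\B_x$. The forward inclusion is then immediate: since every $\nu_{xy}\geq 0$, each effect $\A_y$ is a nonnegative linear combination of the $\B_x$ and hence lies in $\cone{\{\B_x\}_{x\in\Omega_\B}}$. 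This direction uses no special property of $\B$.

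The content is in the reverse inclusion, and this is where the choice of extreme simulation irreducible representatives is essential. I would use the fact, recalled before the corollary, that an extreme simulation irreducible observable $\B$ has linearly independent indecomposable effects. Thus if $\A_y\in\cone{\{\B_x\}_{x\in\Omega_\B}}$ for every $y$, then for each $y$ we may write $\A_y = \sum_x c_{xy}\B_x$ with $c_{xy}\geq 0$, and linear independence of the $\B_x$ makes the coefficients $c_{xy}$ unique. It then remains only to check that $\nu_{xy}:=c_{xy}$ is right-stochastic. Summing over $y$ and using the normalizations $\sum_y\A_y = u = \sum_x\B_x$ gives $\sum_x\big(\sum_y c_{xy}\big)\B_x = \sum_x\B_x$; linear independence forces $\sum_y c_{xy}=1$ for each fixed $x$, and together with $c_{xy}\geq 0$ this also yields $0\leq c_{xy}\leq 1$. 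Hence $\nu$ is a genuine post-processing and $\A\in\simu{\B}$. This establishes
\begin{equation*}
\simu{\B} = \{\A\in\obs(\state) : \A_y\in\cone{\{\B_x\}_{x\in\Omega_\B}}\ \forall y\in\Omega_\A\}
\end{equation*}
for every $\B\in\obs^{ext}_{irr}(\state)$.

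Finally I would intersect over all $\B\in\obs^{ext}_{irr}(\state)$. Because the defining condition is imposed separately for each outcome $y$, intersecting the sets $\simu{\B}$ just amounts to requiring $\A_y$ to lie simultaneously in every cone $\cone{\{\B_x\}_{x\in\Omega_\B}}$, i.e. in $\bigcap_{\B}\cone{\{\B_x\}_{x\in\Omega_\B}}$, for all $y$. Combined with $\trivial_3 = \bigcap_{\B\in\obs^{ext}_{irr}(\state)}\simu{\B}$ this gives exactly \eqref{eq:T3-cor}.

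The main obstacle is the reverse inclusion: the clean argument that cone membership yields a right-stochastic $\nu$ breaks down if the effects $\B_x$ are linearly dependent, since then the coefficients $c_{xy}$ are no longer unique and need not sum to one over $x$. It is precisely to guarantee linear independence — and thereby unique, automatically normalized coefficients — that one restricts to the extreme simulation irreducible representatives $\obs^{ext}_{irr}(\state)$ rather than to all of $\obs_{irr}(\state)$; that this restriction does not change the intersection rests on the earlier fact that each post-processing equivalence class contains such an extreme representative.
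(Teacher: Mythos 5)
Your proof is correct and follows essentially the same route as the paper: the forward direction is the same observation that post-processing coefficients are nonnegative, and the reverse direction is exactly the paper's argument that the normalization $\sum_{y \in \Omega_\A}\A_y = u = \sum_{x \in \Omega_\B}\B_x$ together with the linear independence of the effects of an extreme simulation irreducible observable forces the cone coefficients to sum to one, yielding a genuine post-processing and hence $\A \in \simu{\B}$. The only cosmetic difference is that you package this as an effect-level characterization of each set $\simu{\B}$ before intersecting, whereas the paper argues directly about membership in $\trivial_3$ via Prop.~\ref{prop:T3}.
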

\begin{proof}
Let first $\A \in \trivial_3$. By Prop. \ref{prop:T3} for all $\B \in \obs^{ext}_{irr}(\state)$ there exists a post-processing $\nu^{\B}$ such that $\A = \nu^{\B} \circ \B$, i.e.,
\begin{equation}
\A_y = \sum_{x \in \Omega_\B} \nu^\B_{xy} \B_x
\end{equation}
for all $y \in \Omega_\A$. Since $\nu^\B_{xy} \geq 0$ for all $x \in \Omega_\B$, $y \in \Omega_\A$ for all $\B \in \obs^{ext}_{irr}(\state)$, we have that 
\begin{equation}
\A_y \in  \cone{ \{\B_x\}_{x \in \Omega_\B}}
\end{equation}
for all $\B \in \obs^{ext}_{irr}(\state)$ for all $y \in \Omega_\A$, which proves the necessity part of the claim.

Then let Eq. \eqref{eq:T3-cor} hold. Thus, for each $\B \in \obs^{ext}_{irr}(\state)$ there exists positive numbers $\mu^\B_{xy} \geq 0$ such that 
\begin{equation*}
\A_y = \sum_{x  \in \Omega_\B} \mu^\B_{xy} \B_x
\end{equation*}
for all $y \in \Omega_A$. From the normalization of observables $\A$ and $\B$ it follows that
\begin{equation}
\sum_{x \in \Omega_\B} \B_x = u = \sum_{y \in \Omega_\A} \A_y = \sum_{x \in \Omega_\B} \left( \sum_{y \in \Omega_\A} \mu^\B_{xy} \right) \B_x.
\end{equation}
Since each $\B \in \obs^{ext}_{irr}(\state)$, we have that each $\B$ consists of linearly independent effects $\B_x$ \cite{FilippovHeinosaariLeppajarvi-simulations}, so that $ \sum_{y \in \Omega_\A} \mu^\B_{xy}=1$ for all $x \in \Omega_\B$. Thus, we can define post-processings $\mu^\B$ for each $\B \in  \obs^{ext}_{irr}(\state)$ with elements $\mu^\B_{xy}$ so that $\A \in \simu{\B}$ for all $\B \in  \obs^{ext}_{irr}(\state)$.
\end{proof}

\subsection{Example showing that $\trivial_2 \neq \trivial_3$} \label{subsec:cutOffSimplex}

We will present an example of a two-dimensional state space $\state$, such that there is an observable $\A \in \obs(\state)$ with $\A \in \trivial_3$ but $\A \notin \trivial_2$.

\begin{figure}
\centering
\includegraphics[width=14cm]{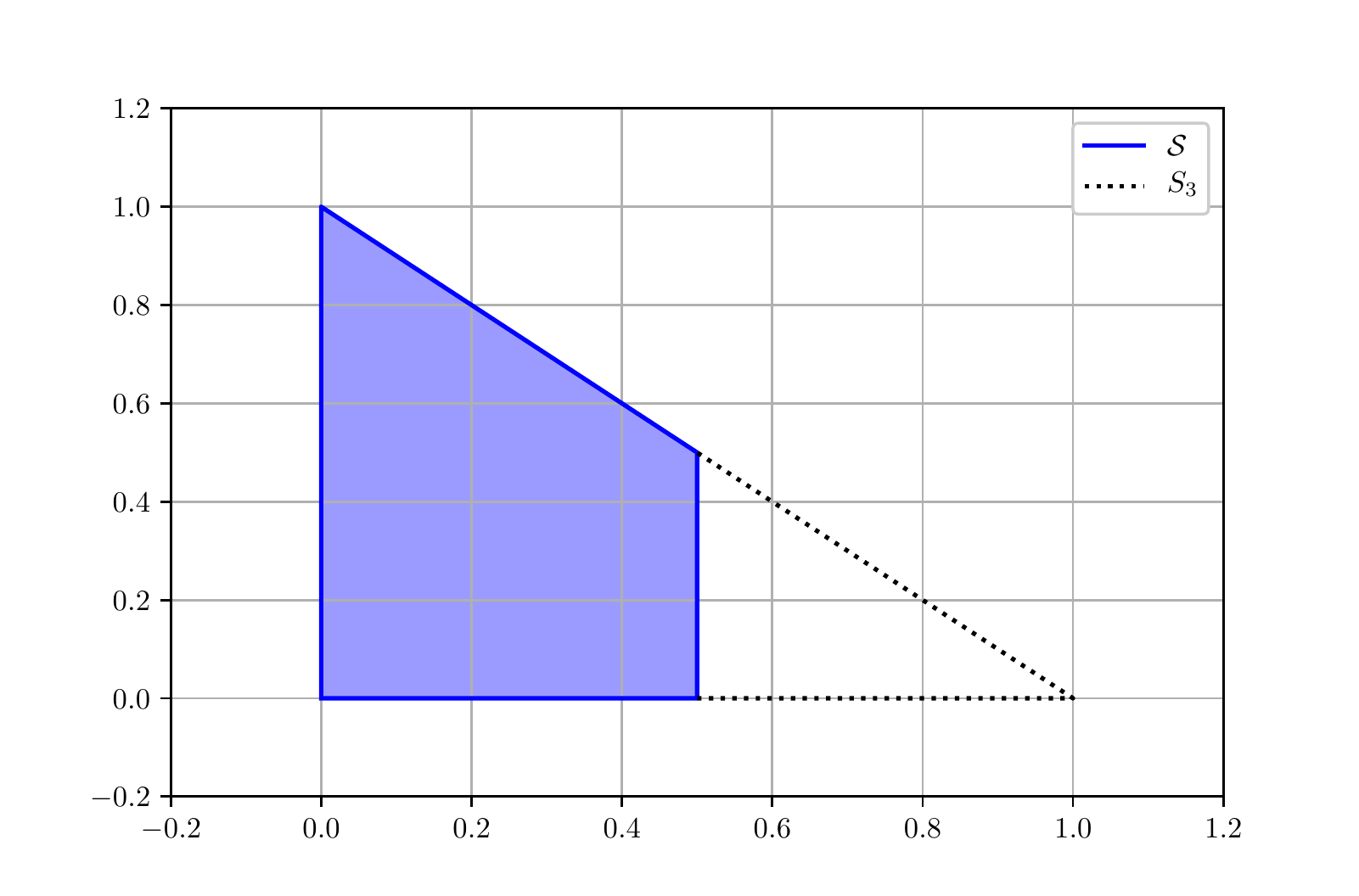}
\caption{\label{fig:exm-cutOffSimplex-state}The blue solid line is the boundary of the state space $\state$ used in the example. The black dotted line shows that $\state$ can be considered as a subset of the simplex $S_3$.}
\end{figure}

Let 
\begin{equation*}
\state = \conv{
\begin{pmatrix}
0 \\
0 \\
1
\end{pmatrix},
\begin{pmatrix}
0.5 \\
0 \\
1
\end{pmatrix},
\begin{pmatrix}
0.5 \\
0.5 \\
1
\end{pmatrix},
\begin{pmatrix}
0 \\
1 \\
1
\end{pmatrix}
},
\end{equation*}
where the $z$-coordinate is used to identify $\state$ with a base of a cone. Let
\begin{equation*}
S_3 = \conv{
\begin{pmatrix}
0 \\
0 \\
1
\end{pmatrix},
\begin{pmatrix}
0 \\
1 \\
1
\end{pmatrix},
\begin{pmatrix}
1 \\
0 \\
1
\end{pmatrix}
}
\end{equation*}
be a simpex, then we have $\state \subset S_3$ as shown in Fig. \ref{fig:exm-cutOffSimplex-state}.

Let us define functionals $x, y, u$ given as
\begin{align*}
x =
\begin{pmatrix}
1 \\
0 \\
0
\end{pmatrix},
&&
y =
\begin{pmatrix}
0 \\
1 \\
0
\end{pmatrix},
&&
u =
\begin{pmatrix}
0 \\
0 \\
1
\end{pmatrix}.
\end{align*}
The points are shown in Fig. \ref{fig:exm-cutOffSimplex-effect}.

\begin{figure}
\centering
\includegraphics[width=14cm]{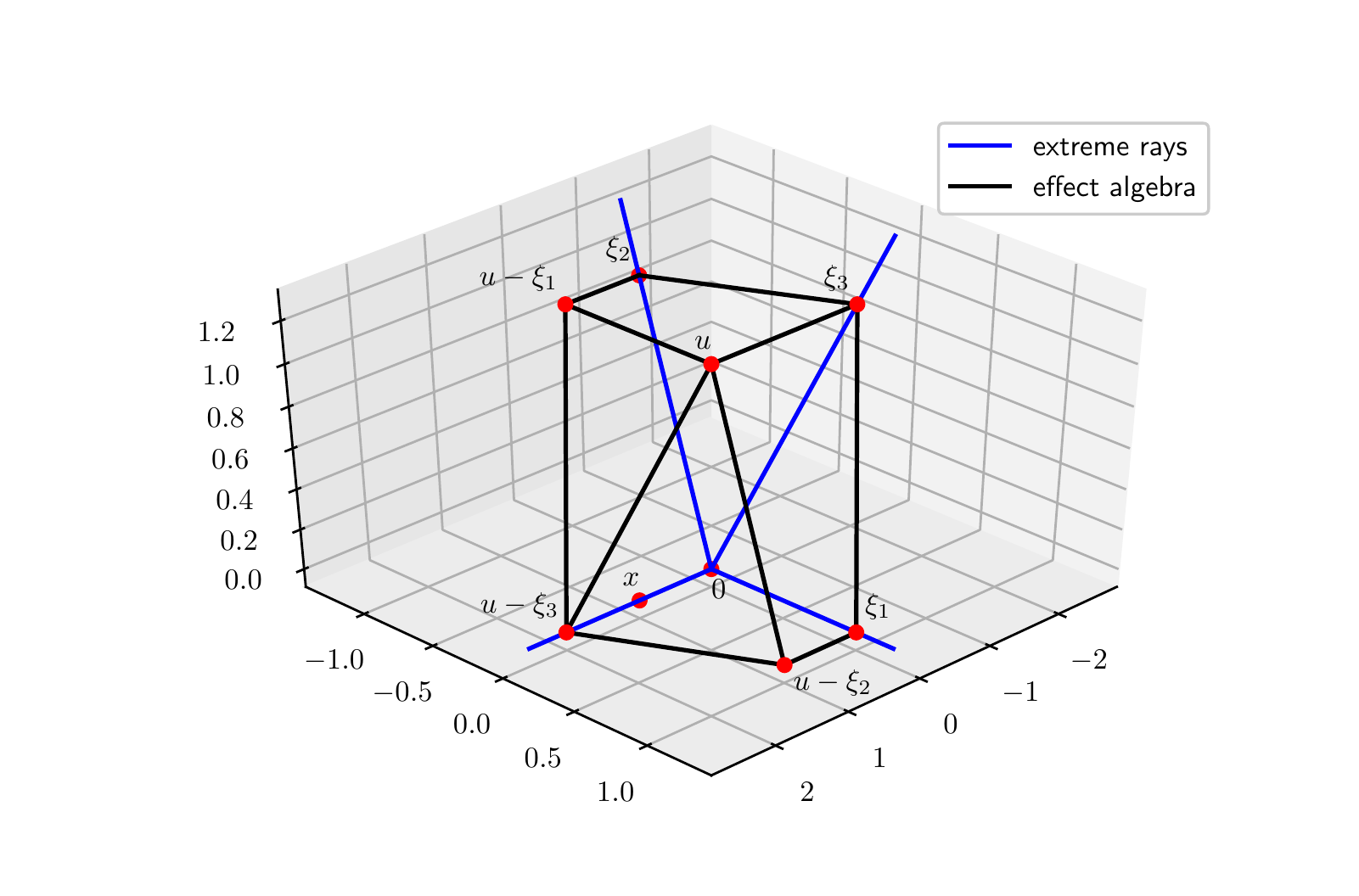}
\caption{\label{fig:exm-cutOffSimplex-effect}The effect algebra $\effect$ used in the example. The black lines represent the wireframe model of $\effect$, the blue lines are the extreme rays of the cone of positive functions and the red dots denote the effects that we are using in the example (with $\xi_1 =y$).}
\end{figure}

According to Prop. \ref{prop:sim-indecomposableEffects} from appendix \ref{appendix:convexStructure} there are 4 indecomposable effects corresponding to the 4 maximal faces of $\state$. They are $\xi_1$, $\xi_2$, $\xi_3$ and $u-\xi_3$, where
\begin{align*}
\xi_1 = y,
&&
\xi_2 = u-x-y,
&&
\xi_3 = u - 2x.
\end{align*}

It was shown in \cite[Corollary 1]{FilippovHeinosaariLeppajarvi-simulations} that simulation irreducible observables must consists of indecomposable effects. 
We are going to find all simulation irreducible observables on $\state$ as we know that $\A \in \trivial_3$ if and only if $\A$ is simulable by every simulation irreducible observable; see Prop. \ref{prop:T3}.

Assume that there would be a simulation irreducible observable with the effects $\alpha_1 \xi_1$, $\alpha_2 \xi_2$, $\alpha_3 \xi_3$ and $\alpha'_3 (u-\xi_3)$, where $\alpha_1, \alpha_2, \alpha_3, \alpha'_3 \in \mathbb{R}$, then we must have
\begin{equation*}
\alpha_1 \xi_1 + \alpha_2 \xi_2 + \alpha_3 \xi_3 +\alpha'_3 (u-\xi_3) = u
\end{equation*}
which yields
\begin{align*}
- \alpha_2 - 2 \alpha_3 + 2 \alpha'_3 &= 0, \\
\alpha_1 - \alpha_2 &= 0, \\
\alpha_2 + \alpha_3 &= 1.
\end{align*}
Since the effects of simulation irreducible observables must be linearly independent, we know that at least one of the coefficients must be equal to zero.

Assuming $\alpha_1 = 0$, we get $\alpha_2 = 0$ and $\alpha_3 = \alpha'_3 = 1$ and we obtain a dichotomic observable $\B$ with effects
\begin{align*}
\B_1 &= \xi_3, \\
\B_2 &= u - \xi_3.
\end{align*}
Assuming $\alpha_2 = 0$ yields $\alpha_1 = 0$ and $\alpha_3 = \alpha'_3 = 1$, i.e. the same observable $\B$. Assuming $\alpha_3 = 0$ gives $\alpha_1 = \alpha_2 = 1$ and $\alpha'_3 = \frac{1}{2}$ and gives us a three-outcome observable $\C$ with effects
\begin{align*}
\C_1 &= \xi_1, \\
\C_2 &= \xi_2, \\
\C_3 &= x.
\end{align*}
Finally assuming $\alpha'_3 = 0$ leads to a contradiction.

Let $\A$ be a dichotomic observable given as
\begin{align*}
\A_1 &= x, \\
\A_2 &= u - x.
\end{align*}
Note that we have $x = \frac{1}{2}( u - \xi_3 )$, which shows that $\A$ is simulable by $\B$ and we have $x = u - \xi_1 - \xi_2$, which shows that $\A$ is simulable by $\C$. This shows that $\A \in \trivial_3$. 

We are going to use Prop. \ref{prop:iwd-noNoise} to see that $\A \notin \trivial_2$. Assume that $\A \in \trivial_2$, then $\A_1 \in  \etrivial_2$, which by Prop. \ref{prop:iwd-noNoise} implies also $u - \xi_3 \in  \etrivial_2$ as $u - \xi_3 = 2x$. This would imply that $\B$ would be compatible with every other observable, but it is straightforward to see that $\B$ is incompatible with $\C$ as they are the only two simulation irreducible observables and if they would be compatible, then all of the observables on $\state$ would be compatible. This would in turn yield that $\state$ would have to be simplex \cite{Plavala-simplex} which it clearly is not.

An insight into how we obtained this example is provided by the simplex $S_3$: $\xi_1$, $\xi_2$ and $x$ are effects on the simplex $S_3$ so that the compatibility of $\A$ and $\C$ follows. Moreover, the fact that $u - \xi_3 = 2x \geq x$ gives the compatibility of $\A$ and $\B$.

\section{State spaces satisfying $\trivial_1 = \trivial_2 = \trivial_3$}

Next we will consider conditions under which the no-information-without-disturbance principle ($\trivial_2=\trivial_1$) and the no-free-information principle ($\trivial_3= \trivial_1$) hold and when they do not. 
First we note that, as was mentioned  earlier, in general we have that $\trivial_1 \subseteq \trivial_2 \subseteq \trivial_3$ so that if the no-free-information principle holds, and therefore we have that $\trivial_3 = \trivial_1$, it follows that also $\trivial_2= \trivial_1$ so that the no-information-without-disturbance principle must hold as well.
\subsection{Conditions for $\trivial_1 = \trivial_3$}

With the help of Prop. \ref{prop:T3} we can show the following.

\begin{proposition}\label{prop:T1-T3-equiv}
The following conditions are equivalent:
\begin{itemize}
\item[i)] $\trivial_1 = \trivial_3$
\item[ii)] $\displaystyle\bigcap_{\B \in \obs_{irr}(\state)} \cone{\{\B_x \}_{x \in \Omega_\B}} = \cone{u}$
\item[iii)] $\displaystyle\bigcap_{\B \in \obs_{irr}(\state)} \conv{\{\{\B_x \}_{x \in \Omega_\B},o,u\}} = \conv{\{o,u\}}$.
\end{itemize}
\end{proposition}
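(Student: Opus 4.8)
The plan is to reduce everything to the cone characterization of Corollary \ref{cor:T3-cone} and then prove the three conditions equivalent by establishing (ii)$\Rightarrow$(i), (ii)$\Rightarrow$(iii), (i)$\Rightarrow$(ii) and (iii)$\Rightarrow$(ii). I would set $C := \bigcap_{\B \in \obs_{irr}(\state)} \cone{\{\B_x\}_{x\in\Omega_\B}}$ and $D := \bigcap_{\B \in \obs_{irr}(\state)} \conv{\{\{\B_x\}_{x\in\Omega_\B},o,u\}}$. By Corollary \ref{cor:T3-cone} an observable lies in $\trivial_3$ exactly when all of its effects lie in $C$, while $\A \in \trivial_1$ exactly when all of its effects lie in $\cone{u}\cap\effect = \conv{\{o,u\}}$. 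First I would record the inclusions that hold unconditionally: $\cone{u}\subseteq C$, since $u=\sum_x \B_x$ belongs to every generating cone, and $\conv{\{o,u\}}\subseteq D$, since $o$ and $u$ belong to every truncated hull.

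The implications out of (ii) are immediate. If $C=\cone{u}$ and $\A\in\trivial_3$, then each $\A_y\in\cone{u}$, so $\A_y=c_y u$, and $\sum_y\A_y=u$ gives $\sum_y c_y=1$; hence $\A\in\trivial_1$, which together with $\trivial_1\subseteq\trivial_3$ yields (i). For (ii)$\Rightarrow$(iii) I would note that every element of $D$ is a convex combination of effects, hence itself an effect, and also lies in $\cone{\{\B_x\}_{x\in\Omega_\B}}$ for each $\B$, so $D\subseteq C\cap\effect=\cone{u}\cap\effect=\conv{\{o,u\}}$; combined with the reverse inclusion this gives (iii).

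The content of the proposition lies in the two converse implications, which I would prove contrapositively, in both cases starting from $C\supsetneq\cone{u}$ and choosing some $f\in C$ that is not a multiple of $u$; after rescaling we may assume $f$ is an effect. For (i)$\Rightarrow$(ii) the goal is a non-trivial observable whose effects all lie in $C$, since by Corollary \ref{cor:T3-cone} this sits in $\trivial_3\setminus\trivial_1$. The natural candidate is the dichotomic observable $\{\delta f,\,u-\delta f\}$, which does the job as soon as $u-\delta f\in C$ for some $\delta>0$; this is exactly the statement that $u$ lies in the relative interior of $C$. For (iii)$\Rightarrow$(ii), writing $f=\sum_x c^\B_x\B_x$ with $c^\B_x\geq 0$ in each cone, the scaled effect $\lambda f=\sum_x \lambda c^\B_x\B_x$ lands in $\conv{\{\{\B_x\}_{x\in\Omega_\B},o\}}$ as soon as $\lambda\sum_x c^\B_x\leq 1$, so a single $\lambda>0$ working for all $\B$ produces an element of $D\setminus\conv{\{o,u\}}$, contradicting (iii).

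The main obstacle in both converse implications is the same uniformity problem: securing one $\delta>0$ (equivalently, a uniform bound on the coefficients $\sum_x c^\B_x$) valid for every irreducible observable at once. That $u$ lies in the relative interior of each individual cone is trivial, because $u=\sum_x\B_x$ is a strictly positive combination of its generators; for a finite intersection this transfers to $C$ by the standard theorem on relative interiors of intersections sharing a common point, and then $C$ is polyhedral, $u$ is not an extreme ray of $C$, and decomposing $u$ over the extreme rays of $C$ yields the desired non-trivial observable directly. The genuinely delicate case is an infinite family $\obs_{irr}(\state)$, where I would argue by compactness: normalising each indecomposable effect at a fixed interior state $s_0\in\state$, at which every nonzero effect is strictly positive, confines the generators to a compact set, and since the extreme observables in $\obs^{ext}_{irr}(\state)$ carry at most $\dim(\mathcal{V}^*)$ linearly independent effects, the relevant configurations range over a compact parameter space; a lower-semicontinuity argument then produces the required uniform $\delta$ and closes the proof.
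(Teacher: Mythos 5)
Your easy implications are fine: (ii)$\Rightarrow$(i) is exactly the paper's argument, and your direct (ii)$\Rightarrow$(iii) is a clean shortcut the paper does not use (it proves (i)$\Rightarrow$(iii) instead). The genuine gap is in your (i)$\Rightarrow$(ii). Everything there rests on the claim that $u$ lies in the relative interior of $C=\bigcap_{\B \in \obs_{irr}(\state)}\cone{\{\B_x\}_{x\in\Omega_\B}}$ (or at least that $u$ does not span an extreme ray of $C$), and you only establish this when the intersection is effectively finite, via Rockafellar's intersection theorem. For an infinite family that theorem fails, and your compactness sketch does not repair it: the set of extreme simulation irreducible observables need not be compact, because an effect $\B_x=c_xe_x$ can have its weight $c_x$ tend to $0$ along a sequence of such observables; normalizing the directions $e_x$ at an interior state controls nothing about these weights, and nothing in your argument bounds $\delta_\B=\sup\{\delta \ | \ u-\delta f\in\cone{\{\B_x\}_{x\in\Omega_\B}}\}$ away from zero along such a sequence. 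The worry is not pedantic: abstractly, an infinite intersection of closed cones, each containing $u$ in its relative interior, can have $u$ on an extreme ray while being strictly larger than $\cone{u}$. For instance, in $\real^2$ the cones $K_i=\cone{\{(1,1),(-1/i,1)\}}$, $i\in\nat$, each contain $(0,1)$ in their interiors, yet $\bigcap_i K_i=\cone{\{(0,1),(1,1)\}}$, which has $(0,1)$ as an extreme ray. So this step cannot follow from abstract convex geometry alone; it needs structural input about simulation irreducible observables that your sketch does not provide.

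The paper's proof is organized precisely so that no such uniformity is needed in this direction. Its cycle is (i)$\Rightarrow$(iii)$\Rightarrow$(ii)$\Rightarrow$(i), and the construction of a non-trivial member of $\trivial_3$ happens in the contrapositive of (i)$\Rightarrow$(iii): given $e$ in the intersection of the \emph{truncated} hulls $\conv{\{\{\B_x\}_{x\in\Omega_\B},o,u\}}$ with $e\notin\conv{\{o,u\}}$, the dichotomic observable $\E$ with $\E_+=e$, $\E_-=u-e$ is simulable by every $\B\in\obs_{irr}(\state)$ separately (Prop.~8 of \cite{FilippovHeinosaariLeppajarvi-simulations}; concretely, writing $e=\sum_x\lambda_x\B_x+\mu o+\nu u$ with convex weights gives $\E_+=\sum_x(\lambda_x+\nu)\B_x$ and $\E_-=\sum_x(1-\lambda_x-\nu)\B_x$, all coefficients in $[0,1]$ with columns summing to one, hence a legitimate post-processing of $\B$), so $\E\in\trivial_3$ by Prop.~\ref{prop:T3} with no uniform $\delta$ ever required -- the normalization is built into condition (iii). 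Your (iii)$\Rightarrow$(ii), by contrast, is the same rescaling trick as the paper's own and shares the paper's implicit assumption that a single $\lambda>0$ (equivalently, finiteness of $\sup_{\B}\sum_x c^{\B}_x$) works for all $\B$; on that step you are no worse off than the paper. The fix is therefore to reroute your hard direction: prove not-(iii)$\Rightarrow$not-(i) by the dichotomic-observable observation above, and your proof closes in the same generality as the paper's.
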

\begin{proof}
{\it i) $\Rightarrow$ iii)}: It is clear that $\conv{\{o,u\}} \subseteq \bigcap_{\B \in \obs_{irr}(\state)} \conv{\{\{\B_x\}_x,o,u\}}$. Now take
\begin{equation*}
e \in \bigcap_{\B \in \obs_{irr}(\state)} \conv{\{\{\B_x\}_x,o,u\}}
\end{equation*}
and define a dichotomic observable $\E$ with effects $\E_+ = e$ and $\E_- = u-e$. Since $E_+ \in \conv{\{\{\B_x\}_x,o,u\}}$ for all $\B \in \obs_{irr}(\state)$, it follows from Prop. 8 in \cite{FilippovHeinosaariLeppajarvi-simulations} that $\E \in \simu{\B}$ for all $\B \in \obs_{irr}(\state)$. From Prop. \ref{prop:T3} it follows that $\E \in \trivial_3 = \trivial_1$ so that actually $e \in \conv{\{o,u\}}$.

{\it iii) $\Rightarrow$ ii)}: It is clear that $\cone{u} \subseteq \bigcap_{\B \in \obs_{irr}(\state)} \cone{\{\B_x\}_x}$. Now let us take $g \in \bigcap_{\B \in \obs_{irr}(\state)} \cone{\{\B_x\}_x}$ so that for all $\B \in \obs_{irr}(\state)$ there exists positive real numbers $(\alpha^{\B}_x)_x \subset \real_+$ such that $g= \sum_x \alpha^\B_x \B_x$. We denote $\alpha = \sup_{ \B \in \obs_{irr}(\state)} \sum_x \alpha^\B_x$. If $\alpha =0$, then $g=o \in \cone{u}$; otherwise we define an effect $f \in \effect$ by $f= \frac{1}{\alpha} g$. Now 
\begin{align}
f &\in \bigcap_{\B \in \obs_{irr}(\state)} \conv{\{\{\B_x\}_x,o\}} \subseteq  \bigcap_{\B \in \obs_{irr}(\state)} \conv{\{\{\B_x\}_x,o,u\}} = \conv{\{o,u\}}
\end{align}
so that $f = p u$ for some $p\in (0,1]$. Thus, $g = \alpha p u \in \cone{u}$.

{\it ii) $\Rightarrow$ i)}: As noted before, we always have $\trivial_1 \subseteq \trivial_3$ so that it suffices to show that $\trivial_3 \subseteq \trivial_1$. Thus, take $\A \in \trivial_3$. By Prop. \ref{prop:T3}, $\A \in \simu{\B}$ for all $\B \in \obs_{irr}(\state)$ so that for each $\B \in \obs_{irr}(\state)$ there exists a post-processing $\nu^{\B}: \Omega_\B \to \Omega_\A$ such that $\A_y = \sum_{x \in \Omega_\B} \nu^{\B}_{xy} \B_x$ for all $y \in \Omega_\A$. Since all the post-processing elements are positive for each $\B \in \obs_{irr}(\state)$, we have that $\A_y \in \cone{\{\B_x\}_{x \in \Omega_\B}}$ for all $y \in \Omega_\A$ and $\B \in \obs_{irr}(\state)$. Thus,
\begin{equation}
\A_y \in \bigcap_{\B \in \obs_{irr}(\state)} \cone{\{\B_x \}_{x \in \Omega_\B}} = \cone{u}
\end{equation}
for all $y \in \Omega_\A$ from which it follows that $\A \in \trivial_1$.

\end{proof}

\begin{proposition}\label{prop:d+1-outcomes}
Let $\state$ be a $d$-dimensional state space. If $| \obs^{ext}_{irr}(\state)| < \infty$ and all the extreme simulation irreducible observables have $d+1$ outcomes, then $\trivial_1 \neq \trivial_3$. 
\end{proposition}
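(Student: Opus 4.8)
The plan is to apply the equivalence established in Proposition~\ref{prop:T1-T3-equiv} and to refute its condition~ii): it suffices to exhibit an effect lying in $\bigcap_{\B \in \obs_{irr}(\state)} \cone{\{\B_x\}_{x \in \Omega_\B}}$ that is not a nonnegative multiple of $u$, since then the intersection strictly contains $\cone{u}$ and hence $\trivial_1 \neq \trivial_3$. First I would reduce this intersection, which ranges over the possibly infinite set $\obs_{irr}(\state)$, to an intersection over the finite set $\obs^{ext}_{irr}(\state)$. Each $\B \in \obs_{irr}(\state)$ is post-processing equivalent to its extreme representative $\B' \in \obs^{ext}_{irr}(\state)$, i.e.\ $\B \leftrightarrow \B'$; writing the effects of each as nonnegative combinations of the effects of the other gives the two inclusions that yield $\cone{\{\B_x\}_x} = \cone{\{\B'_x\}_x}$. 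Consequently $\bigcap_{\B \in \obs_{irr}(\state)} \cone{\{\B_x\}_x} = \bigcap_{\B \in \obs^{ext}_{irr}(\state)} \cone{\{\B_x\}_x}$, which is a finite intersection.

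Next I would use the dimension hypothesis to understand each cone in this finite family. Every $\B \in \obs^{ext}_{irr}(\state)$ has linearly independent effects, and by assumption there are exactly $d+1$ of them; since $\dim(\mathcal{V}^*) = d+1$, the effects $\{\B_x\}_{x \in \Omega_\B}$ form a basis of $\mathcal{V}^*$, so $\cone{\{\B_x\}_x}$ is a full-dimensional simplicial cone whose interior consists precisely of the strictly positive combinations of its generators. Because $\sum_{x} \B_x = u$ with every coefficient equal to $1$, the unit effect $u$ is such a strictly positive combination, and therefore $u \in \interior{\cone{\{\B_x\}_x}}$ for every $\B \in \obs^{ext}_{irr}(\state)$.

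Finally I would invoke finiteness. The set $\bigcap_{\B \in \obs^{ext}_{irr}(\state)} \interior{\cone{\{\B_x\}_x}}$ is a finite intersection of open sets each containing $u$, hence an open set containing $u$; being contained in $\bigcap_{\B \in \obs^{ext}_{irr}(\state)} \cone{\{\B_x\}_x}$, it shows that $u$ is an interior point of this intersection. Thus the intersection has nonempty interior in $\mathcal{V}^*$. Since $\state$ is a genuine (at least one-dimensional) state space we have $\dim(\mathcal{V}^*) = d+1 \geq 2$, so the one-dimensional ray $\cone{u}$ has empty interior and cannot equal a set with nonempty interior. Hence the intersection strictly contains $\cone{u}$, condition~ii) of Proposition~\ref{prop:T1-T3-equiv} fails, and $\trivial_1 \neq \trivial_3$.

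The step I expect to be the crux is the appeal to finiteness: without $|\obs^{ext}_{irr}(\state)| < \infty$, an intersection of infinitely many simplicial cones sharing $u$ in their interiors could degenerate to the ray $\cone{u}$, so the hypothesis is used precisely to guarantee that the finite intersection of interiors is itself open and nonempty. The supporting points—that $u$ is an interior (not merely boundary) point of each simplicial cone, and that the cone of a simulation irreducible observable equals that of its extreme representative—are the places where I would be most careful, but both follow directly from the $d+1$ linearly independent effects and from post-processing equivalence, respectively.
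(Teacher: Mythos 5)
Your proposal is correct and follows essentially the same route as the paper's proof: both invoke Prop.~\ref{prop:T1-T3-equiv}, observe that the $d+1$ linearly independent effects of each extreme simulation irreducible observable make $u$ an interior point of each (full-dimensional) cone, and use finiteness of $\obs^{ext}_{irr}(\state)$ to conclude the intersection has nonempty interior and so cannot be the ray $\cone{u}$. The only differences are cosmetic: you argue directly by refuting condition~ii) and spell out the reduction from $\obs_{irr}(\state)$ to $\obs^{ext}_{irr}(\state)$ via equality of cones under post-processing equivalence, whereas the paper phrases the same argument as a contradiction.
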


\begin{proof}
Since $\state$ is $d$-dimensional (i.e. $\dim(\aff{\state}) =d$), the effect space is contained in a $d+1$-dimensional vector space. Suppose that, on the contrary $\trivial_1 = \trivial_3$. From Prop. \ref{prop:T1-T3-equiv} it follows then that \begin{equation*}
\bigcap_{\B \in \obs^{ext}_{irr}(\state)} \cone{\{\B_x\}_x} = \bigcap_{\B \in \obs_{irr}(\state)} \cone{\{\B_x \}_{x \in \Omega_\B}} = \cone{u}.
\end{equation*}

Since $\dim(\mathcal{V}^*)=d+1$ and each extreme simulation irreducible observable consists of $d+1$ linearly independent effects, it follows that $\cone{\{\B_x\}_x}$ has a non-empty interior, denoted by $\interior{\cone{\{\B_x\}_x}}$, in $\mathcal{V}^*$ for all $\B \in \obs_{irr}(\state)$. In particular, $u \in \interior{\cone{\{\B_x\}_x}}$ for all $\B \in \obs^{ext}_{irr}(\state)$, so that 
\begin{align}
\emptyset &= \interior{\cone{u}} = \interior{\bigcap_{\B \in \obs^{ext}_{irr}(\state)} \cone{\{\B_x\}_x}} = \bigcap_{\B \in \obs^{ext}_{irr}(\state)} \interior{\cone{\{\B_x\}_x}} \neq \emptyset
\end{align}
which is a contradiction.
\end{proof}

\begin{proposition}\label{prop:T3-binary-sim-irr}
If there exist at least two post-processing inequivalent dichotomic simulation irreducible observables on $\state$, then $\trivial_1 = \trivial_2 = \trivial_3$.
\end{proposition}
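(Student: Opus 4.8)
The plan is to reduce everything to the condition characterized in Proposition \ref{prop:T1-T3-equiv}. Since the chain $\trivial_1 \subseteq \trivial_2 \subseteq \trivial_3$ always holds, it suffices to prove $\trivial_1 = \trivial_3$, and by the equivalence of i) and ii) in Proposition \ref{prop:T1-T3-equiv} this amounts to showing
\begin{equation*}
\bigcap_{\B \in \obs_{irr}(\state)} \cone{\{\B_x\}_{x \in \Omega_\B}} = \cone{u}.
\end{equation*}
The inclusion $\supseteq$ is automatic: for any observable $\B$ we have $u = \sum_x \B_x$, so $u$, and hence the whole ray $\cone{u}$, lies in every cone $\cone{\{\B_x\}_{x \in \Omega_\B}}$. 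All the work is therefore in the reverse inclusion, and here I would exploit the two given dichotomic observables to pinch the intersection down to a single ray.

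Let $\B$ and $\B'$ be the two post-processing inequivalent dichotomic simulation irreducible observables, chosen as their extreme representatives so that $\{\B_1,\B_2\}$ and $\{\B'_1,\B'_2\}$ consist of linearly independent indecomposable effects. Writing $\B_2 = u - \B_1$, the cone $C := \cone{\{\B_1,\B_2\}}$ lies in the two-dimensional plane $\Pi := \operatorname{span}\{\B_1,u\}$, and similarly $C' := \cone{\{\B'_1,\B'_2\}}$ lies in $\Pi' := \operatorname{span}\{\B'_1,u\}$. The crucial geometric step, which I expect to be the main obstacle, is to show that \emph{post-processing inequivalence forces} $\Pi \neq \Pi'$. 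I would argue by contraposition: if $\Pi = \Pi'$, then the two-dimensional cone $\mathcal{V}^*_+ \cap \Pi$ has exactly two extreme rays; since $\B_1,\B_2$ are indecomposable (extreme rays of $\mathcal{V}^*_+$) lying in $\Pi$, they must be precisely these two extreme rays, and the same holds for $\B'_1,\B'_2$. Hence the rays through $\B'_1,\B'_2$ coincide with those through $\B_1,\B_2$, and matching the normalizations $\B_1+\B_2 = u = \B'_1+\B'_2$ together with linear independence forces $\B=\B'$ up to relabeling of outcomes, contradicting post-processing inequivalence.

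With $\Pi \neq \Pi'$ in hand the conclusion is quick. Both planes contain the line $\real u$, and two distinct two-dimensional subspaces meet in a one-dimensional subspace, so $\Pi \cap \Pi' = \real u$. Consequently $C \cap C' \subseteq \real u$; moreover $-u \notin C$, since $-u = -\B_1 - \B_2$ is not a nonnegative combination of the linearly independent effects $\B_1,\B_2$, so in fact $C \cap C' = \cone{u}$ (the ray $\cone{u}$ being contained in both $C$ and $C'$). Because $\B,\B' \in \obs_{irr}(\state)$, the full intersection satisfies
\begin{equation*}
\bigcap_{\C \in \obs_{irr}(\state)} \cone{\{\C_x\}_{x \in \Omega_\C}} \subseteq C \cap C' = \cone{u},
\end{equation*}
which combined with the automatic inclusion gives condition ii) of Proposition \ref{prop:T1-T3-equiv}. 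Therefore $\trivial_1 = \trivial_3$, and with $\trivial_1 \subseteq \trivial_2 \subseteq \trivial_3$ we conclude $\trivial_1 = \trivial_2 = \trivial_3$.
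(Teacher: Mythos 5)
Your proof is correct, and while it shares its geometric core with the paper's proof, it routes through different machinery and supplies a justification the paper omits. The paper argues directly on the level of $\trivial_3$: it takes $\A \in \trivial_3$, applies Prop.~\ref{prop:T3} together with a cited external result (Prop.~11 of \cite{FilippovHeinosaariLeppajarvi-simulations}) to place every effect $\A_x$ in $\conv{\{\E_+,\E_-,o,u\}} \cap \conv{\{\F_+,\F_-,o,u\}}$, and then \emph{asserts} that post-processing inequivalence of $\E$ and $\F$ makes $\{u,\E_+,\F_+\}$ linearly independent, from which the intersection collapses to $\conv{\{o,u\}}$. You instead verify condition ii) of Prop.~\ref{prop:T1-T3-equiv}, the conic version of the same collapse, and crucially you prove the step the paper leaves implicit: that inequivalence forces $\operatorname{span}\{\B_1,u\} \neq \operatorname{span}\{\B'_1,u\}$. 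Your argument for this is sound: indecomposable effects generate extreme rays of $\mathcal{V}^*_+$, hence of the two-dimensional pointed cone $\mathcal{V}^*_+ \cap \Pi$, which has exactly two extreme rays; if the planes coincided, the rays of $\B'_1,\B'_2$ would coincide with those of $\B_1,\B_2$, and normalization $\B_1+\B_2=u=\B'_1+\B'_2$ plus linear independence forces $\B'=\B$ up to relabeling, contradicting inequivalence. Two small remarks: passing to extreme representatives is unnecessary, since any dichotomic simulation irreducible observable already has indecomposable effects (by the characterization of simulation irreducibility quoted in the paper) and linearly independent effects (non-triviality of $\B$ gives $\B_1 \not\propto u$); and your ruling out of $-u \in C$ does settle all negative multiples of $u$ at once, since cones are closed under positive scaling. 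The trade-off: the paper's proof is shorter but leans on an external citation and an unproven linear-independence claim; yours is longer but self-contained, and the distinct-planes lemma you prove is exactly what would be needed to complete the paper's own argument.
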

\begin{proof}
By the assumption there exist two dichotomic observables $\E,\F \in \obs_{irr}(\state)$ such that $\E \nleftrightarrow \F$. Take $\A \in \trivial_3$ so that by Prop. \ref{prop:T3} we have that $\A \in \simu{\E}$ and $\A \in \simu{\F}$. From Prop. 11 in \cite{FilippovHeinosaariLeppajarvi-simulations} it follows that $\A_x \in \conv{\{\E_+, \E_-, o,u\}}$ and $\A_x \in \conv{\{\F_+, \F_-, o,u\}}$ for all $x \in \Omega_\A$. Since $\E$ and $\F$ are inequivalent, it follows that the set $\{u,\E_+,\F_+\}$ is linearly independent, so that $\A_x \in \conv{\{\E_+, \E_-, o,u\}} \cap \conv{\{\F_+, \F_-, o,u\}} = \conv{\{o,u\}}$ for all $x\in \Omega_\A$. Thus, $\A \in \trivial_1$ so that $\trivial_1 = \trivial_3$.
\end{proof}

With the previous proposition we can show that the no-free-information principle holds in any point-symmetric state space, i.e., in a state space $\state$ where there exists a state $s_0$ such that for all $s \in \state$ we have that
\begin{equation}
s' := 2s_0 -s \in \state.
\end{equation}
This means that for each state $s$ there exists another state $s'$ such that $s_0$ is an equal mixture of $s$ and $s'$, i.e., $s_0 =\frac{1}{2}(s + s')$. Point-symmetric state spaces include the classical bit, the qubit and polygon state spaces with even number of vertices.

One can show that the effect space structure is also symmetric for symmetric state spaces. Firstly, all the non-trivial extreme effects are seen to lie on a single affine hyperplane. Namely, if $e \in \effect$ is an extreme effect, $e \neq o,u$, there exists a (pure) state $s \in \state$ such that $e(s) = 0$ \cite{KimuraNuidaImai-indecomposability}. For $s$, there exists another state $s'$ such that $s_0 = \frac{1}{2}(s+s')$ so that $e(s_0) = \frac{1}{2}e(s')$. Similarly there exists a (pure) state $t \in \state$ such that $e(t)=1$ \cite{KimuraNuidaImai-indecomposability}. For $t$, we can find $t'$ such that $e(s_0) = \frac{1}{2}(e(t) +e(t')) = \frac{1}{2}(1+e(t'))$. Combining these two expressions for $e(s_0)$ we find that $e(s') = 1+ e(t')$ from which it follows that $e(t') = 0$ and $e(s') =1$ so that $e(s_0) = \frac{1}{2}$ for all extreme effects $e$. Thus, all the non-trivial extreme effects lie on an affine hyperplane determined by the state $s_0$.

Secondly, we see that all the non-trivial extreme effects must actually be indecomposable. If $e \in \effect$ is an extreme effect, $e\neq o,u$, then we can find some decomposition into indecomposable extreme effects $\{e_i\}_{i=1}^r$ for some $r \in \nat$ so that $e = \sum_{i=1}^r \alpha_i e_i$ for some numbers $\{\alpha_i \}_{i=1}^r \subset [0,1]$ \cite{KimuraNuidaImai-indecomposability}. Since all extreme effects give probability $\frac{1}{2}$ on the state $s_0$, we have that $1 = 2 e(s_0) = \sum_{i=1}^r \alpha_i$. Since $e$ is extreme, it follows that $r=1$ so that $e$ is indecomposable. 

Thirdly, the convex hull of all the extreme indecomposable effects (that lie on an affine hyperplane) is also point-symmetric: if $e\in \effect$ is a non-trivial extreme effect, then $e':=u-e$ is also a non-trivial extreme effect so that $e_0 :=\frac{1}{2} u = \frac{1}{2}(e +e')$ acts as the inversion point of the set.

\begin{corollary}\label{cor:point-symmetric}
In every non-classical point-symmetric state space $\state$ we have $\trivial_1 = \trivial_2 = \trivial_3$.

\end{corollary}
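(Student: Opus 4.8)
The plan is to reduce the whole statement to Proposition \ref{prop:T3-binary-sim-irr}: it already guarantees $\trivial_1 = \trivial_2 = \trivial_3$ as soon as one produces \emph{two post-processing inequivalent dichotomic simulation irreducible observables} on $\state$. So the entire task becomes extracting these two observables from the effect-space symmetry established in the paragraphs above, where we have shown that every non-trivial extreme effect is indecomposable, lies on the affine hyperplane $\{e : e(s_0) = \tfrac12\}$, and that $e \mapsto u-e$ maps non-trivial extreme effects to non-trivial extreme effects.

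First I would observe that for any non-trivial extreme effect $e$, the dichotomic observable $\E = (e, u-e)$ is simulation irreducible. Both of its effects are non-trivial extreme effects (recall $u-e$ is extreme whenever $e$ is, by the point-symmetry of $\effect$ about $\tfrac12 u$), hence both are indecomposable by the fact proven above. Moreover $\E$ is an extreme observable: if $\E = \lambda \E' + (1-\lambda)\E''$ with $\lambda \in (0,1)$, then $e = \lambda \E'_+ + (1-\lambda)\E''_+$ with $\E'_+,\E''_+ \in \effect$, and since $e$ is an extreme point of $\effect$ this forces $\E'_+ = \E''_+ = e$, whence $\E' = \E'' = \E$. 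Having indecomposable effects and being extreme, $\E$ is simulation irreducible by the characterization of simulation irreducible observables recalled earlier.

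Next I would use non-classicality to guarantee the existence of \emph{two} inequivalent such observables, i.e. of two distinct pairs $\{e,u-e\}$ and $\{f,u-f\}$ of non-trivial extreme effects. Suppose, for contradiction, that there were at most one such pair. Since $e = u-e$ would force $e = \tfrac12 u$, which is not extreme, the non-trivial extreme effects would then be either none or exactly $\{e,u-e\}$, so that the extreme points of $\effect$ would be contained in $\{o,u,e,u-e\}$ and $\effect = \conv{\{o,u,e,u-e\}}$. But $o$ is the origin and $u = e + (u-e)$, so these four points are affinely dependent and span a two-dimensional affine subspace; as $\effect$ is full-dimensional in $\mathcal{V}^*$, this yields $\dim(\mathcal{V}^*) \le 2$, i.e. $\state$ is at most one-dimensional and hence a simplex (the classical bit), contradicting non-classicality. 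Thus there exist non-trivial extreme effects $e$ and $f$ with $f \notin \{e,u-e\}$.

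Finally I would verify that $\E = (e,u-e)$ and $\F = (f,u-f)$ are post-processing inequivalent, which completes the reduction. If $\F \in \simu{\E}$, then $f = a\,e + b\,(u-e)$ for some post-processing weights $a,b \in [0,1]$; evaluating at the symmetry centre $s_0$ and using $e(s_0) = (u-e)(s_0) = f(s_0) = \tfrac12$ gives $a + b = 1$, so $f$ is a genuine convex combination of $e$ and $u-e$. Since $f$ is extreme this forces $f \in \{e,u-e\}$, contradicting the choice of $f$; hence $\F \notin \simu{\E}$ and $\simu{\E} \neq \simu{\F}$, so the two observables are inequivalent. Proposition \ref{prop:T3-binary-sim-irr} then gives $\trivial_1 = \trivial_2 = \trivial_3$. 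The main obstacle is precisely the existence step of the third paragraph: ruling out the degenerate possibility of a single pair of non-trivial extreme effects, which is exactly the point where non-classicality must be invoked through the dimension count on $\mathcal{V}^*$; the rest is bookkeeping with the already-proven symmetry of the effect space.
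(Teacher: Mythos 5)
Your proposal is correct and takes essentially the same route as the paper: it reduces the claim to Proposition~\ref{prop:T3-binary-sim-irr} by producing two non-trivial extreme effects $e$ and $f$ with $f \notin \{e, u-e\}$, forming the dichotomic observables $(e,u-e)$ and $(f,u-f)$, and using the established point-symmetry facts to conclude these are post-processing inequivalent simulation irreducible observables. The only difference is that you fill in with self-contained arguments (the extremality of $\E$ via extremality of $e$, the dimension count on $\mathcal{V}^*$ ruling out a single pair, and the evaluation at $s_0$ for inequivalence) the steps that the paper settles by citation to \cite{FilippovHeinosaariLeppajarvi-simulations} or by brief assertion.
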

\begin{proof}
Since $\state$ is non-classical point-symmetric state space, there exists two non-trivial extreme effects $e$ and $f$ such that $e,f \neq o,u$, $e\neq f, u-f$. Namely, if this was not the case, there would be only two non-trivial extreme effects $g$ and $g'$ such that $g'=u-g$ so that $\mathcal{E}^{ext}(\state) = \{o,u,g,u-g\}$ which would mean that the state space would be a classical bit consisting of only two extreme points. We define two dichotomic observables $\E$ and $\F$ by setting $\E_+= e$, $\E_- =u-e$, $\F_+ =f $ and $\F_-= u-f$. Since the state space is point-symmetric, the extreme effects $e, f, u-e$ and $u-f$ are indecomposable so that together with the fact that $\{e, u-e\}$ and $\{f,u-f\}$ are linearly independent sets it follows \cite{FilippovHeinosaariLeppajarvi-simulations} that $\E$ and $\F$ are inequivalent dichotomic simulation irreducible observables. The claim follows from Prop. \ref{prop:T3-binary-sim-irr}.
\end{proof}

\subsection{Alternative characterization of $\trivial_1$}

Finally, we show that a seemingly different formulation of ``free-information" does not lead to a new concept. Consider $\T \in \trivial_3$ and take an observable $\A \in \obs(\state)$ such that $\A \notin \trivial_1$. Since $\T$ is compatible with $\A$ there exists a joint observable $\J_{\A, \T}$ from which both $\A$ and $\T$ can be post-processed from. Since $\A$  is non-trivial and $\T$ is compatible with every other observable, we can ask whether  measuring the joint observable $\J_{\A,\T}$ actually gives us any more information than just measuring $\A$. One way to consider this is to ask whether $\A$ is actually post-processing equivalent to $\J_{\A,\T}$ so that both can be obtained from each other by classically manipulating their outcomes. If this is the case, there is no ``free information" to be gained from measuring the joint observable. Thus, we consider one more set of observables:
\begin{equation*}
\trivial_4 = \{\T \in \trivial_3 \, | \, \forall \A \in \obs(\state)\setminus\trivial_1: \ \exists \ \J_{\A,\T} \in \obs(\state): \ \J_{\A,\T} \leftrightarrow \A   \}.
\end{equation*}

We can show the following.

\begin{proposition}
$\trivial_1 = \trivial_4$.
\end{proposition}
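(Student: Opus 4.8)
The plan is to establish the two inclusions $\trivial_1\subseteq\trivial_4$ and $\trivial_4\subseteq\trivial_1$ separately; the second carries the real content, and it will reduce directly to the already-proved Prop.~\ref{prop:T1}.

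For $\trivial_1\subseteq\trivial_4$, I would take a trivial observable $\T=pu$ with $p\in\mathcal P(\Omega_\T)$. Since $\trivial_1\subseteq\trivial_3$ we already know $\T\in\trivial_3$, so it only remains to exhibit, for each non-trivial $\A$, a joint observable that is post-processing equivalent to $\A$. I would reuse the product construction from the discussion of $\trivial_1$, namely $\J_{\A,\T}(x,y)=p(x)\A_y$, whose margins are indeed $\A$ and $\T$. That $\J_{\A,\T}\to\A$ is immediate by summing over $x$; for the converse $\A\to\J_{\A,\T}$ I would use the right-stochastic matrix $\nu_{z,(x,y)}=p(x)\delta_{zy}$, which is normalized because $\sum_{x,y}p(x)\delta_{zy}=\sum_x p(x)=1$ and which reproduces $\sum_z p(x)\delta_{zy}\A_z=p(x)\A_y=\J_{\A,\T}(x,y)$. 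Hence $\J_{\A,\T}\leftrightarrow\A$ and $\T\in\trivial_4$.

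For the harder inclusion $\trivial_4\subseteq\trivial_1$, I would take $\T\in\trivial_4$ and fix an arbitrary $\A\in\obs(\state)\setminus\trivial_1$. By definition of $\trivial_4$ there is a joint observable $\J_{\A,\T}$ with $\J_{\A,\T}\leftrightarrow\A$. As a joint observable of $\A$ and $\T$ it satisfies $\J_{\A,\T}\to\T$, while the post-processing equivalence supplies $\A\to\J_{\A,\T}$. Composing these two post-processings (transitivity of $\to$) yields $\A\to\T$, that is, $\T\in\simu{\A}$. Since $\A$ was an arbitrary non-trivial observable, this shows $\T\in\bigcap_{\A\in\obs(\state)\setminus\trivial_1}\simu{\A}$.

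The decisive step is then to invoke Prop.~\ref{prop:T1}, which identifies precisely this intersection with $\trivial_1$; this immediately gives $\T\in\trivial_1$ and closes the inclusion. I expect the only conceptual point to be recognizing that the defining property of $\trivial_4$ forces $\A\to\T$ for \emph{every} non-trivial $\A$ — once that observation is in place, the non-obvious strengthening in Prop.~\ref{prop:T1} (that the intersection may be restricted to non-trivial observables) does all the remaining work, so I do not anticipate a genuine technical obstacle.
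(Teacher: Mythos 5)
Your proof is correct and follows essentially the same route as the paper: for the substantive inclusion $\trivial_4\subseteq\trivial_1$ you use exactly the paper's argument (the joint observable post-processes to $\T$, the equivalence $\A\leftrightarrow\J_{\A,\T}$ gives $\A\to\J_{\A,\T}$, transitivity yields $\T\in\simu{\A}$ for every non-trivial $\A$, and Prop.~\ref{prop:T1} finishes). The only difference is that you spell out the easy inclusion $\trivial_1\subseteq\trivial_4$ via the product joint observable $\J_{\A,\T}(x,y)=p(x)\A_y$ and an explicit stochastic matrix, which the paper simply asserts; your verification of it is correct.
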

\begin{proof}
Since $\trivial_1 \subseteq \trivial_4$ it suffices to show that $\trivial_4 \subseteq \trivial_1$. Thus, take $\T \in \trivial_4$ so that for all $\A \in \obs(\state)\setminus\trivial_1$ we have that $\A$ is post-processing equivalent with at least one of their joint observables $\J_{\A,\T}$. Thus, $\{\A, \T \} \subseteq \simu{\J_{\A,\T} }$ and since $\A \leftrightarrow \J_{\A,\T}$ it follows that $\T \in \simu{\A}$ for all $\A \in \obs(\state)\setminus\trivial_1$. From Prop. \ref{prop:T1} it follows that $\T \in \trivial_1$.
\end{proof}

\section{Polygon state spaces}

\subsection{Characterization of polygons}
\begin{figure}
\begin{center}
\includegraphics[scale=0.25]{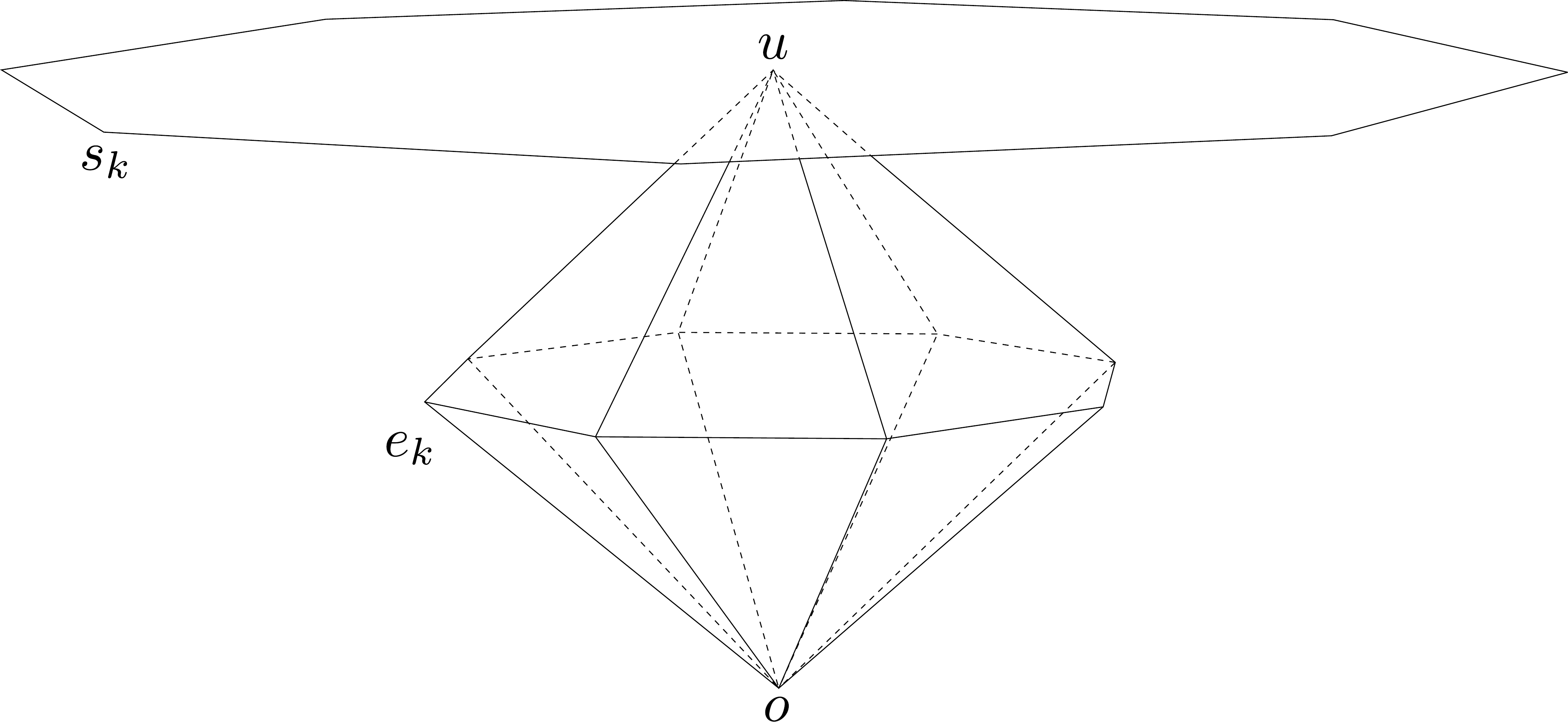} 
 \\ \ \\
 \includegraphics[scale=0.25]{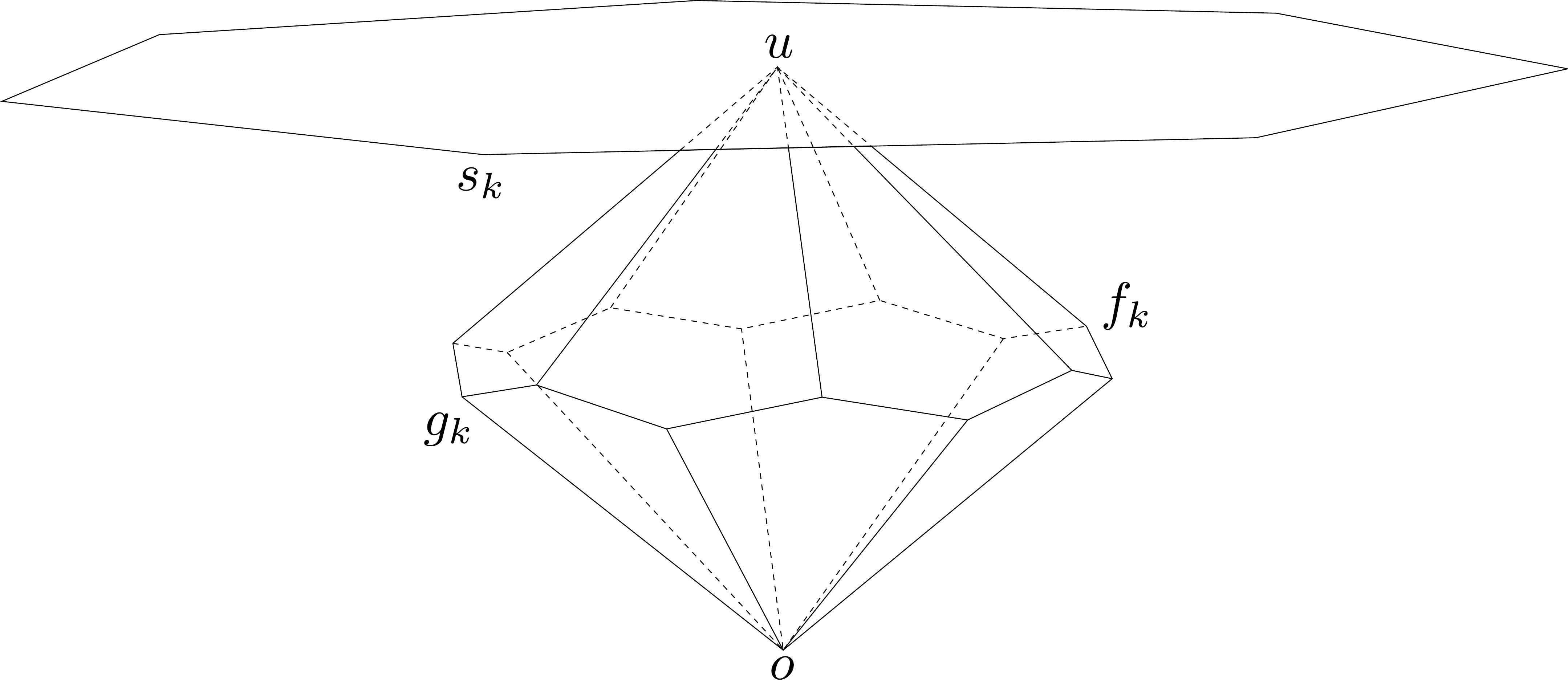}
 \caption{\label{fig:polygon} The even and odd polygon state spaces and their effects spaces.}
\end{center}
\end{figure}

A regular polygon with $n$ vertices in $\real^2$, or $n$-gon, is a convex hull of $n$ points $\{\vec{x}_k\}_{k=1}^n$ such that $\no{\vec{x}_k} = \no{\vec{x}_j}$ and $\vec{x}_k \cdot \vec{x}_{k+1} = \no{\vec{x}_k}^2 \cos\left(\frac{\pi}{n}\right)$ for all $j,k=1, \ldots,n$. As a state space $\state_n$, we consider the polygon to be embedded in $\real^3$ on the $z=1$ -- plane. Thus, we follow the notation of \cite{JanottaGogolinBarrettBrunner-nonlocal} and define the extreme points of $\state_n$ as
\begin{equation*}
s_k=
\begin{pmatrix}
r_n \cos\left(\dfrac{2 k \pi }{n}\right) \\
r_n \sin\left(\dfrac{2 k \pi }{n}\right) \\
1
\end{pmatrix}, \quad k = 1,\ldots,n,
\end{equation*}
where we have defined $r_n = \sec\left( \frac{\pi}{n}\right)$.

As the polygons are two-dimensional, the effects can also be represented as elements in $\real^3$. Hence, we can express each $e \in \eff(\state_n)$ as a vector $e= (e_x, e_y, e_z)^T \in \real^3$. With this identification we have that $e(s) = e \cdot s$ for all $e \in \eff(\state_n)$ and $s \in \state_n$ where $\cdot$ is the Euclidean dot product. Clearly, we now have the zero effect $o = (0, 0, 0)^T$ and the unit effect $u = (0, 0, 1)^T$.

Depending on the parity of $n$, the state space may or may not have reflective point symmetry around the middle point $s_0=(0,0,1)^T$. As a result of this, the effect space $\eff(\state_n)$ has a different structure for odd and even $n$. For even $n$, we find (details for example in \cite{FilippovHeinosaariLeppajarvi-simulations}) that the effect space $\eff(\state_n)$ has $n$ non-trivial extreme points
\begin{equation}
e_k= \dfrac{1}{2}
\begin{pmatrix}
\cos\left(\dfrac{(2k-1) \pi }{n}\right) \\
\sin\left(\dfrac{(2k-1) \pi }{n}\right) \\
1
\end{pmatrix}, \quad k = 1,\ldots,n,
\end{equation}
so that $\eff(\state_n) = \conv{\{o,u,e_1, \ldots,e_n\}}$. All the non-trivial extreme effects lie on a single (hyper)plane determined by those points $e$ such that $e(s_0)=1/2$.

In the case of odd $n$, the effect space has $2n$ non-trivial extreme effects 
\begin{equation}
g_k
= \dfrac{1}{1+r_n}
\begin{pmatrix}
\cos\left(\dfrac{2k \pi }{n}\right) \\
\sin\left(\dfrac{2k \pi }{n}\right) \\
1
\end{pmatrix}, \quad \quad f_k = u-g_k
\end{equation}
for $k = 1,\ldots,n$. Now $\eff(\state_n) = \conv{\{o,u, g_1, \ldots, g_n, f_1 ,\ldots, f_n\}}$ and the non-trivial effects are scattered on two different planes determined by all those points $g$ and $f$ such that $g(s_0) = \sigma_n := \frac{1}{1+r_n}$ and $f(s_0) = 1-\sigma_n = \frac{r_n}{1+r_n}$. The even and odd polygon state spaces and their respective effect spaces are depicted in Figure \ref{fig:polygon}.

In order to give a simple characterization of polygons, let us define functions $\eta^n_e: \real^2 \to \real$ and $\eta^n_o: \real^2 \to \real$ by
\begin{align*}
\eta^n_e(\vec{x}) &= \max_{k \in \{1, \ldots , n\}} r_n \left[ \cos\left(\dfrac{2\pi k}{n}\right)x +  \sin\left(\dfrac{2\pi k}{n}\right)y\right], \\
\eta^n_o(\vec{x}) &= \eta^n_e\left( R_{\frac{\pi}{n}} \vec{x}\right) = \max_{k \in \{1, \ldots , n\}} r_n \Bigg[ \cos\left(\dfrac{(2k-1)\pi }{n}\right)x +  \sin\left(\dfrac{(2k-1)\pi}{n}\right)y \Bigg],
\end{align*}
for all $\vec{x}=(x,y)^T \in \real^2$, where
\begin{equation*}
R_{\frac{\pi}{n}} = \begin{pmatrix}
\cos\left(\dfrac{\pi}{n}\right) & -\sin\left(\dfrac{\pi}{n}\right) \\
\sin\left(\dfrac{\pi}{n}\right) & \cos\left(\dfrac{\pi}{n}\right)
\end{pmatrix}
\end{equation*}
is the rotation matrix with a rotation angle $\pi/n$ around the origin in $\real^2$. We use the notation $\eta^n_{e/o}$ when we consider some properties that hold for both $\eta^n_e$ and $\eta^n_o$.

We see that both $\eta^n_e(\vec{x})$ and $\eta^n_o(\vec{x})$ can be expressed as a maximization over an inner product of $\vec{x}$ and a collection of unit vectors $\vec{b}^{(n,k)}_{e/o}$, i.e.
\begin{equation}
\eta^n_{e/o}(\vec{x}) = r_n \max_{k \in \{1,\ldots,n\}} \vec{x} \cdot \vec{b}^{(n,k)}_{e/o},
\end{equation}
where we have defined 
\begin{align}
\vec{b}^{(n,k)}_e &= \left(\cos\left(\frac{2 \pi k}{n} \right), \sin\left(\frac{2 \pi k}{n} \right)\right)^T, \\
\vec{b}^{(n,k)}_o &= \left(\cos\left(\frac{(2k-1) \pi}{n} \right), \sin\left(\frac{(2k-1) \pi}{n} \right)\right)^T.
\end{align}
Thus, both $\eta^n_e$ and $\eta^n_o$ are polyhedral convex functions \cite{Rockafellar-convex}.

It is straightforward to see that $\eta^n_{e/o}$ satisfy the following properties for all $\vec{x},\vec{y}\in \real^2$:
\begin{itemize}
\item[i)] $\eta^n_{e/o}(\vec{x}) \geq 0$,
\item[ii)] $\eta^n_{e/o}(\vec{x}) = 0 \quad \Leftrightarrow \quad \vec{x} = \vec{0}$,
\item[iii)] $\eta^n_{e/o}(\vec{x}+ \vec{y}) \leq \eta^n_{e/o}(\vec{x}) + \eta^n_{e/o}(\vec{y})$.
\end{itemize}
Additionally we see that also the following is satisfied for all $x \in \real^2$:
\begin{itemize}
\item[iv)] $\eta^n_{e/o}(\alpha \vec{x}) = \alpha \eta^n_e(\vec{x})$ for all $\alpha\geq 0$.
\end{itemize}
Thus, both $\eta^n_e$ and $\eta^n_o$ almost satisfy the requirements of a norm; the only missing property is the requirement for a reflective point symmetry, i.e.
$\eta^n_{e/o}(-\vec{x}) = \eta^n_{e/o}(\vec{x})$ for all $\vec{x} \in \real^2$.
For even $n$, however, it is easy to confirm that both $\eta^n_e$ and $\eta^n_o$ are point symmetric so that they are norms on $\real^2$. Similarly for odd $n$ it is easy to see that the point symmetry does not hold.

Even though for general $n$ the functions $\eta^n_{e/o}$ do not define a norm on $\real^2$, we can still use them to define different sized polygons. As continuous polyhedral convex functions, $\eta^n_e$ and $\eta^n_o$ have closed polyhedral level sets 
$$
B^n_{e/o}(r)=\{\vec{x} \in \real^2 \, | \, \eta^n_{e/o}(\vec{x}) \leq r \}
$$ 
which we will show to give rise to the polygons. 

First of all, we see that the level sets $B^n_{e/o}(r)$ are bounded so that they actually describe polytopes: When we express $\vec{x} \in \real^2$ in its polar form $\vec{x}= (x,y)^T = \no{\vec{x}}(\cos(\theta),\sin(\theta))^T$, we have 
\begin{align}
\eta^n_{e}(\vec{x}) &= r_n \no{\vec{x}} \max_{k \in \{1, \ldots,n\}} \cos\left(\frac{2 \pi k}{n}- \theta \right) \label{eq:eta_polar1}, \\
\eta^n_{o}(\vec{x}) &= r_n \no{\vec{x}} \max_{k \in \{1, \ldots,n\}} \cos\left(\frac{(2k-1) \pi}{n}- \theta \right) \label{eq:eta_polar2}.
\end{align}
Considering $\eta^n_e$ first, we see that since the angles $\frac{2k\pi}{n}$ are an angle $\frac{2\pi}{n}$ apart from each other for consecutive $k$'s and since the maximization of cosine actually minimizes the angle $\frac{2 \pi k}{n}- \theta$, for the $k' \in \{1, \ldots,n\}$ which minimizes the angle we have $\frac{2 \pi k'}{n}- \theta \leq \frac{\pi}{n}$ so that $\cos\left(\frac{2 \pi k'}{n}- \theta\right) \geq \cos\left(\frac{\pi}{n}\right)$. The same arguments hold for $\eta^n_o$ as well so if $\vec{x} \in B^n_{e/o}(r)$ for some $r>0$, then 
\begin{equation}
\eta^n_{e/o}(\vec{x}) \leq r \quad \Rightarrow \quad \no{\vec{x}} \leq \frac{r}{r_n \cos\left(\frac{\pi}{n}\right)}= r.
\end{equation}
Hence, the level sets $B^n_{e/o}(r)$ are bounded so together with being closed it means that they are compact (convex) polytopes for all $r>0$. Furthermore, since $B^n_{e/o}(r)$ is polyhedral in $\real^2$, it is a finite intersection of half-spaces in $\real^2$ so that it must have at most $n$ extreme points \cite{Rockafellar-convex}.

The functions $\eta^n_e$ and $\eta^n_o$ have the following connection:
\begin{equation}\label{eq:eta_connection}
\eta^n_{e/o}(\vec{x}) \leq r_n \eta^n_{o/e}(\vec{x})
\end{equation}
for all $\vec{x} \in \real^2$ and $r\geq 0$. This can be seen using the expressions from \eqref{eq:eta_polar1} and \eqref{eq:eta_polar2}; for example 
\begin{align*}
\eta^n_o(\vec{x}) &= r_n \no{\vec{x}} \max_{k \in \{1, \ldots,n\}} \cos\left( \dfrac{(2 k-1) \pi }{n} - \theta \right) 
= r^2_n  \no{\vec{x}} \max_{k \in \{1, \ldots,n\}} \cos\left( \dfrac{(2 k-1) \pi }{n} - \theta \right) \cos\left( \dfrac{\pi}{n} \right)
\\
& =\dfrac{r^2_n  \no{\vec{x}}}{2} \max_{k \in \{1, \ldots,n\}}  \left[ \cos\left( \dfrac{2 (k-1) \pi }{n} - \theta \right) + \cos\left( \dfrac{2 k \pi}{n}- \theta \right) \right] \\
& \leq \dfrac{r_n}{2}  \left[ r_n  \no{\vec{x}} \max_{k \in \{1, \ldots,n\}} \cos\left( \dfrac{2 (k-1) \pi }{n} - \theta \right) +   r_n  \no{\vec{x}} \max_{k \in \{1, \ldots,n\}} \cos\left( \dfrac{2 k \pi}{n}- \theta \right) \right] \\
&=  r_n \eta^n_{e}(\vec{x}). 
\end{align*}

Let us consider the specific level set $B^n_{o}(r_n)$. For each $k\in \{1, \ldots,n\}$, we define $\vec{s}_k = \left( r_n \cos\left(\frac{2k \pi }{n}\right), r_n \sin\left(\frac{2k \pi }{n}\right)\right)^T$ so that $s_k = (\vec{s}_k,1)^T$. It is easy to see that $\eta^n_o(\vec{s}_k) =r_n$ so that $\vec{s}_k \in B^n_o(r_n)$ for all $k=1, \ldots,n$. Furthermore, we have that $\no{\vec{s}_k} = r_n$ for all $k$ so that each $\vec{s}_k$ lies on a circle of radius $r_n$ centered at the origin. This shows that $\vec{s}_k$ is extreme in $B^n_o(r_n)$ for all $k=1,\ldots,n$, since a non-trivial convex decomposition for $\vec{s}_k$ would contradict the fact that $\no{\vec{x}}\leq r_n$ for all $\vec{x} \in B^n_o(r_n)$. This, combined with the fact that $B^n_o(r_n)$ has at most $n$ extreme points, shows that the extreme points of $B^n_o(r_n)$ are exactly the vectors $\vec{s}_k$ for all $k=1, \ldots,n$. Hence, $s= (\vec{s},1) \in \state_n$ if and only if $\vec{s} \in B^n_o(r_n)$. 

By similar arguments, we see that also $B^n_e(r)$ is a regular polygon whose extreme points are rotated and scaled from $\vec{s}_k$. For example, in the case of even $n$, we see that the effects lying on the hyperplane that contains all the non-trivial extreme effects can be characterized in terms of $B^n_e(r)$; namely, $e= \left(\vec{e},\frac{1}{2}\right)^T \in \conv{\{e_1, \ldots,e_n\}}$ if and only if $\vec{e} \in B^n_e\left(\frac{1}{2}\right)$. Similarly for odd polygons we have that $g=\left(\vec{g},\sigma_n\right)^T \in \conv{\{g_1, \ldots,g_n\}}$ if and only if $\vec{g} \in B^n_o\left(\sigma_n\right)$.

Hence, we can characterize (both the odd and even) polygon state spaces with the polyhedral functions $\eta^n_{o}$: 
\begin{equation}
\state_n = \{ (\vec{s},1)^T \in \real^3 \, | \, \eta^n_o(\vec{s}) \leq r_n \}.
\end{equation}
Furthermore, for even $n$ we have that
\begin{equation}
\conv{\{e_1, \ldots,e_n\}} = \left\lbrace \left(\vec{e},\dfrac{1}{2}\right)^T \in \real^3 \, | \, \eta^n_e(\vec{e}) \leq \dfrac{1}{2} \right\rbrace,
\end{equation}
and similarly for odd $n$ 
\begin{equation}
\conv{\{g_1, \ldots,g_n\}} = \left\lbrace \left(\vec{g},\sigma_n\right)^T \in \real^3 \, | \, \eta^n_o(\vec{g}) \leq \sigma_n \right\rbrace.
\end{equation}
In both cases, the above sets serve as a compact bases for the positive dual cones in $\real^3$.

\subsection{Characterization of $\trivial_2$}
The analysis of $\trivial_2$ on polygon state spaces is straight-forward. By Thm. \ref{thm:iwd-directSum}, we can have $\trivial_2 \neq \trivial_1$ if and only if the state space can be represented as a (non-trivial) direct sum of state spaces such that some non-trivial observable takes constant values for each effect on each summand of the direct sum. Since polygons are 2-dimensional state spaces embedded in $\real^3$, by Cor. \ref{coro:iwd-2d} the state space can be represented as a non-trivial direct sum only in the case when $n=3$. Thus, if $n=3$ then the state space is a simplex and $\trivial_2 = \obs(\state_3)$, and in all other cases we have $\trivial_1 = \trivial_2$. 

\subsection{Characterization of $\trivial_3$}

The post-processing equivalence classes of simulation irreducible observables on polygon state spaces were characterized in \cite{FilippovHeinosaariLeppajarvi-simulations} where it was found that for an $n$-gon state space there exists $m$ dichotomic and $\frac{1}{3}m(m-1)(m-2)$ trichotomic extreme simulation irreducible observables when $n=2m$ for some $m \in \nat$ (even polygons) and $\frac{1}{6}m(m+1)(2m+1)$ trichotomic extreme simulation irreducible observables when $n=2m+1$ for some $m \in \nat$ (odd polygons). 

For even polygons with $n=2m$ where $m\geq 2$, there exists at least two inequivalent dichotomic simulation irreducible observables, so by Prop. \ref{prop:T3-binary-sim-irr} the set $\trivial_3$ coincides with the set of trivial observables.

For odd polygon state spaces we see that the extreme simulation irreducible observables have the same number of outcomes as the dimension of the effect space, so given that there are a finite number of them, it follows from Prop. \ref{prop:d+1-outcomes} that $\trivial_3 \neq \trivial_1$. We continue to give a characterization of $\trivial_3$ for the odd polygon state spaces.

Let $\state_n$ be an odd polygon state space so that $n=2m+1$ for some $m \in \nat$. There are $q_m:= \frac{1}{6}m(m+1)(2m+1)$ extreme simulation irreducible observables that generate the cones generated by all the simulation irreducible observables. By using some enumeration $\B^{(1)}, \ldots, \B^{(q_m)}$ for these observables, we have that $\obs^{ext}_{irr}(\state_n) = \{\B^{(i)}\}_{i=1}^{q_m}$ so that for an observable $\A \in \obs(\state_n)$ we have
\begin{equation*}
\A \in \trivial_3 \quad \Leftrightarrow \quad \A_x \in \bigcap_{j=1}^{q_m} \cone{\{\B^{(j)}_{x}\}_{x \in \Omega_{\B^{(j)}}}} \quad \forall x \in \Omega_\A.
\end{equation*}

We can show that there are certain extreme simulation irreducible observables that are enough to characterize the above intersection. Let $\B \in \obs^{ext}_{irr}(\state)$. Since for all $k\in \{1,2,3\}$ the effects $\B_k$ are indecomposable, for each $k \in \{1,2,3\}$ there exists $0<c_k \leq 1$ and effect $g_{i_k} \in \{g_1,\ldots,g_{2m+1}\}$ such that $\B_k =c_k g_{i_k}$. We see that we only need to consider the case when $i_k \in \{j, j+m,j+m+1\}$ for all $k \in \{1,2,3\}$ for some $j \in \{1, \ldots,2m+1\}$, where the addition of the indices is taken modulo $2m+1$.

\begin{proposition}  \label{prop:polygon_T3_cone}
An observable $\A \in \obs(\state_{2m+1})$ on an odd polygon state space $\state_{2m+1}$ is in $\trivial_3$ if and only if
\begin{equation*}
\A_x \in  \bigcap_{i=1}^{2m+1} \cone{\{g_i,g_{i+m},g_{i+m+1}\}} \quad \forall x \in \Omega_\A.
\end{equation*}
\end{proposition}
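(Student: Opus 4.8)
The plan is to deduce the statement from the cone characterization of Corollary~\ref{cor:T3-cone}, which says that $\A \in \trivial_3$ if and only if $\A_x \in \bigcap_{\B \in \obs^{ext}_{irr}(\state_{2m+1})} \cone{\{\B_y\}_{y \in \Omega_\B}}$ for all $x$. As recalled above, the only indecomposable effects on $\state_{2m+1}$ are positive multiples of $g_1, \ldots, g_{2m+1}$, so every $\B \in \obs^{ext}_{irr}(\state_{2m+1})$ generates a cone $\cone{\{g_{i_1}, g_{i_2}, g_{i_3}\}}$ for a triple whose planar vertices $\vec g_{i_1}, \vec g_{i_2}, \vec g_{i_3}$ enclose the origin (which is exactly the requirement that the three effects can be normalised so as to sum to $u$). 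The special triples $\{g_i, g_{i+m}, g_{i+m+1}\}$ are of this kind and are themselves extreme simulation irreducible observables, so the intersection over all of $\obs^{ext}_{irr}(\state_{2m+1})$ is trivially contained in $\bigcap_{i=1}^{2m+1}\cone{\{g_i, g_{i+m}, g_{i+m+1}\}}$; this already settles the ``only if'' direction. The substance of the proof is the reverse inclusion: that the intersection of the special cones is contained in $\cone{\{g_{i_1}, g_{i_2}, g_{i_3}\}}$ for every admissible triple.

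To handle this I would first pass from cones in $\real^3$ to triangles in $\real^2$. Since every $g_k$ has the same third coordinate $\sigma_n$, a point $p=(\vec p, p_z)^T$ with $p_z>0$ lies in $\cone{\{g_{i_1}, g_{i_2}, g_{i_3}\}}$ if and only if the rescaled planar vector $\tfrac{\sigma_n}{p_z}\vec p$ lies in the triangle $\conv{\{\vec g_{i_1}, \vec g_{i_2}, \vec g_{i_3}\}}$, while $p=o$ lies in every such cone; this is immediate upon collecting the third coordinates of a positive combination. Hence it suffices to prove the planar statement
\[
\bigcap_{i=1}^{2m+1} \conv{\{\vec g_i, \vec g_{i+m}, \vec g_{i+m+1}\}} \subseteq \conv{\{\vec g_{i_1}, \vec g_{i_2}, \vec g_{i_3}\}}
\]
for every admissible triple $\{g_{i_1}, g_{i_2}, g_{i_3}\}$.

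For the geometric core, note that each edge of an admissible triangle is a chord of the regular $(2m+1)$-gon joining two vertices at most $m$ steps apart (admissibility forces each of the three vertex-gaps to be at most $m$), and the associated constraint is the half-plane on the centre side of that chord. Among the facets of the special triangles one finds every longest diagonal, i.e. every chord joining vertices exactly $m$ steps apart: the triangle $\conv{\{\vec g_i, \vec g_{i+m}, \vec g_{i+m+1}\}}$ contributes the diagonals $(\vec g_i, \vec g_{i+m})$ and $(\vec g_i, \vec g_{i+m+1})$, and these sweep out all of them as $i$ varies. The centre-side half-planes of all longest diagonals cut out a regular $(2m+1)$-gon $K_m$ of inradius $\sigma_n \cos\left(\frac{m\pi}{2m+1}\right)$, so that $\bigcap_i \conv{\{\vec g_i, \vec g_{i+m}, \vec g_{i+m+1}\}} \subseteq K_m$. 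I would then finish by showing that $K_m$ is contained in the centre-side half-plane of every chord of span $d \le m$. Using the support function of the regular polygon $K_m$, the worst case occurs when the chord's outward normal points towards a vertex of $K_m$, where the requirement becomes the trigonometric inequality
\[
\cos\left(\frac{m\pi}{2m+1}\right) \le \cos\left(\frac{\pi}{2m+1}\right)\cos\left(\frac{d\pi}{2m+1}\right),
\]
which follows from a product-to-sum identity together with the monotonicity of cosine on $[0,\pi]$ and $d \le m-1$ in this case. Chaining $\bigcap_i \conv{\{\vec g_i,\vec g_{i+m},\vec g_{i+m+1}\}} \subseteq K_m \subseteq \conv{\{\vec g_{i_1}, \vec g_{i_2}, \vec g_{i_3}\}}$ then gives the reverse inclusion. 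I expect the main obstacle to be precisely this last comparison, namely verifying that the small central polygon $K_m$ assembled from the longest diagonals alone already fits inside the half-plane of every shorter chord; the reduction to two dimensions and the bookkeeping about which diagonals arise as facets are routine by comparison.
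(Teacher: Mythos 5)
Your proposal is correct, but its geometric core takes a genuinely different route from the paper's. Both arguments share the same skeleton: starting from Corollary~\ref{cor:T3-cone}, observing that the special triples $\{g_i,g_{i+m},g_{i+m+1}\}$ are themselves (normalizations of) extreme simulation irreducible observables --- which settles the ``only if'' direction --- and passing from cones in $\real^3$ to triangles on the plane $z=\sigma_n$ by rescaling (this is the paper's Lemma~\ref{lemma:polygon-triangle}). The divergence is in the remaining inclusion. The paper proves the purely combinatorial Lemma~\ref{lemma:triangles}: for each admissible triple $\{i_1,i_2,i_3\}$, the intersection of just the \emph{three} special triangles anchored at $i_1,i_2,i_3$ already lies inside $\conv{\{g_{i_1},g_{i_2},g_{i_3}\}}$; the proof is by contradiction, covering each special triangle by the big triangle together with a ``cap'' of vertices and noting the three caps are pairwise disjoint --- no metric input at all. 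You instead bound the full intersection by the regular $(2m+1)$-gon $K_m$ cut out by the centre-side half-planes of the longest diagonals, and then verify by a support-function estimate that $K_m$ lies on the centre side of every chord of span $d\le m$. Your route costs trigonometric bookkeeping but buys an explicit description of the intersection: since $\cos\left(\tfrac{m\pi}{2m+1}\right)=\sin\left(\tfrac{\pi}{2(2m+1)}\right)$, your $K_m$ (inradius $\sigma_n\cos\left(\tfrac{m\pi}{2m+1}\right)$, circumradius $\sigma_n r_n\sin\left(\tfrac{\pi}{2(2m+1)}\right)$) is exactly the inner polygon $C_m$ that the paper only identifies later, in Lemma~\ref{lemma:polygon_extreme} and Prop.~\ref{prop:polygon_T3}, so your argument anticipates that computation.

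One step must be made airtight. The displayed inequality $\cos\left(\tfrac{m\pi}{2m+1}\right)\le\cos\left(\tfrac{\pi}{2m+1}\right)\cos\left(\tfrac{d\pi}{2m+1}\right)$ is \emph{false} for $d=m$, so the crude worst-case bound (support function of $K_m$ at most its circumradius) cannot be applied to all chords uniformly. The two cases must be separated: for $d=m$ the chord is itself a facet of $K_m$, so the containment holds by the very definition of $K_m$; for $d\le m-1$ the worst-case bound together with your product-to-sum argument works. (Equivalently, a parity check shows the outward normal of a span-$d$ chord can point at a vertex of $K_m$ only when $d\not\equiv m\pmod 2$, which forces $d\le m-1$ in the worst case.) Your parenthetical ``$d\le m-1$ in this case'' gestures at exactly this, but as written it is an unproved claim; one explicit sentence closes the gap.
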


The complete proof of the proposition can be found in the appendix but one can easily convince oneself by looking at Fig. \ref{fig:heptagon} which shows the case of heptagon effect space. For each $\B \in \obs^{ext}_{irr}(\state_n)$ we can consider the base of the cone $\cone{\{\B_1, \B_2, \B_3\}}$ on the plane containing the indecomposable extreme effects $\{g_i\}_{i=1}^n$, where the base takes the form of a triangle that contains the middle point $\sigma_n u$. We can see that in order to characterize the intersection of such cones, it is enough to consider the intersection of their respective bases, or triangles containing $\sigma_n u$, equivalently. In the left of Fig. \ref{fig:heptagon}, the bases (coloured as blue and red) of two extreme simulation irreducible observables are shown with the whole effects space. On the right is depicted all the triangles (formed by dashed lines) of all the bases on the plane with the blue and red bases from the left figure also shown on the right. We see that the base of the intersection of the cones (darker blue area) is characterized by triangles with vertices $g_i, g_{i+m}$ and $g_{i+m+1}$ (like the blue triangle) so that their intersection is always contained in the intersection of other triangles (like the red triangle).

\begin{figure}[t]
\begin{center}
 \includegraphics[scale=0.4]{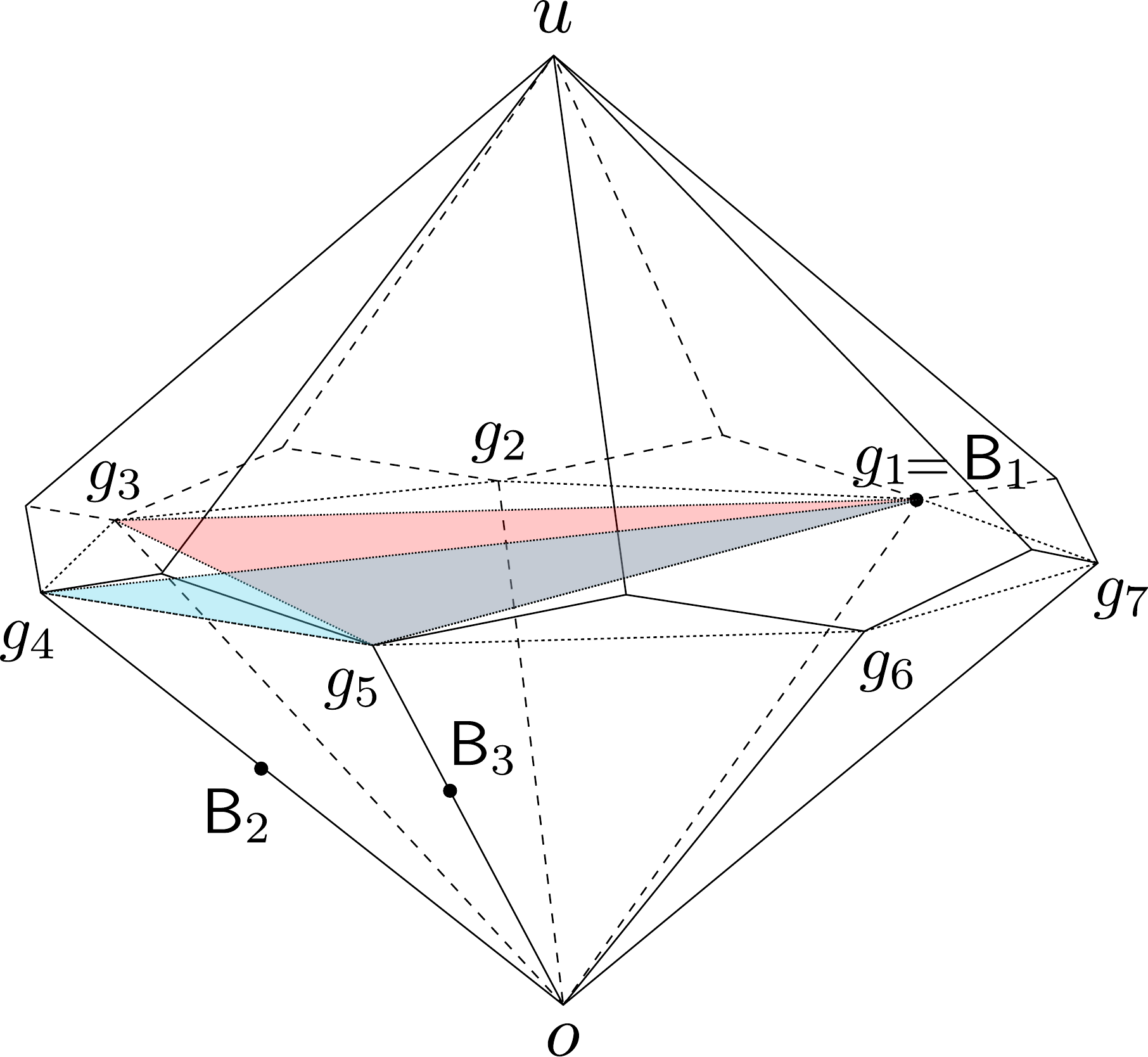} \quad 
 \includegraphics[scale=0.4]{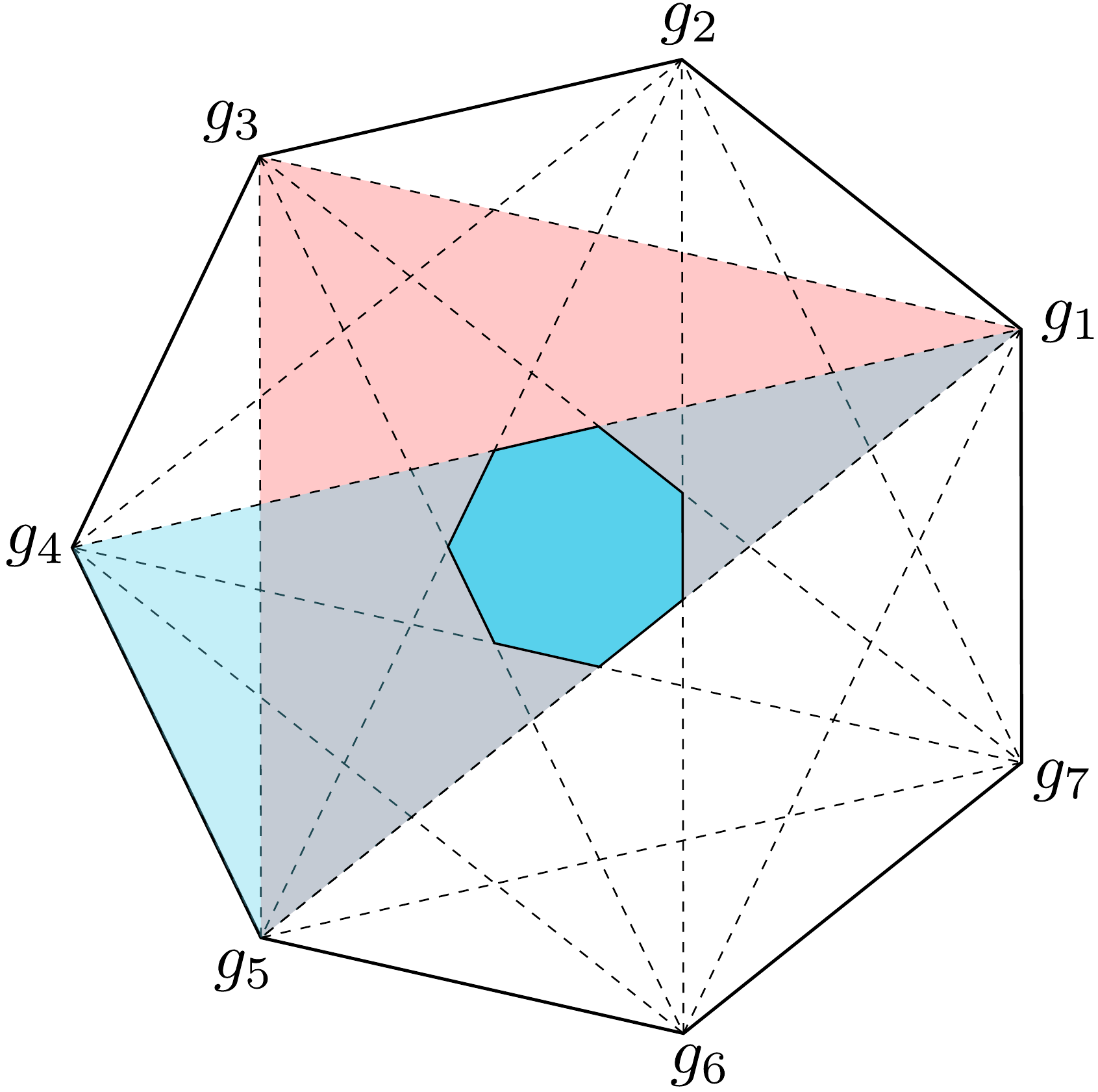}
 \caption{\label{fig:heptagon} Simulation irreducible observable $\B$ on the heptagon state space with $\B_1 = g_1$, $\B_2= 0.555 g_4$ and $\B_3 = 0.555 g_5$. The base of the cone generated by the effects of $\B$ forms a (blue) triangle on the base of the positive cone (left). The intersection of the bases of all the simulation irreducible observables forms another polygon (right). }
\end{center}
\end{figure}

We are going to proceed with finding the base of $\bigcap_{i=1}^{2m+1} \cone{\{g_i,g_{i+m},g_{i+m+1}\}}$ by identifying the extreme points of the base $\bigcap_{i=1}^{2m+1} \conv{\{g_i,g_{i+m},g_{i+m+1}\}}$. Let us denote
\begin{equation*}
L_i = \conv{\{g_i,g_{i+m}\}}
\end{equation*}
and
\begin{equation*}
C_m = \bigcap_{i=1}^{2m+1} \conv{\{g_i,g_{i+m},g_{i+m+1}\}}.
\end{equation*}

We will approach the problem as follows: at first, we will identify that $C_m$ must be a polygon itself by looking at its relation with the line segments $L_i$. Then we will find the form of the extreme points of $C_m$ and in the end we will identify them. During the calculations we will work only in the 2-dimensional vector space given by $\aff{\{ g_i\}_{i=1}^{2m+1}}$.

It is very useful to realize that $L_i$ generate hyperplanes in $\mathbb{R}^2$ and that $C_m$ is an intersection of the halfspaces corresponding to the hyperplanes $L_i$ that contain the point $0$. It follows that we must have $L_i \cap C_m \neq \emptyset$, $\forall i \in \{ 1, \ldots, 2m+1 \}$, otherwise there would be hyperplanes separating $L_i$ and $C_m$, which is a contradiction with $C_m$ being given as an intersection of halfspace corresponding to $L_i$. Since there are only $2m+1$ different line segments $L_i$ it follows that $C_m$ must have exactly $2m+1$ edges and from the symmetry it also follows that $C_m$ must be a polygon. Now the only thing we need to do is to identify the extreme points of $C_m$.

Since the line segments $L_i$ must intersect $C_m$ it follows that the extreme points of $C_m$ must correspond to the intersections of these line segments. Let us denote
\begin{equation*}
x_{i,j} = L_i \cap L_{i+j}
\end{equation*}
where $j \in \{1, \ldots, m\}$, where if $i+j \geq 2m+1$, then we take $(i+j) \mod (2m+1)$. Also note that considering $j \geq m+1$ would be redundant. The next key step is to characterize the relation of $x_{i, j}$ and $C_m$. We can show the following.

\begin{lemma} \label{lemma:polygon_extreme}
$x_{i,1}$ are the extreme points of $C$.
\end{lemma}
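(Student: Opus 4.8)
The plan is to combine the polygonal structure of $C_m$ already obtained above with an explicit determination of the cyclic order in which its edges occur. Recall that we have established three facts: $C_m$ equals the intersection of the halfplanes $H_i$ bounded by the line $\ell_i$ through $L_i=\conv{\{g_i,g_{i+m}\}}$ and containing the centre $0$; each chord satisfies $L_i\cap C_m\neq\emptyset$, so every one of the $2m+1$ constraints is active and contributes exactly one edge; and consequently $C_m$ is a convex $(2m+1)$-gon. Throughout I work in the $2$-dimensional vector space $\aff{\{g_i\}_{i=1}^{2m+1}}$ recentered at $\sigma_n u$, so that all the $g_k$ are equidistant from the origin, with $g_k$ pointing in the direction of angle $\frac{2k\pi}{2m+1}$.

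First I would pin down the cyclic order of the edges. Since $g_i$ and $g_{i+m}$ are equidistant from $0$, the perpendicular from the origin to $\ell_i$ passes through the midpoint of the chord $L_i$, so the outward normal of $H_i$ points in the bisector direction of angle $\theta_i=\frac{(2i+m)\pi}{2m+1}$. As $i$ runs through $1,\ldots,2m+1$ these angles are pairwise distinct modulo $2\pi$ and increase in steps of exactly $\frac{2\pi}{2m+1}$; hence the outward normals of the active constraints are arranged around the circle in the same cyclic order as the index $i$. Note also that $0\notin\ell_i$, because for odd $n$ the points $g_i,g_{i+m}$ are not antipodal (their angular separation is $\frac{2m\pi}{2m+1}<\pi$), so each $H_i$ is a proper halfplane with $0$ in its interior.

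Next I would invoke the elementary fact that, for a bounded polygon presented as an intersection of halfplanes in which every constraint is active, the edges are met in the same cyclic order as the angles of their outward normals as one traverses the boundary. Combined with the previous paragraph, this shows that the edge on $\ell_i$ and the edge on $\ell_{i+1}$ are adjacent and therefore share a vertex. That vertex lies on both $\ell_i$ and $\ell_{i+1}$, so it is exactly $\ell_i\cap\ell_{i+1}=x_{i,1}$; the two lines are not parallel since their normals differ by the angle $\frac{2\pi}{2m+1}\neq 0,\pi$. Running $i$ over $1,\ldots,2m+1$ yields $2m+1$ such points, matching the number of vertices of the $(2m+1)$-gon $C_m$; hence the $x_{i,1}$ are precisely the extreme points of $C_m$, while the non-adjacent intersections $x_{i,j}$ with $j\geq 2$ are not vertices.

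The main obstacle is establishing the cyclic-order claim rigorously, i.e. that consecutive lines $\ell_i,\ell_{i+1}$ (rather than some $\ell_i,\ell_{i+j}$ with $j\geq2$) bound adjacent edges. This is precisely where the monotone rotation of the normals $\theta_i$ and the non-redundancy of every constraint (each $L_i$ meets $C_m$) are both indispensable: without non-redundancy an edge could be skipped, and without the angular monotonicity the identification of adjacency would fail. Once both inputs are in place the conclusion is immediate, so the remaining work is merely the bookkeeping of the angles $\theta_i$.
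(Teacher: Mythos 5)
Your proof is correct, but it takes a genuinely different route from the paper's. The paper argues computationally: it first observes that each $x_{i,j}$ is either an extreme point of $C_m$ or lies outside $C_m$, then computes the coefficient $\lambda_j$ in $x_{i,j}=\lambda_j g_i+(1-\lambda_j)g_{i+m}$ in closed form, uses the monotonicity of $\lambda_j$ in $j$ to place $x_{i,j}$ on the segment $\conv{\{x_{i,1},g_{i+m}\}}$, and from this excludes $x_{i,j}$ for $j\geq 2$ by showing $x_{i,j}\notin\conv{\{g_{i+1},g_{i+m+1},g_{i+m+2}\}}$, so that only the $x_{i,1}$ survive as candidates; rotational symmetry then finishes the argument. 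You instead exploit the halfplane presentation of $C_m$ structurally: the outward normal of the constraint through $L_i$ points along the bisector angle $\tfrac{(2i+m)\pi}{2m+1}$, these normals advance monotonically in steps of $\tfrac{2\pi}{2m+1}$, and since every one of the $2m+1$ constraints is active and contributes one edge, the standard fact that the edges of a compact convex polygon occur along its boundary in the cyclic order of their outward normals forces the edges on $\ell_i$ and $\ell_{i+1}$ to be adjacent, identifying the vertices as the points $\ell_i\cap\ell_{i+1}$. Your route is cleaner, avoids the paper's trigonometric bookkeeping, makes transparent \emph{why} it is consecutive chords that meet at vertices, and would generalize to any rotationally symmetric family of chords; the paper's computation, on the other hand, is fully self-contained (your cyclic-order fact, while elementary and standard, would still need a proof or citation for full rigor) and it produces explicit coordinate information, with the coefficient $\lambda_1$ reappearing in the proof of Prop.~\ref{prop:polygon_T3}. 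One small point you should tighten: $x_{i,1}$ is defined in the paper as the intersection of the \emph{segments}, $L_i\cap L_{i+1}$, whereas your vertex arises as the intersection of the \emph{lines}, $\ell_i\cap\ell_{i+1}$; the identification is immediate because the vertex lies in $C_m\subseteq\conv{\{g_i,g_{i+m},g_{i+m+1}\}}$ and $\ell_i$ meets that triangle exactly in $L_i$ (similarly for $i+1$), so the line intersection point does lie on both segments, but this half-line of justification is currently missing from your write-up.
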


Again, the complete proof of the lemma can be found in the appendix, but one can easily convince oneself by looking at Fig. \ref{fig:orientation}, where the points $\{x_{i,j}\}_{j=1}^m$ are depicted for a fixed $i$ in the case of a heptagon (left) and nonagon (right) state space. 

\begin{figure}
\begin{center}
\includegraphics[scale=0.4]{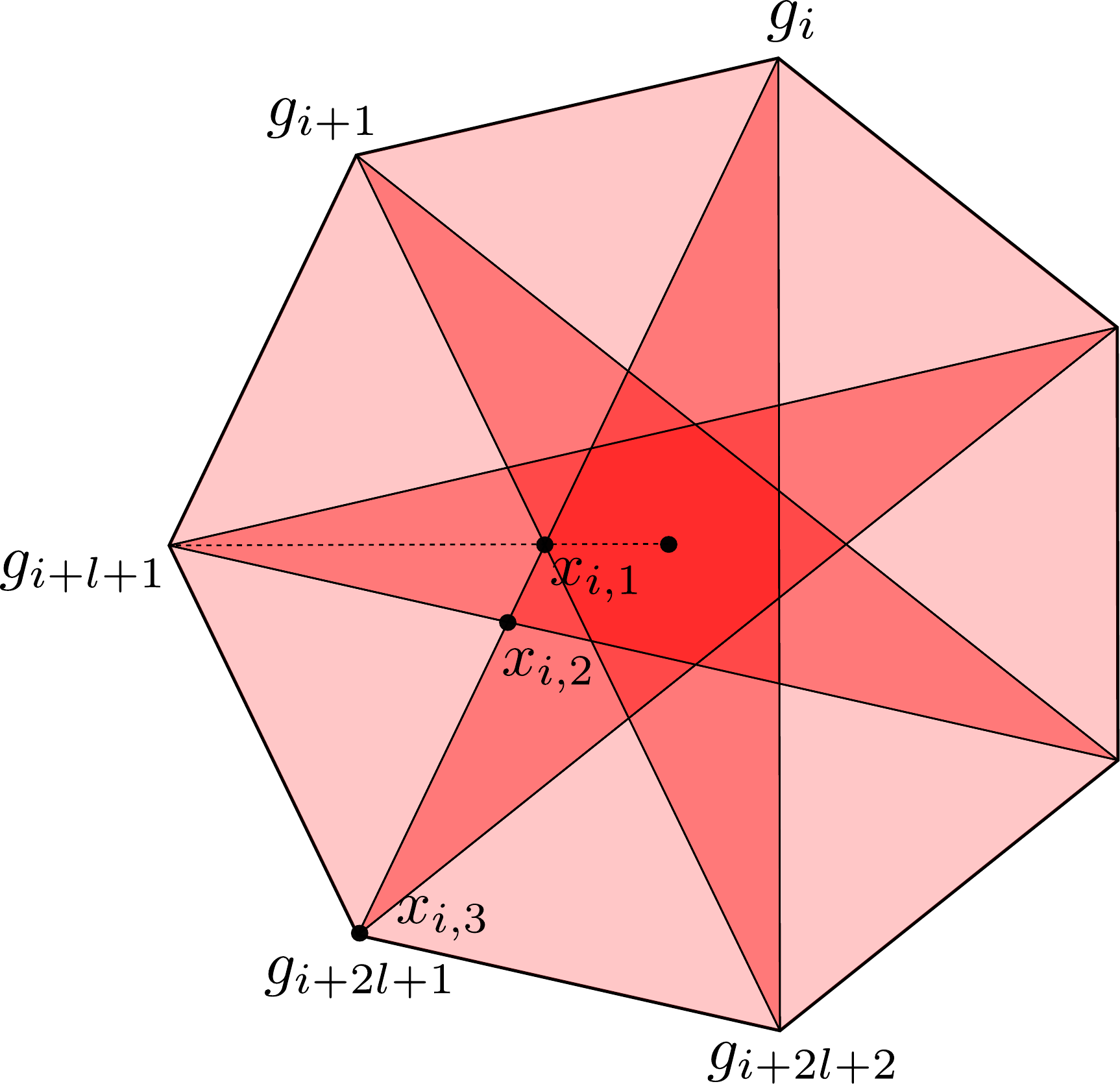} \quad \includegraphics[scale=0.4]{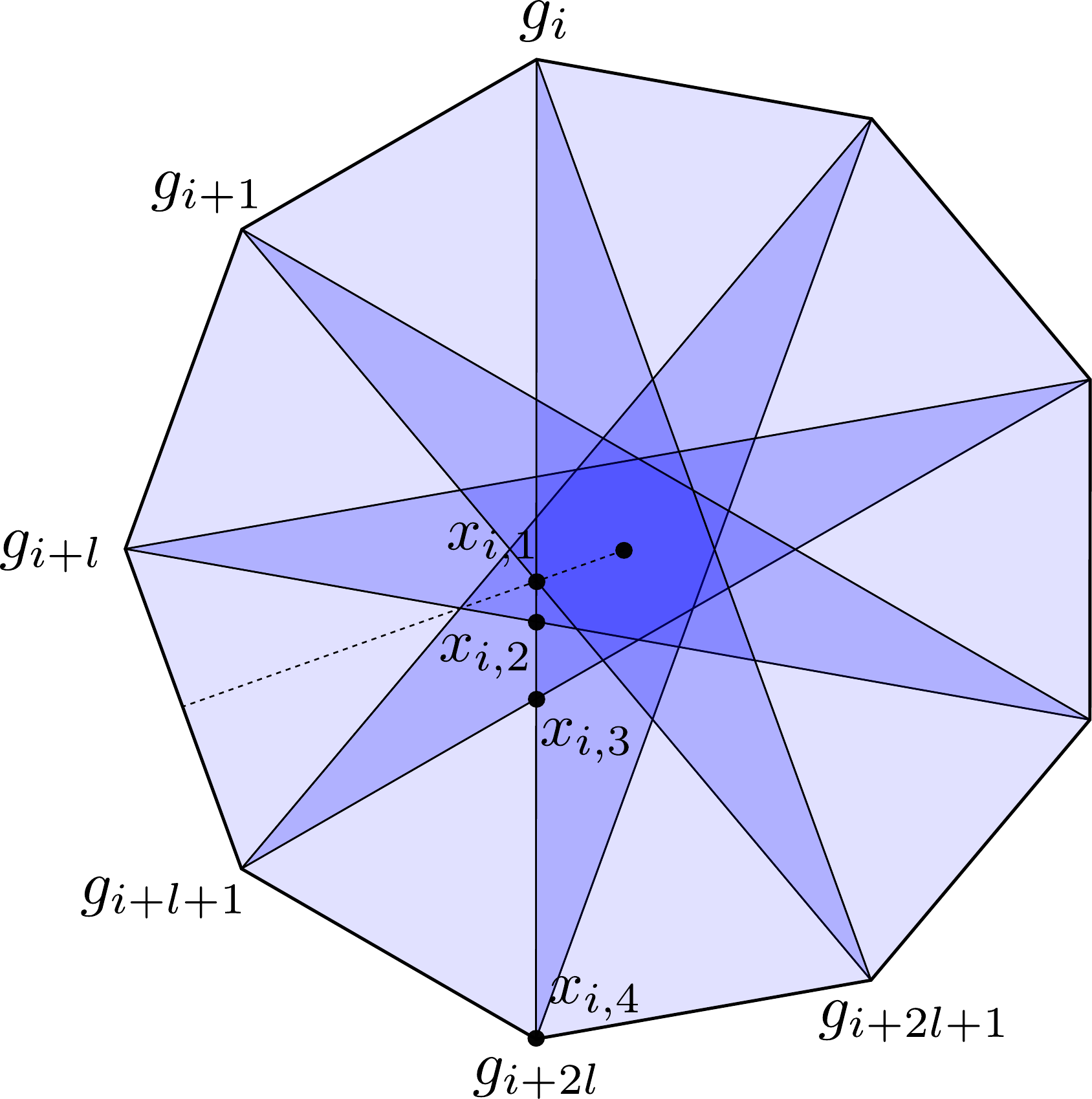}
 \caption{\label{fig:orientation} The points $\{x_{i,j}\}_{j=1}^m$ for a fixed $i$ and the orientation of the inner polygon for odd (left, $m=3$) and even (right, $m=4$) $m$.}
\end{center}
\end{figure}

We are almost ready to move on to the complete characterization of $\trivial_3$ in odd polygon theories in terms of the previously defined $\eta^n_{e/o}$ functions. We will still make a few remarks on the inner polygons $C_m$.

Let $n=2m+1$. We will consider separately, although analogously, the cases of even and odd $m$. This is because of the orientation of the inner polygon $C_m$ with respect to the outer polygon $\conv{\{g_1, \ldots,g_n\}}$. To show the difference between even and odd $m$, let us consider the intersection point of the boundary of the outer polygon and the half-line through an extreme point $x_{i,1}$ of the inner polygon emanating from the centroid $(0,0, \sigma_n)^T$. If this intersection point is also an extreme point of the outer polygon, then both the inner and outer polygons are similarly oriented; otherwise they are differently oriented. 

As $x_{i,1} = L_i \cap L_{i+1} = \conv{\{g_i, g_{i+m}\}} \cap \conv{\{g_{i+1}, g_{i+m+1}\}}$, it is clear that the half-line through $x_{i,1}$ that emanates from the centroid meets the boundary of the outer polygon at some of the line segments $\conv{\{g_{i+1},g_{i+2}\}}$ $,\ldots,$ $\conv{\{g_{i+m-1},g_{i+m}\}}$.

For even $m$, i.e., for $m=2l$ for some $l \in \nat$, there exists an even number $2(l-1)$ of vertices $g_j$ between the vertices $g_{i+1}$ and $g_{i+m}$ so that there is an odd number of such edges. From the symmetry it follows that for even $m$, the intersection point must lie in the middle of the midmost edge $\conv{\{g_{i+l},g_{i+l+1}\}}$. Thus, for even $m$, the inner polygon $C_{m}$ is differently oriented with respect to the outer polygon $\conv{\{g_1, \ldots,g_n\}}$.

By contrast, for odd $m$, i.e., for $m=2l+1$ for some $l \in \nat$, there exists an even number of such edges, which together with the symmetry of the situation tells us that now the intersection point is exactly one of the vertices of the outer polygon, namely $g_{i+l+1}$. Thus, for odd $m$, the inner polygon is similarly oriented to the outer polygon. The orientations of the inner polygon for odd and even $m$ are depicted in Fig. \ref{fig:orientation}.

As we saw in the beginning of the section, the orientation of the polygon can also be characterized with the $\eta^n_{e/o}$ functions. Thus, in the characterization of $\trivial_3$ we must use either $\eta^n_{e}$ or $\eta^n_{o}$ depending on the parity of $m$. 

\begin{proposition}\label{prop:polygon_T3}
An observable $\A \in \obs(\state_{2m+1})$ with effects $\A_x =\alpha_x (\vec{a}_x,\sigma_{2m+1})^T$ for all $x \in \Omega_\A$ is compatible with every other observable if and only if for all $x \in \Omega_\A$ 
\begin{equation*}
\eta^{n}_{e}(\vec{a}_x) \leq \sigma_{n} r_{n} \sin\left(\frac{\pi}{2n}\right) 
\end{equation*}
if $m=2l$ for some $l \in \nat$, or
\begin{equation*}
\eta^{n}_{o}(\vec{a}_x) \leq \sigma_{n} r_{n} \sin\left(\frac{\pi}{2n}\right) 
\end{equation*}
if $m=2l+1$ for some $l \in \nat$.
\end{proposition}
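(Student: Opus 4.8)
The plan is to translate the cone-membership criterion of Prop.~\ref{prop:polygon_T3_cone} into a planar containment problem and then recognize the resulting planar region as a level set of $\eta^n_{e}$ or $\eta^n_o$. First I would exploit that every generator $g_j$ lies on the single hyperplane $\{z=\sigma_n\}$, i.e. has last coordinate $\sigma_n$. Writing an effect as $\A_x=\alpha_x(\vec a_x,\sigma_n)^T$ with $\alpha_x\ge 0$ (legitimate since $\A_x(s_0)$ equals the $z$-component of $\A_x$, which lies in $[0,1]$ for every effect), the membership $\A_x\in\cone{\{g_i,g_{i+m},g_{i+m+1}\}}$ is, for $\alpha_x>0$, equivalent to the normalized point $(\vec a_x,\sigma_n)^T$ lying in the base triangle $\conv{\{g_i,g_{i+m},g_{i+m+1}\}}$ (the case $\alpha_x=0$, i.e. $\A_x=o$, being automatic). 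Intersecting over $i$ and reading everything in the plane $\aff{\{g_i\}_{i=1}^n}$, Prop.~\ref{prop:polygon_T3_cone} reduces to the statement that $\A\in\trivial_3$ iff $\vec a_x\in C_m$ for all $x\in\Omega_\A$, where $C_m=\bigcap_{i=1}^{2m+1}\conv{\{g_i,g_{i+m},g_{i+m+1}\}}$ is viewed in-plane.

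Next I would establish that $C_m$ is a regular $(2m+1)$-gon centred at the origin and identify its orientation. Regularity follows because the defining family of triangles is invariant under rotation by $2\pi/n$ about the centroid, so $C_m$ inherits this symmetry; by Lemma~\ref{lemma:polygon_extreme} its vertices are exactly the points $x_{i,1}=L_i\cap L_{i+1}$ with $L_i=\conv{\{g_i,g_{i+m}\}}$. The orientation is supplied by the preceding discussion: for odd $m$ the half-line from the centroid through $x_{i,1}$ meets the outer polygon at a vertex $g_{i+l+1}$, so $C_m$ is similarly oriented to $\conv{\{g_1,\ldots,g_n\}}$ and hence to the level sets of $\eta^n_o$; for even $m$ it meets an edge midpoint, so $C_m$ carries the orientation of the level sets of $\eta^n_e$.

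It then remains to compute the size of $C_m$ and match it to a level-set parameter. The edges of $C_m$ lie on the lines through $L_i$, so the apothem of $C_m$ equals the distance from the origin to $L_i$. Since $\vec g_i$ and $\vec g_{i+m}$ both lie at radius $\sigma_n$ and subtend the angle $2m\pi/n$, this distance is $\sigma_n\cos(m\pi/n)$; using $m\pi/n=\tfrac{\pi}{2}-\tfrac{\pi}{2n}$ it equals $\sigma_n\sin(\tfrac{\pi}{2n})$. Dividing the apothem by $\cos(\pi/n)$ gives the circumradius $\sigma_n\sec(\pi/n)\sin(\tfrac{\pi}{2n})=\sigma_n r_n\sin(\tfrac{\pi}{2n})$. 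Recalling from the characterization of the level sets that $B^n_{e/o}(\rho)$ is precisely the regular $n$-gon of circumradius $\rho$ in the corresponding orientation (indeed $\eta^n_{e/o}(\vec x)=\no{\vec x}$ along the vertex directions, which are the directions minimizing $\eta^n_{e/o}$), I would conclude $C_m=B^n_e(\sigma_n r_n\sin(\tfrac{\pi}{2n}))$ for even $m$ and $C_m=B^n_o(\sigma_n r_n\sin(\tfrac{\pi}{2n}))$ for odd $m$. Combining with the first step, the condition $\vec a_x\in C_m$ for all $x$ becomes $\eta^n_{e}(\vec a_x)\le\sigma_n r_n\sin(\tfrac{\pi}{2n})$ for even $m$ and $\eta^n_o(\vec a_x)\le\sigma_n r_n\sin(\tfrac{\pi}{2n})$ for odd $m$, which is the claim.

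I expect the main obstacle to be the geometric bookkeeping in the size computation: making sure the level-set parameter is the circumradius (not the apothem) and that the apothem is correctly read off as the centre-to-chord distance of $L_i$, together with correctly pairing the parity of $m$ with $\eta^n_e$ versus $\eta^n_o$. The regularity of $C_m$ and the identity of its vertices are already delivered by Lemma~\ref{lemma:polygon_extreme} and the orientation analysis, so the remaining effort is essentially trigonometric.
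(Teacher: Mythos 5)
Your proof is correct, and it reaches the key identification $C_m = B^n_{e/o}\left(\sigma_n r_n \sin\left(\frac{\pi}{2n}\right)\right)$ by a genuinely different route than the paper. Both arguments share the same skeleton—reduction to planar containment in $C_m$ via Prop.~\ref{prop:polygon_T3_cone} and Lemma~\ref{lemma:polygon-triangle}, and identification of the vertices of $C_m$ as the points $x_{i,1}$ via Lemma~\ref{lemma:polygon_extreme}—but the computational core differs. The paper computes the coordinates of $x_{k,1}$ explicitly through heavy trigonometric identities, reads off $\no{\vec{x}_{k,1}} = \sigma_n r_n \sin\left(\frac{\pi}{2n}\right)$, and then evaluates $\eta^n_e(\vec{x}_{k,1})$ and $\eta^n_o(\vec{x}_{k,1})$ as maxima of sines, determining by inspection of where the maxima are attained that $\eta^n_e$ matches the norm when $m=2l$ and $\eta^n_o$ when $m=2l+1$; the parity pairing thus falls out analytically, with no appeal to the orientation discussion. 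You instead avoid explicit coordinates entirely: the $2\pi/n$-rotational symmetry of the family of triangles makes $C_m$ a regular $n$-gon whose edges lie on the lines through the $L_i$, the apothem is the chord distance $\sigma_n\cos\left(\frac{m\pi}{n}\right) = \sigma_n \sin\left(\frac{\pi}{2n}\right)$, division by $\cos\left(\frac{\pi}{n}\right)$ gives the circumradius $\sigma_n r_n \sin\left(\frac{\pi}{2n}\right)$, and the parity pairing is imported from the orientation discussion preceding the proposition, combined with the facts that the outer polygon is a level set of $\eta^n_o$ and that $B^n_{e/o}(\rho)$ is the regular $n$-gon of circumradius $\rho$ in the corresponding orientation. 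Your version is shorter, less error-prone, and geometrically more transparent about \emph{why} the bound is what it is; the price is that it leans on the informal symmetry-counting argument for the orientation (a vertex of $C_m$ pointing at a vertex of the outer polygon for odd $m$, at an edge midpoint for even $m$), which the paper's proof does not need. To make your argument fully self-contained you would only need to spell out two routine facts: (i) a convex set invariant under rotation by $2\pi/n$ whose boundary consists of exactly $n$ edges is a regular $n$-gon centered at the rotation point, and (ii) two regular $n$-gons with common center, circumradius, and vertex directions coincide.
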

\begin{proof}
By Prop. \ref{lemma:polygon-triangle} it follows that $\A\in \trivial_3$ if and only if $(\vec{a}_x,\sigma_n)^T \in C_n$ for all $x \in \Omega_A$, and from Lemma \ref{lemma:polygon_extreme} we know that the $x_{i,1}=(\vec{x}_{i,1},\sigma_n)^T$ are the extreme points of $C_n$. Thus, if we show that $\no{\vec{x}_{i,1}} = \eta^n_{e/o}(\vec{x}_{i,1})=\sigma_{n} r_{n} \sin\left(\frac{\pi}{2n}\right)$, it follows that $C_n = \left\lbrace (\vec{x},\sigma_n)^T \in \real^3 \ | \ \eta^n_{e/o}(\vec{x}) \leq  \sigma_{n} r_{n} \sin\left(\frac{\pi}{2n}\right) \right\rbrace $ which will prove the claim. 

From $x_{i,j} = L_i \cap L_{i+j}$ we have that $x_{i,1} = \lambda_i g_i + (1- \lambda_1)g_{i+m}$, where $\lambda_1 = 1-\frac{1}{2} r_n = \frac{3 \sigma_n-1}{2 \sigma_n}$. By using (rather a lot of) trigonometric identities we find that 
\begin{align*}
x_{k,1} &= 
\begin{pmatrix}
- \dfrac{  \sin\left(\frac{ \pi}{2 n}\right)}{1+ \cos\left(\frac{\pi}{n}\right) }  \sin\left( \dfrac{(4k+1)\pi}{2n}\right)\\
\dfrac{  \sin\left(\frac{ \pi}{2 n}\right)}{1+ \cos\left(\frac{\pi}{n}\right) }  \cos\left( \dfrac{(4k+1)\pi}{2n}\right)\\
\sigma_n
\end{pmatrix},
\end{align*}
so that 
\begin{equation*}
\vec{x}_{k,1} = \sigma_n r_n  \sin\left(\frac{ \pi}{2 n}\right)  \begin{pmatrix}
- \sin\left( \dfrac{(4k+1)\pi}{2n}\right)\\
 \cos\left( \dfrac{(4k+1)\pi}{2n}\right)
\end{pmatrix},
\end{equation*}
from which it is easy to see that $\no{\vec{x}_{k,1}}= \sigma_n r_n  \sin\left(\frac{ \pi}{2 n}\right) $ for all $k \in \{1, \ldots,n\}$.

We also see that (the simplified expressions of) $\eta^n_e(\vec{x}_{k,1})$ and $\eta^n_o(\vec{x}_{k,1})$ then read as
\begin{equation}\label{eq:T3_eta_e}
\eta^n_e(\vec{x}_{k,1}) = \sigma_n r_n^2 \sin\left(\frac{ \pi}{2 n}\right) \max_{j \in \{1, \ldots,n\}}  \sin\left( \dfrac{(4j-4k-1)\pi}{2n}\right)
\end{equation}
and 
\begin{equation}\label{eq:T3_eta_o}
\eta^n_o(\vec{x}_{k,1}) = \sigma_n r_n^2 \sin\left(\frac{ \pi}{2 n}\right) \max_{j \in \{1, \ldots,n\}}  \sin\left( \dfrac{(4j-4k-3)\pi}{2n}\right).
\end{equation}
In both cases the maximum is attained when the expression inside the sine is closest to $\pi/2$. Now depending on the parity of $m$, this happens for different values of $j$ resulting in different expressions. For $m=2l$ for some $l \in \nat$, we find that the maximum in Eq. \eqref{eq:T3_eta_e} is attained for $j\in \{k+l,k+l+1\}$ and similarly the maximum in Eq. \eqref{eq:T3_eta_o} is attained for $j=k+l+1$ so that for this case we have
\begin{align*}
\eta^{4l+1}_e(\vec{x}_{k,1}) &= \sigma_{4l+1} r_{4l+1}^2 \sin\left(\frac{ \pi}{2 (4l+1)}\right)  \sin\left( \dfrac{(4l-1)\pi}{2(4l+1)}\right) \\
&= \sigma_{4l+1} r_{4l+1} \sin\left(\frac{ \pi}{2 (4l+1)}\right)  = \no{\vec{x}_{k,1}}.
\end{align*}

However, for $m=2l+1$ for some $l \in \nat$ we have that the maximum in Eq. \eqref{eq:T3_eta_e} is attained for $j=k+l+1$ and similarly the maximum in Eq. \eqref{eq:T3_eta_o} is attained for $j \in \{k+l,k+l+1\}$ so that for this case we have
\begin{align*}
\eta^{4l+3}_o(\vec{x}_{k,1}) &= \sigma_{4l+3} r_{4l+3}^2 \sin\left(\frac{ \pi}{2 (4l+3)}\right)  \sin\left( \dfrac{(4l+1)\pi}{2(4l+3)}\right) \\
&= \sigma_{4l+3} r_{4l+3}\sin\left(\frac{ \pi}{2 (4l+3)}\right) = \no{\vec{x}_{k,1}}.
\end{align*}
\end{proof}

\subsection{Noise content}
The noise content $w(\A; \mathcal{N})$ of an observable $\A \in \obs(\state)$ on a state space $\state$ with respect to a noise set $\mathcal{N} \subset \obs(\state)$ is defined \cite{FilippovHeinosaariLeppajarvi-GPTcompatibility} as 
\begin{align*}
w(\A; \mathcal{N}) = \sup \{ \lambda \in [0,1] \ & | \ \A = \lambda \N +(1- \lambda) \B {\ \rm for \ some \ } \N \in \mathcal{N} \ {\rm and } \ \B \in \obs(\state) \}.
\end{align*}
When describing noisy observables, the noise is most commonly added externally to an observable, but the noise content describes the amount of noise that an observable already has intrinsically. Usually the noise set is taken to be the set of trivial observables $\trivial_1$. 

Examining Prop. \ref{prop:polygon_T3} more closely, the set $\trivial_3$ seems to be quite noisy in the sense that the effects of observables in $\trivial_3$ are scattered quite closely to the trivial effects on the line segment $\conv{\{o,u\}}$. Our aim is to show this remark quantitatively by showing that an observable that is compatible with every other observable must have a quite high noise content with respect to the trivial observables. We also show that an observable with a high enough noise content is indeed compatible with every other observable on odd polygon state spaces.

For the noise set $\mathcal{N}= \trivial_1$, the noise content of an observable $\A \in \obs(\state)$ takes a rather simple form \cite{FilippovHeinosaariLeppajarvi-GPTcompatibility}:
\begin{equation}
w(\A;\trivial_1) = \sum_{x \in \Omega_\A} \min_{s \in \state} \A_x(s),
\end{equation}
and furthermore if the state space is a polytope (as is in the case of polygons), we have that
\begin{equation}
w(\A;\trivial_1) = \sum_{x \in \Omega_\A} \min_{s \in \state^{ext}} \A_x(s),
\end{equation}
where $\state^{ext}$ denotes the set of extreme points of $\state$.

We start by making a connection between $ \min_{s \in \state} \A_x(s)$ and $\eta^n_o(\vec{a}_x)$. As before, for each effect $\A_x$ there exists $\alpha_x >0$ such that $\A_x=\alpha_x a_x$ for some  $a_x = (\vec{a}_x, \sigma_{n})^T$, where $\vec{a}_x \in \real^2$. Since $a_x \in \conv{\{g_1, \ldots,g_n\}}$ for all $x \in \Omega_\A$, we have that for all $x \in \Omega_\A$ there exists $\lambda_x \in [0,1]$ such that $a_x = \lambda_x h_x +(1-\lambda_x) \sigma_n u$ for some 
\begin{align*}
h_x &\in \partial \conv{\{g_1, \ldots,g_n\}} = \{ (\vec{g},\sigma_n)^T \in \conv{\{g_1, \ldots,g_n\}} \ | \ \eta^n_o(\vec{g}) = \sigma_n \}.
\end{align*}
We note that since $h_x$ lies on the boundary of the convex hull of the indecomposable effects, for all $x \in \Omega_\A$, there exists $i_x \in \{1, \ldots,n\}$ such that $h_x \in \conv{\{g_{i_x}, g_{i_x+1}\}}$. Since $g_{i_x}$ and $g_{i_x+1}$ are indecomposable, by Prop. \ref{prop:sim-indecomposableEffects} they give zero for some maximal faces $G_{i_x}$ and $G_{i_x+1}$ of $\state_n$. Furthermore, it is easy to see that they must be adjacent maximal faces so that there exists an extreme state $s_{i_x} \in \state_n$ such that $h_x(s_{i_x})=0$. Thus,
\begin{align*}
\min_{ s \in \state_n^{ext}} \A_x(s) &= \alpha_x \min_{s \in \state_n^{ext}} \left[ \lambda_x h_x(s) + (1-\lambda_x)\sigma_n u(s) \right] = \alpha_x \lambda_x \min_{s \in \state_n^{ext}} h_x(s) + \alpha_x (1-\lambda_x)\sigma_n \\
&= \alpha_x \lambda_x h_x(s_{i_x}) + \alpha_x (1-\lambda_x)\sigma_n  = \alpha_x (1-\lambda_x)\sigma_n
\end{align*}
for all $x \in \Omega_\A$. If we denote $h_x = ( \vec{h}_x, \sigma_n)^T$, we then see that $\vec{a}_x = \lambda_x \vec{h}_x$ and
\begin{align*}
\eta^n_o(\vec{a}_x) &= \lambda_x \eta^n_o(\vec{h}_x) = \lambda_x \sigma_n = \sigma_n - \dfrac{1}{\alpha_x} \min_{s \in \state_n^{ext}} \A_x(s)
\end{align*}
for all $x \in \Omega_\A$. Thus, $\min_{s \in \state^{ext} } \A_x(s) = \alpha_x \left[\sigma_n - \eta^n_o(\vec{a}_x) \right]$ for all $x \in \Omega_\A$.

We can now show the following.
\begin{proposition}
Let $\A \in \obs(\state_{n})$ be an observable on an odd polygon state space $\state_{n}$ with effects $\A_x = \alpha_x (\vec{a}_x,\sigma_n)$ for all $x \in \Omega_\A$. If $\A \in \trivial_3$, then 
\begin{equation}\label{eq:polygon_noise1}
w(\A;\trivial_1) \geq 1- r_n \sin\left(\frac{\pi}{2n}\right)
\end{equation}
if $n=4l+3$ for some $l \in \nat$, or
\begin{equation}\label{eq:polygon_noise2}
w(\A;\trivial_1) \geq 1- r^2_n \sin\left(\frac{\pi}{2n}\right)
\end{equation}
if $n=4l+1$ for some $l \in \nat$.
\end{proposition}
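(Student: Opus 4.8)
The plan is to feed the constraint from membership in $\trivial_3$ (Prop.~\ref{prop:polygon_T3}) into the explicit formula for the noise content that was just derived. Recall that for a polytope state space the noise content with respect to $\trivial_1$ equals $w(\A;\trivial_1) = \sum_{x \in \Omega_\A} \min_{s \in \state^{ext}} \A_x(s)$, and the computation immediately preceding this proposition gives $\min_{s \in \state_n^{ext}} \A_x(s) = \alpha_x\big[\sigma_n - \eta^n_o(\vec{a}_x)\big]$ for each $x$. Substituting, I obtain $w(\A;\trivial_1) = \sum_x \alpha_x \sigma_n - \sum_x \alpha_x \eta^n_o(\vec{a}_x)$. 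The first sum is pinned down by normalization: since $\A_x = \alpha_x(\vec{a}_x,\sigma_n)^T$ and $\sum_x \A_x = u = (0,0,1)^T$, the third coordinate yields $\sum_x \alpha_x \sigma_n = 1$. Hence $w(\A;\trivial_1) = 1 - \sum_x \alpha_x \eta^n_o(\vec{a}_x)$, so everything reduces to bounding $\eta^n_o(\vec{a}_x)$ from above for each outcome $x$.

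First I would treat the case $n = 4l+3$ (that is, $m = 2l+1$ odd). Here Prop.~\ref{prop:polygon_T3} states the membership condition directly in terms of $\eta^n_o$, namely $\eta^n_o(\vec{a}_x) \leq \sigma_n r_n \sin\left(\frac{\pi}{2n}\right)$ for all $x$. Plugging this in and using $\sum_x \alpha_x \sigma_n = 1$ gives
\begin{equation*}
w(\A;\trivial_1) \;\geq\; 1 - \sum_x \alpha_x \sigma_n r_n \sin\!\left(\tfrac{\pi}{2n}\right) \;=\; 1 - r_n \sin\!\left(\tfrac{\pi}{2n}\right),
\end{equation*}
which is exactly \eqref{eq:polygon_noise1}.

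The case $n = 4l+1$ (that is, $m = 2l$ even) requires one extra step, and this is the only genuinely non-automatic point. For even $m$, Prop.~\ref{prop:polygon_T3} phrases the $\trivial_3$ constraint in terms of $\eta^n_e$ rather than $\eta^n_o$, whereas the noise-content formula is written in terms of $\eta^n_o$. To bridge the two I would invoke the connection inequality \eqref{eq:eta_connection}, which gives $\eta^n_o(\vec{a}_x) \leq r_n\, \eta^n_e(\vec{a}_x)$. Combining this with the bound $\eta^n_e(\vec{a}_x) \leq \sigma_n r_n \sin\left(\frac{\pi}{2n}\right)$ yields $\eta^n_o(\vec{a}_x) \leq \sigma_n r_n^2 \sin\left(\frac{\pi}{2n}\right)$, and substituting as before produces $w(\A;\trivial_1) \geq 1 - r_n^2 \sin\left(\frac{\pi}{2n}\right)$, matching \eqref{eq:polygon_noise2}.

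Thus the proof is essentially a two-line estimate once the right ingredients are assembled; the only thing demanding care is the parity bookkeeping, i.e.\ recognizing that the $\trivial_3$ criterion is stated via $\eta^n_e$ for $n=4l+1$ and via $\eta^n_o$ for $n=4l+3$, and that the mismatch in the former case is precisely what the inequality \eqref{eq:eta_connection} repairs, accounting for the extra factor of $r_n$ in \eqref{eq:polygon_noise2}. I would not expect any serious obstacle beyond keeping these cases straight and confirming the normalization identity $\sum_x \alpha_x \sigma_n = 1$.
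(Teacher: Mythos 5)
Your proposal is correct and follows essentially the same route as the paper: both use the identity $\min_{s \in \state^{ext}} \A_x(s) = \alpha_x\left(\sigma_n - \eta^n_o(\vec{a}_x)\right)$ together with the normalization $\sum_{x} \alpha_x = 1/\sigma_n$, apply the $\eta^n_o$ bound from Prop.~\ref{prop:polygon_T3} directly when $n=4l+3$, and repair the parity mismatch via Eq.~\eqref{eq:eta_connection} (costing the extra factor $r_n$) when $n=4l+1$. No gaps.
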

\begin{proof}
As was established above, we have that $\min_{s \in \state^{ext}} \A_x(s) = \alpha_x (\sigma_n -\eta^n_o(\vec{a}_x))$. 

For $n=4l+3$, we have from Prop. \ref{prop:polygon_T3} that $\eta^n_o(\vec{a}_x) \leq r_n \sigma_n \sin\left(\frac{\pi}{2n}\right)$ for all $x \in \Omega_\A$ so that
\begin{align*}
w(\A;\trivial_1) &= \sum_{x \in \Omega_\A} \min_{s\in \state^{ext}} \A_x(s) = \sum_{x \in \Omega_\A} \alpha_x(\sigma_n - \eta^n_o(\vec{a}_x)) \\
& \geq \sum_{x \in \Omega_\A} \alpha_x \sigma_n \left( 1- r_n  \sin\left(\frac{\pi}{2n}\right) \right) \\
&= 1- r_n  \sin\left(\frac{\pi}{2n}\right),
\end{align*}
where on the last line we have used the fact that $\sum_{x \in \Omega_\A} \alpha_x = 1/\sigma_n$ which follows from the normalization of $\A$.

For $n=4l+1$, we have from Prop. \ref{prop:polygon_T3} that $\eta^n_e(\vec{a}_x) \leq r_n \sigma_n \sin\left(\frac{\pi}{2n}\right)$ for all $x \in \Omega_\A$. From Eq. \eqref{eq:eta_connection} we get that $\eta^n_o(\vec{a}_x) \leq r_n \eta^n_e(\vec{a}_x)$ for all $x \in \Omega_\A$ so that from similar calculation as above we get that $w(\A;\trivial_1) \geq 1- r^2_n \sin\left(\frac{\pi}{2n}\right)$.
\end{proof}

The lower bounds of the noise content from the previous proposition for the first few polygons are presented in Table \ref{table:noise_content}. We see that for $n=3$, i.e., when the state space is classical, Eq. \eqref{eq:polygon_noise1} gives the trivial lower bound zero, but already for the pentagon ($n=5$) Eq. \eqref{eq:polygon_noise2} shows that the noise content of an observable in $\trivial_3$ must be more than $1/2$. We see that as the number of vertices in the polygons increase, so does the noise content of observables in $\trivial_3$ for both Eq. \eqref{eq:polygon_noise1} and \eqref{eq:polygon_noise2}. In the limit where $n \rightarrow \infty$ the right hand sides (R.H.S.) of both equations give the limit 1, so that the observables in $\trivial_3$ become trivial. Indeed, as the number of vertices approaches infinity, the state space becomes shaped like a disc, which is seen to be a point-symmetric state space so that by Cor. \ref{cor:point-symmetric} we have $\trivial_1 = \trivial_3$.

\begin{table}
\centering
\begin{tabular}{c|c|c|c|c|c|c|c|c|c}
$n$ & 3 & 5 & 7 & 9 & 11 & 13 & 15 & $\cdots$ & $\rightarrow \infty$  \\ \hline
R.H.S. of \eqref{eq:polygon_noise1} & \ \ \ 0 \ \ \ & -- & 0.753 & -- & 0.852 & -- & 0.893  & \ \  $\cdots$ \ \  & \ \  $ \rightarrow 1$  \ \ \\
R.H.S. of \eqref{eq:polygon_noise2} & -- & 0.528 & -- & 0.803 & -- & 0.872 & -- & $\cdots$ & $ \rightarrow 1$
\end{tabular}
\caption{The lower bounds of Eq. \eqref{eq:polygon_noise1} and \eqref{eq:polygon_noise2} for the noise contents of observables in $\trivial_3$ for the first few odd polygons and the limit $n \rightarrow \infty$.}
\label{table:noise_content}
\end{table}

From the other point of view, we can ask if sufficiently noisy observables are necessarily compatible with every other observable. For that, let us consider the binarizations of an observable $\A \in \obs(\state_n)$, i.e., binary observables $\hat{\A}^{(x)}$ with effects $\hat{\A}^{(x)}_+ = \A_x$ and $\hat{\A}^{(x)}_- = u-\A_x$ for all $x \in \Omega_\A$. The noise content for these binarizations then read as
\begin{align*}
w(\hat{\A}^{(x)};\trivial_1) &= \min_{s \in \state_n^{ext}} \A_x(s) + \min_{s \in \state_n^{ext}} (u-\A_x)(s) = 1+ \min_{s \in \state_n^{ext}} \A_x(s) -\max_{s \in \state_n^{ext}} \A_x(s)
\end{align*}
for all $x \in \Omega_x$.

Denoting the extreme points of the state space $\state_{2m+1}$ by $s_k = (\vec{s}_k,1)^T$, from the definition of $\eta^n_e$ we see that
\begin{align*}
\eta^n_e(\vec{a}_x) &= \max_{k \in \{1, \ldots,n\}} \vec{a}_x \cdot \vec{s}_k = \dfrac{1}{\alpha_x} \max_{k \in \{1, \ldots,n\}} \A_x(s_k) - \sigma_n = \dfrac{1}{\alpha_x} \max_{s \in \state_n^{ext}} \A_x(s) - \sigma_n
\end{align*}
for all $x \in \Omega_\A$. Hence, together with the previous expressions for $\min_{x \in \Omega_\A} \A_x(s)$, we have shown the following for the binarizations $\hat{\A}^{(x)}$ of an observable $\A$:
\begin{equation}
w(\hat{\A}^{(x)};\trivial_1) = 1- \alpha_x \left[ \eta^n_e (\vec{a}_x) + \eta^n_o (\vec{a}_x) \right]
\end{equation}
for all $x \in \Omega_\A$. We can now show that observables that have a high enough noise content are indeed included in $\trivial_3$.

\begin{proposition}
Let $\A \in \obs(\state_n)$ be an observable on an odd polygon state space $\state_n$ with effects $\A_x = \alpha_x (\vec{a}_x,\sigma_n)$ for all $x \in \Omega_\A$. If 
\begin{equation}\label{eq:polygon_noise}
\dfrac{1-w(\A; \trivial_1)}{\min_{x\in \Omega_\A} \alpha_x} \leq  \sin\left(\frac{\pi}{2n} \right),
\end{equation}
then $\A$ is compatible with every other observable on $\state_n$.
\end{proposition}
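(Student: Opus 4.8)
The plan is to reduce membership in $\trivial_3$ to the threshold criterion of Prop.~\ref{prop:polygon_T3}: that proposition says $\A \in \trivial_3$ precisely when $\eta^n_o(\vec{a}_x) \le \sigma_n r_n \sin\left(\frac{\pi}{2n}\right)$ for every $x$ in the case $n = 4l+3$, and when $\eta^n_e(\vec{a}_x) \le \sigma_n r_n \sin\left(\frac{\pi}{2n}\right)$ for every $x$ in the case $n = 4l+1$. So it suffices to bound, for each outcome $x$, the single relevant function $\eta^n_o$ or $\eta^n_e$ evaluated at $\vec{a}_x$ by that threshold, starting only from the global hypothesis \eqref{eq:polygon_noise} on $w(\A;\trivial_1)$.

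The bridge is the binarization $\hat{\A}^{(x)}$. First I would record that $w(\A;\trivial_1) \le w(\hat{\A}^{(x)};\trivial_1)$ for every $x$: the binarization is a post-processing of $\A$ (all outcomes other than $x$ are merged into one), and post-processing a trivial observable again yields a trivial observable, so the noise content with respect to $\trivial_1$ cannot decrease under post-processing. Combining this with the expression for the binarization noise content derived just above the statement, namely $w(\hat{\A}^{(x)};\trivial_1) = 1 - \alpha_x\left[\eta^n_e(\vec{a}_x) + \eta^n_o(\vec{a}_x)\right]$, gives the key inequality $\alpha_x\left[\eta^n_e(\vec{a}_x) + \eta^n_o(\vec{a}_x)\right] \le 1 - w(\A;\trivial_1)$ valid for all $x$.

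The final step splits on the parity of $m$ and uses the comparison \eqref{eq:eta_connection}. For $n = 4l+3$ I want a bound on $\eta^n_o$; from $\eta^n_o(\vec{a}_x) \le r_n\, \eta^n_e(\vec{a}_x)$ I get $\eta^n_e(\vec{a}_x) + \eta^n_o(\vec{a}_x) \ge (1 + r_n^{-1})\eta^n_o(\vec{a}_x) = (\sigma_n r_n)^{-1}\eta^n_o(\vec{a}_x)$, using $\sigma_n = (1+r_n)^{-1}$. Hence $\alpha_x (\sigma_n r_n)^{-1}\eta^n_o(\vec{a}_x) \le 1 - w(\A;\trivial_1)$, which rearranges to $\eta^n_o(\vec{a}_x) \le \sigma_n r_n\, (1-w(\A;\trivial_1))/\alpha_x \le \sigma_n r_n\, (1-w(\A;\trivial_1))/\min_x \alpha_x \le \sigma_n r_n \sin\left(\frac{\pi}{2n}\right)$ by the hypothesis \eqref{eq:polygon_noise}. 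The case $n = 4l+1$ is symmetric, using $\eta^n_e(\vec{a}_x) \le r_n\, \eta^n_o(\vec{a}_x)$ to obtain $\eta^n_e(\vec{a}_x) + \eta^n_o(\vec{a}_x) \ge (\sigma_n r_n)^{-1}\eta^n_e(\vec{a}_x)$ and therefore $\eta^n_e(\vec{a}_x) \le \sigma_n r_n \sin\left(\frac{\pi}{2n}\right)$. In either case Prop.~\ref{prop:polygon_T3} yields $\A \in \trivial_3$.

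The step I expect to be the crux is recognising that one must pass through the binarizations rather than through the single-effect minima directly. If one only used $w(\A;\trivial_1) = \sum_x \min_{s}\A_x(s) = \sum_x \alpha_x\left(\sigma_n - \eta^n_o(\vec{a}_x)\right)$ and simply discarded the other non-negative summands, one would obtain merely $\eta^n_o(\vec{a}_x) \le \sin\left(\frac{\pi}{2n}\right)$, which is too weak by the factor $\sigma_n r_n < 1$ and would \emph{not} place $\A$ in $\trivial_3$. The binarization noise content instead captures both $\min_s \A_x(s)$ and $\max_s \A_x(s)$, that is, both $\eta^n_o(\vec{a}_x)$ and $\eta^n_e(\vec{a}_x)$, and it is precisely the comparison \eqref{eq:eta_connection} between these two functions that manufactures the sharp factor $\sigma_n r_n$ needed to meet the threshold of Prop.~\ref{prop:polygon_T3}.
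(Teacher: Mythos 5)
Your proposal is correct and takes essentially the same route as the paper's proof: both pass through the binarization noise content $w(\hat{\A}^{(x)};\trivial_1)=1-\alpha_x\left[\eta^n_e(\vec{a}_x)+\eta^n_o(\vec{a}_x)\right]$, the monotonicity $w(\hat{\A}^{(x)};\trivial_1)\geq w(\A;\trivial_1)$ under post-processing, and the comparison \eqref{eq:eta_connection} to produce exactly the factor $\sigma_n r_n=(1+1/r_n)^{-1}$, before invoking Prop.~\ref{prop:polygon_T3}. The only differences are cosmetic (you split the two parities explicitly where the paper uses the combined $e/o$ notation, and you apply the monotonicity of the noise content before rather than after the rearrangement), and your closing remark correctly identifies why the detour through binarizations is needed.
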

\begin{proof}
From the previous expression for the noise contents of the binarizations $\hat{\A}^{(x)}$ of $\A$, and by using Eq. \eqref{eq:eta_connection}, we have that
\begin{align*}
\eta^n_{e/o}(\vec{a}_x) &= \dfrac{1-w(\hat{\A}^{(x)};\trivial_1)}{\alpha_x}- \eta^n_{o/e}(\vec{a}_x)  \leq \dfrac{1-w(\hat{\A}^{(x)};\trivial_1)}{\alpha_x}- \dfrac{\eta^n_{e/o}(\vec{a}_x) }{r_n}.
\end{align*}
Since $\trivial_1$ is closed under post-processing and since $\hat{\A}^{(x)}$ is clearly a post-processing of $\A$ for each $x \in \Omega_x$, we have by the basic properties of the noise content \cite{FilippovHeinosaariLeppajarvi-GPTcompatibility} that $w(\hat{\A}^{(x)}; \trivial_1) \geq w(\A;\trivial_1)$ for all $x \in \Omega_\A$. Thus, by rearranging the previous expression we have that
\begin{align*}
\eta^n_{e/o}(\vec{a}_x) & \leq \left(1+ \dfrac{1}{r_n} \right)^{-1} \dfrac{1 - w(\hat{\A}^{(x)}; \trivial_1)}{\alpha_x}  =\sigma_n r_n \left( \dfrac{1 - w(\hat{\A}^{(x)}; \trivial_1)}{\alpha_x} \right)
 \\
 & \leq \sigma_n r_n \left( \dfrac{1 - w(\A; \trivial_1)}{\alpha_x} \right)  \leq \sigma_n r_n \left( \dfrac{1 - w(\A; \trivial_1)}{\min_{ x \in \Omega_\A} \alpha_x} \right)
\end{align*}
for all $x \in \Omega_\A$, where we have noticed that $(1+1/r_n)^{-1} = \sigma_n r_n$. Now, if Eq. \eqref{eq:polygon_noise} holds, from Prop. \ref{prop:polygon_T3} it then follows that $\A \in \trivial_3$.
\end{proof}

\section{Summary}
We have considered the no-information-without-disturbance and no-free-information principles in general probabilistic theories. We defined three sets of observables that correspond trivial measurements ($\trivial_1$), measurements that can be performed without disturbing the system ($\trivial_2$) and measurements that can be performed jointly with any other measurement ($\trivial_3$). Although in quantum theory these sets are seen to coincide, we show that in general only the inclusions $\trivial_1 \subseteq \trivial_2 \subseteq \trivial_3$ hold. This means that there are operationally valid theories -- even other than classical theories -- where one can get non-trivial information about the system without disturbing it and where one can always choose to measure a non-trivial observable when performing any other measurement. Some of these theories were illustrated by examples.

We continued to characterize the sets $\trivial_2$ and $\trivial_3$. We showed that observable is non-disturbing, i.e., in $\trivial_2$ if and only if the state space can be represented as a direct sum of state spaces such that the observable is constant on the summands. The result can be interpreted that a non-disturbing observable is only able to give the somewhat classical information about which state space of the direct sum we are using in our system. However, as noted, it does not mean that the observable should be trivial or the state space classical. One example showing this is the superselected quantum theory where the nontrivial non-disturbing observable gives the classical information about which quantum system (or which superselection sector) we are using.

As for $\trivial_3$, we showed that observable is compatible with every other observable if and only if it can be post-processed from every simulation irreducible observable. In the previous work \cite{FilippovHeinosaariLeppajarvi-simulations}, the simulation irreducible observables were seen to be the minimal set of observables from which any other observable can be obtained by the means of classical manipulations, i.e., by mixing the observables and/or by post-processing their classical measurement outcomes. Thus, the result shows that to see if an observable is compatible with every other observable, it suffices only to consider the compatibility with the simulation irreducible observables, which is a much simpler task. This was demonstrated with the help of regular polygon theories, where the set $\trivial_3$ was characterized. Furthermore, it was shown that even though there are non-trivial observables in $\trivial_3$ for polygons, also those observables must be noisy, i.e., they have a substantial amount of some trivial observables in them with respect to the convex noise robustness.

\begin{acknowledgments}
The authors are thankful to Anna Jen\v{c}ov\'{a} for drawing their attention to \cite{BarnumWilcze-infProc} and to Tom Bullock for useful comments on
the manuscript. LL acknowledges financial support from University of Turku Graduate School (UTUGS). MP was supported by grant VEGA 2/0069/16 and by the grant of the Slovak Research and Development Agency under contract APVV-16-0073. MP acknowledges that this research was done during a PhD study at Faculty of Mathematics, Physics and Informatics of the Comenius University in Bratislava.
This work was performed as part of the Academy of Finland Centre of Excellence program (Project No. 312058).
\end{acknowledgments}

\bibliographystyle{unsrtnat}
\bibliography{niwd-citations}

\appendix
\section{Some results on the structure of GPTs} \label{appendix:convexStructure}
In this appendix we will prove several minor results about the structure of general probabilistic theories that were needed in the calculations. We will denote the interior of the set $\state$ by $\intr(\state)$.

\begin{definition}
Let $\state$ be a state space and let $E \subset \state$ be a convex subset. We say that $E$ is a face of $\state$ if $z \in E$ and $z = \lambda x + (1-\lambda)y$ for some $x, y \in \state$ and $\lambda \in [0, 1]$ implies $x, y \in E$.
\end{definition}

\begin{definition}
Let $E \subset \state$, we say that $E$ is a maximal face if $E$ is a face and for any $x \in \state \setminus E$ we have $\conv{ E \cup \{ x \} } \cap \intr (\state) \neq \emptyset$.
\end{definition}
If $\state$ is $d$-dimensional and has a finite number of extreme points, then maximal faces are the $(d-1)$-dimensional faces of $\state$. From a geometrical perspective their special properties all follow from the requirement that $\conv{ E \cup \{ x \} } \cap \intr (\state) \neq \emptyset$.

\begin{lemma} \label{lemma:sim-maxFaceIndecomposable}
Let $\state$ be a state space, let $e \in \effect$ and let $E_0 = \{ x \in \state : e(x) = 0 \}$. If $E_0$ is a maximal face, then $e$ is indecomposable.
\end{lemma}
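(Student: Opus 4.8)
The goal is to show that $e$ generates an extreme ray of the dual cone $\mathcal{V}^*_+$, which (by the characterization recalled in the preliminaries) is equivalent to $e$ being indecomposable. So the plan is to take an arbitrary decomposition $e = e_1 + e_2$ with $e_1, e_2 \in \effect$ and prove that $e_1$, and hence $e_2 = e - e_1$, must be a nonnegative scalar multiple of $e$. The first, routine step is to record the pointwise constraints: since $e_1, e_2 \geq 0$ on $\state$ we have $0 \leq e_1 \leq e$ on $\state$. For any $x \in E_0$ the squeeze $0 \leq e_1(x) \leq e(x) = 0$ then forces $e_1(x) = 0$, so both $e$ and $e_1$ vanish on all of $E_0$, and therefore on the linear span $W := \operatorname{span}(E_0) \subseteq \mathcal{V}$.

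The crux is to pin down the dimension of $W$, and this is the only place where maximality is really used. Because $E_0$ is a maximal face it is $(d-1)$-dimensional, as noted in the remark following the definition of maximal face; in particular $E_0$ is a proper face, so $e \neq o$. Thus $\aff{E_0}$ is a hyperplane inside the $d$-dimensional $\aff{\state}$, and since $E_0$ lies in the affine hyperplane $\{u = 1\}$ (which misses the origin) its linear span has dimension $(d-1)+1 = d$, i.e. $\dim W = d$. Consequently the annihilator $W^0 = \{f \in \mathcal{V}^* : f|_W = 0\}$ has dimension $(d+1) - d = 1$. Both $e$ and $e_1$ lie in $W^0$, and since $e \neq 0$ it spans this line, so $e_1 = \lambda e$ for some scalar $\lambda$. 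Evaluating at a state with $e > 0$ gives $\lambda \geq 0$ (from $e_1 \geq 0$) and $\lambda \leq 1$ (from $e_1 \leq e$), whence $e_2 = (1-\lambda)e$ is also a nonnegative multiple of $e$; this establishes indecomposability.

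I expect the only genuine obstacle to be the codimension-one claim $\dim W = d$, i.e. that a maximal face has a hyperplane as its affine hull; once this holds the annihilator is forced to be a single line and the rest is the squeeze plus elementary linear algebra. A more self-contained route to the same conclusion, which makes the role of maximality explicit without invoking the dimension count directly, is to argue that the supporting hyperplane $\{e_1 = 0\}$ cannot expose a face strictly larger than $E_0$: if some $x$ satisfied $e_1(x) = 0$ but $e(x) > 0$, then by linearity $e_1$ would vanish on the whole segment $\conv{E_0 \cup \{x\}}$, so this set would lie in the supporting hyperplane $\{e_1 = 0\}$ and hence avoid $\intr(\state)$, contradicting maximality of $E_0$. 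Combined with $E_0$ being $(d-1)$-dimensional, this forces $\{e_1 = 0\}$ and $\{e = 0\}$ to coincide as hyperplanes in $\aff{\state}$, and therefore $e_1 \propto e$ once more.
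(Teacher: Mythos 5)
Your overall route is the paper's: show that any summand $e_1$ of $e$ vanishes on $E_0$, then use maximality of $E_0$ to force $e_1 \propto e$. Indeed, your closing ``alternative route'' is essentially the paper's own proof: for $f \leq e$ the paper notes that the zero set $F_0 \supseteq E_0$ is a face; that if $F_0 \supsetneq E_0$ then, picking $x \in F_0 \setminus E_0$, maximality places a point of $\intr(\state)$ inside $\conv{E_0 \cup \{x\}} \subseteq F_0$, forcing $f = o$; and in the remaining case $F_0 = E_0$ it concludes $f = \frac{f(x)}{e(x)}\, e$ on the grounds that a functional vanishing on a maximal face is determined by its value at one further point. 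Your main route via the annihilator of $W = {\rm span}(E_0)$ is a cleaner packaging of that last sentence.

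The gap is exactly where you suspected it, and it is not covered by the remark you cite. That remark (maximal faces are the $(d-1)$-dimensional faces) carries the hypothesis that $\state$ has finitely many extreme points; the lemma carries no such hypothesis. Under the paper's bare definition of maximal face, the claim $\dim W = d$ is false in general: in a strictly convex state space such as a disc, every extreme point $\{x_0\}$ is a maximal face --- every chord from $x_0$ meets the interior --- yet ${\rm span}(\{x_0\})$ is one-dimensional, so $W^0$ is two-dimensional and the annihilator argument stalls. Worse, the lemma itself fails at this level of generality, so no argument can close the gap without an extra hypothesis: let $\state$ be a ``lens'', the intersection of two overlapping discs, and let $x_0$ be one of its two corners. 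Then $E_0 = \{x_0\}$ is a maximal face, and the effect $e$ whose zero set is $E_0$, obtained from a supporting line whose normal lies strictly inside the two-dimensional normal cone at $x_0$, decomposes as $e = e_1 + e_2$ where $e_1, e_2$ are (multiples of) the effects given by the tangent lines of the two discs at $x_0$; neither is proportional to $e$, so $e$ is decomposable. Your proof, like the lemma, becomes correct once one assumes finitely many extreme points --- which is the only setting in which the paper applies it, and, to be fair, the paper's own proof silently leans on the very same unproved fact, hidden in the phrase ``both $e$ and $f$ are uniquely defined by the values $e(x)$ and $f(x)$, because $E_0$ is a maximal face''.
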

\begin{proof}
Let $f \in \effect$, denote $F_0 = \{ x \in \state : f(x) = 0 \}$ and assume $e \geq f$. It follows that we must have $E_0 \subset F_0$ and since $E_0$ is maximal face it follows that either $F_0 = \state$ or $F_0 = E_0$.

If $F_0 = \state$ then $f = 0$. If $F_0 = E_0$, then pick $x \in \state$ such that $x \notin E_0$. Both $e$ and $f$ are uniquely defined by the values $e(x)$ and $f(x)$, because $E_0$ is a maximal face. This implies that we have $f = \frac{f(x)}{e(x)} e$, which shows that $e$ is indecomposable.
\end{proof}

\begin{proposition} \label{prop:sim-indecomposableEffects}
Assume that $\state$ has only a finite number of extreme points. Let $e \in \effect$ and let $E_0 = \{ x \in \state : e(x) = 0 \}$, then $e$ is indecomposable if and only if $E_0$ is a maximal face.
\end{proposition}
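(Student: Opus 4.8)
One implication is already in hand: Lemma \ref{lemma:sim-maxFaceIndecomposable} says that if $E_0$ is a maximal face then $e$ is indecomposable, so the plan is to supply the converse. The cleanest route is not to argue the converse in isolation but to prove a single dimensional criterion from which the whole equivalence drops out. Write the finitely many extreme points of $\state$ as $v_1,\ldots,v_N$ and set $Z=\{\,i : e(v_i)=0\,\}$. Since $e\geq o$ on $\state$ and $e$ is affine, the set where it attains its minimum value $0$ is a face, and concretely $E_0=\conv{\{\,v_i : i\in Z\,\}}$. I claim that $e$ is indecomposable if and only if the linear space $W=\{\,q\in\mathcal{V}^* : q(v_i)=0 \ \forall i\in Z\,\}$ is one-dimensional, i.e. $W=\real e$.

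First I would establish this claim by an elementary perturbation argument. For the easy direction, if $e=e_1+e_2$ is a decomposition with $e_1$ not proportional to $e$, then $0\leq e_1(v_i)\leq e(v_i)=0$ for $i\in Z$ forces $e_1\in W$, whence $\dim W\geq 2$. Conversely, suppose $\dim W\geq 2$ and pick $q\in W$ linearly independent from $e$. Because there are only finitely many extreme points and $e(v_i)>0$ for every $i\notin Z$, one can choose $\epsilon>0$ small enough that $e(v_i)\pm\epsilon q(v_i)\geq 0$ for all $i$ (an equality on $Z$, with slack off $Z$). Since a functional lies in $\mathcal{V}^*_+$ exactly when it is nonnegative on the extreme points, both $e\pm\epsilon q$ lie in $\mathcal{V}^*_+$; moreover $e_1:=\tfrac12(e+\epsilon q)$ and $e_2:=\tfrac12(e-\epsilon q)$ satisfy $0\leq e_i\leq e\leq u$, hence are effects, and $e=e_1+e_2$ with $e_1$ not proportional to $e$. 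Thus $e$ is decomposable, which proves the claim.

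It then remains to count dimensions. The points $\{\,v_i : i\in Z\,\}$ affinely span $\aff{E_0}$, of dimension $\dim E_0$, and since $u(v_i)=1$ none of them is the origin, so their linear span in $\mathcal{V}$ has dimension $\dim E_0+1$. As $W$ is the annihilator of this span, $\dim W=(d+1)-(\dim E_0+1)=d-\dim E_0$, where $d=\dim(\aff{\state})$. Combining this with the claim, $e$ is indecomposable iff $\dim W=1$ iff $\dim E_0=d-1$; and, as recalled above, for a state space with finitely many extreme points the $(d-1)$-dimensional faces are precisely the maximal faces. Hence $e$ is indecomposable if and only if $E_0$ is a maximal face, completing the proof.

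The step I expect to need the most care is the perturbation argument: one must justify that positivity of a functional on $\state$ is tested on the extreme points alone (so that the finitely many inequalities $e(v_i)\pm\epsilon q(v_i)\geq 0$ suffice) and that the summands stay below $u$; both are where the finite-extreme-point hypothesis and the effect constraints are genuinely used. A minor bookkeeping point is the affine-versus-linear span count, that is, the ``$+1$'' coming from $u\equiv 1$ on $\state$. Finally, the degenerate cases $Z=\emptyset$ (so $E_0=\emptyset$) and $e=o$ (so $E_0=\state$) are accommodated directly by the dimension formula and the $(d-1)$-dimensionality characterization of maximal faces, so no separate treatment is required.
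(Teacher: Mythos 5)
Your proof is correct, and it takes a genuinely different route from the paper's. The paper splits the equivalence: the direction (maximal face $\Rightarrow$ indecomposable) is exactly Lemma \ref{lemma:sim-maxFaceIndecomposable} (any effect $f \leq e$ has its zero set containing $E_0$, so by maximality that zero set is $E_0$ or all of $\state$, forcing $f$ proportional to $e$), and the converse is proved by contraposition: if $E_0$ is not maximal, it is contained in some maximal face $F_0$ (citing \cite{Plavala-simplex}), one picks an effect $f$ whose zero set is exactly $F_0$, sets $m = \min\{e(y) : y \text{ extreme},\, y \notin E_0\} > 0$, checks $e \geq m f$ on extreme points, and reads off the non-trivial decomposition $e = mf + (e - mf)$. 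You instead prove a single linear-algebraic criterion --- a nonzero $e$ is indecomposable iff the annihilator $W$ of the extreme points lying in $E_0$ is the line $\real e$ --- with the decomposability half given by the perturbation $e_\pm = \tfrac12(e \pm \epsilon q)$, and then you convert it via the count $\dim W = d - \dim E_0$ and the paper's remark that maximal faces are precisely the $(d-1)$-dimensional faces. Both arguments use finiteness of the extreme points at the same spot (a uniform positive lower bound for $e$ off its zero set: your $\epsilon$, the paper's $m$). What your route buys: it handles both directions in one self-contained argument and avoids the two facts the paper imports without proof (that every face sits inside a maximal face, and that a maximal face is \emph{exposed}, i.e.\ the zero set of some effect --- the paper's choice of $f$ silently assumes this). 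The costs: you lean instead on the facet characterization of maximal faces, which the paper also states without proof, so neither proof is fully self-contained relative to standard polytope theory; and the paper's split keeps the implication (maximal face $\Rightarrow$ indecomposable) valid for arbitrary state spaces, since Lemma \ref{lemma:sim-maxFaceIndecomposable} has no finiteness hypothesis, whereas your unified criterion needs finiteness throughout. One small repair to your dimension count: the reason the linear span of $\{v_i : i \in Z\}$ has dimension $\dim E_0 + 1$ is not that no $v_i$ is the origin, but that their affine hull lies in the hyperplane $\{u = 1\}$ and hence misses the origin; you invoke $u(v_i)=1$, so the right ingredient is present, but the inference as worded is too weak (two nonzero antipodal points have a one-dimensional span).
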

\begin{proof}
Assume that $E_0 = \{ x \in \state : e(x) = 0 \}$ is not a maximal face, then there is a maximal face $F_0$ such that $E_0 \subset F_0$ \cite{Plavala-simplex}. Moreover let $f \in \effect$ be such that $F_0 = \{ x \in \state : f(x) = 0 \}$ and denote $G = \{ y \in \state : y \notin E_0, y \text{ is extreme} \}$ and
\begin{equation*}
m = \min_{y \in G} e(y).
\end{equation*}
Clearly $m > 0$. We will show that $e \geq m f$; let $z \in \state$ be an extreme point, then either $z \in E_0$ or $z \in G$. If $z \in E_0 \subset F_0$, then $e(z) = 0 \geq 0 = m f(z)$. If $z \in G$, then $e(z) \geq m \geq m f(z)$. Since by construction $e \neq \alpha f$ for any $\alpha \in \mathbb{R}$ as that would imply $E_0 = F_0$ it follows that $e$ can not be indecomposable.
\end{proof}

\section{Proof of Prop. \ref{prop:polygon_T3_cone}}
We recall from \cite{FilippovHeinosaariLeppajarvi-simulations} that the extreme simulation irreducible observables are characterized by triangles on the base $\conv{\{g_1, \ldots,g_n\}}$ with vertices from the set $\{g_1, \ldots, g_{2m+1}\}$ such that $\sigma_n u$ is included in the triangles. We show that such triangles are in one-to-one correspondence with the extreme simulation irreducible observables.

To see this, first let $\B \in \obs^{ext}_{irr}(\state_{2m+1})$ so that $\Omega_{\B} = \{1,2,3\}$. Since for all $k\in \{1,2,3\}$ the effects $\B_k$ are indecomposable, for each $k \in \{1,2,3\}$ there exists $0<c_k \leq 1$ and effect $g_{i_k} \in \{g_1,\ldots,g_{2m+1}\}$ such that $\B_k =c_k g_{i_k}$. From the normalization of $\B$ it follows that 
\begin{equation*}
u =  c_1 g_{i_1} +  c_2 g_{i_2} +  c_3 g_{i_3}
\end{equation*}
so that from the $z$--components of the vectors we get a requirement that
$c_1 + c_2 + c_3 = \frac{1}{\sigma_n}$. Thus, if we denote the sum $c_1+c_2+c_3$ by $c$, we see that
\begin{equation}
\sigma_n u = \dfrac{c_1}{c} g_{i_1} + \dfrac{c_2}{c} g_{i_2} + \dfrac{c_3}{c} g_{i_3} 
\end{equation}
which shows that the vertices $\{g_{i_1},g_{i_2},g_{i_3}\}$ form a triangle $\conv{\{g_{i_1},g_{i_2},g_{i_3}\}}$ on the base $\conv{\{g_1, \ldots,g_{2m+1}\}}$ such that $\sigma_n u \in \conv{\{g_{i_1},g_{i_2},g_{i_3}\}}$.

To see the contrary, let $j_1,j_2,j_3$ be any three indices from the set $\{1, \ldots,2m+1\}$ such that $\sigma_n u \in \conv{\{g_{j_1},g_{j_2},g_{j_3}\}}$. Thus, there exists convex coefficients $\tilde{d}_1, \tilde{d}_2, \tilde{d}_3 \in [0,1]$, $\tilde{d}_1+ \tilde{d}_2+ \tilde{d}_3=1$, such that $\sigma_n u = \tilde{d}_1g_{j_1} + \tilde{d}_2 g_{j_2} +\tilde{d}_3 g_{j_3}$. If we denote $d_k = \tilde{d}_k/\sigma_n \in (0,1]$ and $\B'_k = d_k g_{j_k}$ for all $k \in \{1,2,3\}$, we find that $\{\B'_1,\B'_2,\B'_3\}$ is a set of linearly independent indecomposable effects such that $\B'_1+\B'_2+\B'_3=u$, which shows that an observable $\B'$ defined with these effects is an extreme simulation irreducible observable.

Since the set $\conv{\{g_1, \ldots,g_{2m+1}\}}$ is a base for the positive cone of the effects, for each effect $\A_y$ of an observable $\A \in \obs(\state_{2m+1})$ there exists $\alpha_y >0$ and $a_y \in \conv{\{g_1, \ldots,g_{2m+1}\}}$  such that $\A_y = \alpha_y a_y$. Similarly, for each $j\in \{1, \ldots,q_m\}$ we have that $\B^{(j)}_k = c^{(j)}_{k} g^{(j)}_{i^{(j)}_k}$ for some $c^{(j)}_k \in (0,1]$ and $i^{(j)}_k \in \{1, \ldots, n\}$ for all $k \in \{1,2,3\}$.  We then see that in order to characterize the intersection of the cones generated by the extreme simulation irreducible observables, i.e. essentially $\trivial_3$, we need to only consider the intersection of the respective triangles on the base. 

\begin{lemma}\label{lemma:polygon-triangle}
Observable $\A \in \obs(\state_{2m+1})$ with effects $\A_y = \alpha_y a_y$, where we have now $a_y \in \conv{\{g_1, \ldots,g_{2m+1}\}}$ for all $y \in \Omega_\A$, is in $\trivial_3$ if and only if
\begin{equation*}
a_y \in \bigcap_{j=1}^{q_m} \conv{ \left\lbrace g^{(j)}_{i^{(j)}_1}, g^{(j)}_{i^{(j)}_2}, g^{(j)}_{i^{(j)}_3} \right\rbrace } \quad \forall y \in \Omega_\A.
\end{equation*}
\end{lemma}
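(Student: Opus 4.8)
The plan is to reduce the statement to the cone description of $\trivial_3$ from Corollary \ref{cor:T3-cone} and then to pass from cones to triangles using the fact that all the effects involved lie on a common hyperplane. For an odd polygon we have $\obs^{ext}_{irr}(\state_{2m+1}) = \{\B^{(j)}\}_{j=1}^{q_m}$, so Corollary \ref{cor:T3-cone} says that $\A \in \trivial_3$ if and only if $\A_y \in \bigcap_{j=1}^{q_m} \cone{\{\B^{(j)}_1, \B^{(j)}_2, \B^{(j)}_3\}}$ for every $y \in \Omega_\A$. Since each $\B^{(j)}_k = c^{(j)}_k g^{(j)}_{i^{(j)}_k}$ with $c^{(j)}_k > 0$, rescaling a generator by a positive constant leaves its cone unchanged, so $\cone{\{\B^{(j)}_1, \B^{(j)}_2, \B^{(j)}_3\}} = \cone{\{g^{(j)}_{i^{(j)}_1}, g^{(j)}_{i^{(j)}_2}, g^{(j)}_{i^{(j)}_3}\}}$.

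Writing $\A_y = \alpha_y a_y$ with $\alpha_y > 0$, I would then observe that cone membership is invariant under positive rescaling, so $\A_y$ lies in one of these cones exactly when $a_y$ does. Hence $\A \in \trivial_3$ if and only if $a_y \in \bigcap_{j=1}^{q_m} \cone{\{g^{(j)}_{i^{(j)}_1}, g^{(j)}_{i^{(j)}_2}, g^{(j)}_{i^{(j)}_3}\}}$ for all $y \in \Omega_\A$.

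The decisive step is to replace each cone by the corresponding triangle. For this I would use that $a_y \in \conv{\{g_1, \ldots, g_{2m+1}\}}$ has third coordinate $\sigma_n$ (equivalently $a_y(s_0) = \sigma_n$), and that each $g_i$ has the same third coordinate $\sigma_n$. If $a_y = \sum_{k=1}^3 t_k g^{(j)}_{i^{(j)}_k}$ with $t_k \geq 0$, then comparing third coordinates gives $\sigma_n = (\sum_{k=1}^3 t_k)\sigma_n$, forcing $\sum_{k=1}^3 t_k = 1$; the representation is therefore a convex combination. Thus intersecting $\cone{\{g^{(j)}_{i^{(j)}_1}, g^{(j)}_{i^{(j)}_2}, g^{(j)}_{i^{(j)}_3}\}}$ with the hyperplane $\{e : e(s_0) = \sigma_n\}$ yields precisely $\conv{\{g^{(j)}_{i^{(j)}_1}, g^{(j)}_{i^{(j)}_2}, g^{(j)}_{i^{(j)}_3}\}}$, and for the point $a_y$ on that hyperplane cone membership is equivalent to triangle membership. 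Taking the intersection over $j$ delivers the claimed equivalence.

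I do not expect a serious obstacle here: the only point demanding care is this cone-to-triangle reduction, which rests entirely on the observation that $a_y$ and all the indecomposable effects $g_i$ sit on the single hyperplane $\{e : e(s_0) = \sigma_n\}$. Once that normalization is noted, the argument is immediate, and the genuine work has already been carried out in Corollary \ref{cor:T3-cone} and in the triangle description of the extreme simulation irreducible observables established just before the lemma.
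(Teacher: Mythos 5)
Your proof is correct and follows essentially the same route as the paper's: both reduce the claim to Corollary \ref{cor:T3-cone} and then pass from cones to triangles by noting that $a_y$ and all the effects $g^{(j)}_{i^{(j)}_k}$ lie on the common hyperplane $\{e : e(s_0)=\sigma_n\}$, which forces any positive combination to be convex. Your version merely makes explicit two steps the paper leaves implicit (that positive rescaling of generators leaves the cone unchanged, and the third-coordinate computation forcing $\sum_k t_k = 1$), which is fine.
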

\begin{proof}
By Cor. \ref{cor:T3-cone} we see that we need to show that
\begin{equation}
\A_y  \in \bigcap_{j=1}^{q_m} \cone{\left\lbrace g^{(j)}_{i^{(j)}_1}, g^{(j)}_{i^{(j)}_2}, g^{(j)}_{i^{(j)}_3} \right\rbrace } \label{eq:polygon-sim-irr-cone}
\end{equation}
if and only if
\begin{equation}
a_y \in \bigcap_{j=1}^{q_m} \conv{ \left\lbrace g^{(j)}_{i^{(j)}_1}, g^{(j)}_{i^{(j)}_2}, g^{(j)}_{i^{(j)}_3} \right\rbrace } \label{eq:polygon-sim-irr-conv}
\end{equation}
for all $y \in \Omega_\A$.

First let $\A_y$ be in the intersection of cones, which itself is a cone, in \eqref{eq:polygon-sim-irr-cone} for some $y \in \Omega_A$. Since $\A_y = \alpha_y a_y$ for some $\alpha_y >0$, it follows that also $a_y$ is included in the same intersection of cones. Thus, $a_y$ can be expressed as a positive linear combination of $g^{(j)}_{i^{(j)}_1}, g^{(j)}_{i^{(j)}_2}, g^{(j)}_{i^{(j)}_3}$ for all $j \in \{1, \ldots, q_m\}$. Since all the vectors $a_y, g^{(j)}_{i^{(j)}_1}, g^{(j)}_{i^{(j)}_2}, g^{(j)}_{i^{(j)}_3}$ lie on the same $z= \sigma_n$ --plane for all $j$, it follows that the positive linear combination must actually be a convex combination which shows \eqref{eq:polygon-sim-irr-conv}.

Let then $a_y$ be included in the intersection of the convex hulls in \eqref{eq:polygon-sim-irr-conv} for some $y \in \Omega_\A$. Since a convex hull is just a special case of a conic hull, we see that $a_y$ is also included in the intersection of cones in \eqref{eq:polygon-sim-irr-cone}. By multiplying $a_y$ by $\alpha_y$ we see that then \eqref{eq:polygon-sim-irr-cone} holds.
\end{proof}

The smallest such triangles to contain the centroid $(0,0,\sigma_n)^T$ have vertices $g_i$, $g_{i+m}$ and $g_{i+m+1}$ for $i =1,\ldots,2m+1$, where the addition is modulo $2m+1$. We will show that the intersection of these smallest triangles gives us the whole intersection of all the triangles that represent the extreme simulation irreducible observables. We start with a small Lemma (see Fig. \ref{fig:polygon-lemma}).

\begin{figure}
\begin{center}
\includegraphics[scale=0.3]{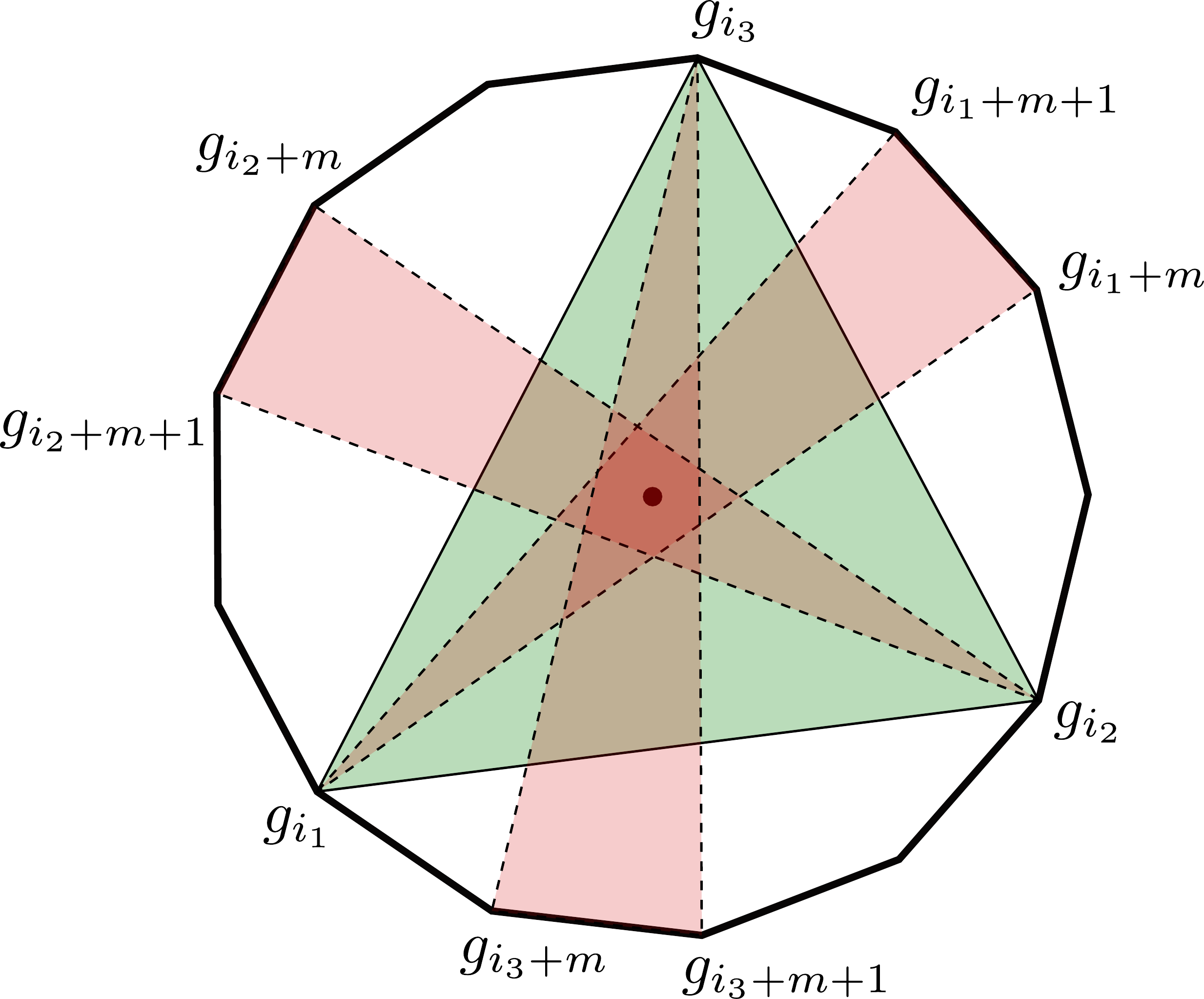} 
 \caption{\label{fig:polygon-lemma} Illustration of Lemma \ref{lemma:triangles} when  $n=13$. For a simulation irreducible observable $\B$ with $\B_k = c_k g_{i_k}$ for $k=1,2,3$, the sets $\conv{\left\lbrace g_{i_k}, g_{i_k+m},g_{i_k+m+1} \right\rbrace}$ are depicted in red and the set $\conv{\left\lbrace g_{i_1}, g_{i_2},g_{i_3} \right\rbrace}$ in green. One sees that the intersection of the red sets is contained in the green set just as the lemma states.}
\end{center}
\end{figure}

\begin{lemma}\label{lemma:triangles}
For an extreme simulation irreducible observable $\B$ such that $\B_k = c_k g_{i_k}$ for $k \in \{1,2,3\}$ we have that
$$
\bigcap_{k=1}^3 \conv{\left\lbrace g_{i_k}, g_{i_k+m},g_{i_k+m+1} \right\rbrace} \subseteq \conv{\left\lbrace g_{i_1}, g_{i_2},g_{i_3} \right\rbrace}.
$$
\end{lemma}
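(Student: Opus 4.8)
The plan is to reduce the statement to an elementary fact about which side of a chord certain vertices of the polygon lie on, working entirely in the two-dimensional affine plane $\aff{\{g_i\}_{i=1}^{2m+1}}$ and using the centroid $c=\sigma_n u$ as the reference point. First I would record what it means for $\B$ to be a genuine extreme simulation irreducible observable: by the discussion preceding the lemma, $\sigma_n u \in \conv{\{g_{i_1},g_{i_2},g_{i_3}\}}$, so the three chosen vertices are spread out in the sense that, listing them in their cyclic order around the polygon, the arc between any two consecutive ones spans at most $m$ of the $2m+1$ steps. Since a regular $(2m+1)$-gon has an odd number of vertices, no two are antipodal and hence $c$ lies on no chord; therefore $c$ lies in the interior of $\Delta:=\conv{\{g_{i_1},g_{i_2},g_{i_3}\}}$, and $\Delta$ is exactly the intersection of the three closed half-planes bounded by its edge-chords and containing $c$ (equivalently, containing the opposite vertex).

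The key step is to show that across each edge of $\Delta$, one of the three small triangles already lies in the corresponding half-plane. Write $T_j:=\conv{\{g_j,g_{j+m},g_{j+m+1}\}}$, and consider the edge of $\Delta$ on the chord $\ell$ through $g_{i_1}$ and $g_{i_2}$; let $H$ be the closed half-plane bounded by $\ell$ and containing $c$. I claim $T_{i_1}\subseteq H$. The apex $g_{i_1}$ lies on $\ell$, so it suffices to check that the remaining two vertices $g_{i_1+m}$ and $g_{i_1+m+1}$ lie on the $c$-side. This is where I would carry out the arc-counting: writing the gap from $i_1$ to $i_2$ as $d\le m$ steps, the minor arc cut off by $\ell$ consists precisely of the vertices at offsets $1,\ldots,d-1$ from $g_{i_1}$, whereas $g_{i_1+m}$ and $g_{i_1+m+1}$ sit at offsets $m\ge d$ and $m+1>d$ (both $\le 2m$ for $m\ge 1$). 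Hence neither lies strictly on the minor arc, so both lie on the closed major-arc side of $\ell$, which is the $c$-side since $d\le m$ forces the centroid onto the major side. By convexity all of $T_{i_1}$ then lies in $H$.

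Finally I would assemble the three half-plane inclusions. The same argument applied to each edge shows $T_{i_1}$ lies in the half-plane across $\conv{\{g_{i_1},g_{i_2}\}}$, $T_{i_2}$ in the half-plane across $\conv{\{g_{i_2},g_{i_3}\}}$, and $T_{i_3}$ in the half-plane across $\conv{\{g_{i_3},g_{i_1}\}}$ (each time the apex of the small triangle sits on the edge-chord, and the relevant gap is $\le m$). Since $\bigcap_{k=1}^3 T_{i_k}$ is contained in each individual $T_{i_k}$, it lies in all three half-planes simultaneously, and their intersection is exactly $\Delta$; this gives $\bigcap_{k=1}^3 T_{i_k}\subseteq \conv{\{g_{i_1},g_{i_2},g_{i_3}\}}$, as required. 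The one genuinely delicate point is the arc-counting in the second step — verifying that the offsets $m$ and $m+1$ always land on the far side of every admissible edge-chord — and it is precisely here that the hypothesis $\sigma_n u\in\conv{\{g_{i_1},g_{i_2},g_{i_3}\}}$, i.e.\ that every gap is at most $m$, is used.
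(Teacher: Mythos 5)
Your proof is correct, and it takes a genuinely different route from the paper's. The paper argues by contradiction: assuming some $x$ lies in all three small triangles but not in $\Delta:=\conv{\left\lbrace g_{i_1},g_{i_2},g_{i_3}\right\rbrace}$, it shows (using the centroid condition) that the indices $i_k+m,\, i_k+m+1$ always lie in the range $\left\lbrace i_{k+1},\ldots,i_{k+2}\right\rbrace$, hence each small triangle is covered by the union of $\Delta$ and the arc hull $\conv{\left\lbrace g_{i_{k+1}},\ldots,g_{i_{k+2}}\right\rbrace}$, forcing $x$ into all three arc hulls simultaneously---a contradiction, since those sets have no common point. You argue directly: $\Delta$ equals the intersection of the three closed half-planes bounded by its edge-chords and containing the centroid (which is interior to $\Delta$, since an odd polygon has no chord through its center), and each small triangle lies in the half-plane attached to the edge joining its apex $g_{i_k}$ to the next triangle vertex, because the apex is on that chord while the other two vertices sit at offsets $m$ and $m+1$, both at least the cyclic gap $d_k\leq m$ from $g_{i_k}$ to the next vertex, and hence land on the centroid side; intersecting the three inclusions yields the claim. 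Both arguments rest on the same geometric facts---centroid containment forces every cyclic gap to be at most $m$, and the offsets $m$, $m+1$ then clear the opposite edge---but your direct version buys two things: it needs no contradiction scaffolding, and it sidesteps a small imprecision in the paper, whose ``disjoint union'' $\Delta\cup\conv{\left\lbrace g_{i_{k+1}},\ldots,g_{i_{k+2}}\right\rbrace}$ is not literally disjoint (the two sets share the edge $\conv{\left\lbrace g_{i_{k+1}},g_{i_{k+2}}\right\rbrace}$), although this does not break the paper's argument because $x\notin\Delta$ there by assumption. The one delicate point on your side is the pairing of each small triangle with the edge running from its apex to the \emph{next} vertex in the cyclic order agreeing with the direction of the $+m$, $+m+1$ offsets; your opening WLOG relabelling supplies exactly this, so the step is sound, but the offset count would fail under the opposite pairing, so this deserves to be stated explicitly.
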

\begin{proof}
To see this, suppose that, contrary to this, there exists a point 
\begin{equation*}
x \in \bigcap_{k=1}^3 \conv{\left\lbrace g_{i_k}, g_{i_k+m},g_{i_k+m+1} \right\rbrace}
\end{equation*}
such that $x \notin \conv{\left\lbrace g_{i_1}, g_{i_2},g_{i_3} \right\rbrace}$. Without loss of generality we assume that $i_1 < i_2 < i_3$. 

If we consider a fixed vertex $g_{i_k}$ for some $k \in \{1,2,3\}$, it is clear that the indices $i_k+m$ and $i_k+m+1$ are contained in the set of indices $\left\lbrace i_{k+1}, i_{k+1}+1, \ldots, i_{k+2}-1, i_{k+2}\right\rbrace $ (Fig. \ref{fig:polygon-lemma}). This is because otherwise they would be contained in either $\left\lbrace i_{k+2}, i_{k+2}+1, \ldots, i_{k}-1, i_{k}\right\rbrace$ or $\left\lbrace i_{k}, i_{k}+1, \ldots, i_{k+1}-1, i_{k+1}\right\rbrace $ so that 
\begin{equation*}
\conv{\left\lbrace g_{i_k}, g_{i_k+m},g_{i_k+m+1} \right\rbrace} \subset \conv{\left\lbrace g_{i_{k+2}}, \ldots, g_{i_k} \right\rbrace}
\end{equation*}
or 
\begin{equation*}
\conv{\left\lbrace g_{i_k}, g_{i_k+m},g_{i_k+m+1} \right\rbrace} \subset \conv{\left\lbrace g_{i_{k}}, \ldots, g_{i_{k+1}} \right\rbrace}
\end{equation*}
both of which would contradict the fact that $\sigma_n u \in \conv{\left\lbrace g_{i_k}, g_{i_k+m},g_{i_k+m+1} \right\rbrace}$.

Since now $x \in \conv{\left\lbrace g_{i_k}, g_{i_k+m},g_{i_k+m+1} \right\rbrace}$ for all $k \in \{1,2,3\}$ but $x \notin \conv{\left\lbrace g_{i_1}, g_{i_2},g_{i_3} \right\rbrace}$, we must have for all $k' \in \{1,2,3\}$ that 
\begin{equation*}
x \notin  \conv{\left\lbrace g_{i_{k'}}, g_{i_{k'}+m},g_{i_{k'}+m+1} \right\rbrace} \bigcap \conv{\left\lbrace g_{i_1}, g_{i_2},g_{i_3} \right\rbrace}.
\end{equation*}
We have by the above statement about the indices that $\{i_{k}+m,i_{k}+m+1\} \subseteq \{i_{k+1},i_{k+1}+1, \ldots,i_{k+2}-1,i_{k+2}\}$ so that it then follows that 
\begin{align*}
& \conv{g_{i_k},  g_{i_k+m}, g_{i_k+m+1}}   \subseteq \conv{\left\lbrace g_{i_1}, g_{i_2},g_{i_3} \right\rbrace}  
 \bigcup \conv{\left\lbrace g_{i_{k+1}}, g_{i_{k+1}+1} \ldots, g_{i_{k+2}} \right\rbrace }
\end{align*}
which is a disjoint union for all $k\in \{1,2,3\}$. Thus, $x \in \conv{\left\lbrace g_{i_{k+1}}, g_{i_{k+1}+1} \ldots, g_{i_{k+2}} \right\rbrace }$ for all $k \in \{1,2,3\}$ which is a contradiction since the sets do not intersect.
\end{proof}

\begin{proof}[\textbf{Proof of Proposition \ref{prop:polygon_T3_cone}}]
From Cor. \ref{cor:T3-cone} it is clear that in order to prove the statement we need to show that
\begin{equation*}
\bigcap_{j=1}^{q_m} \cone{\{\B^{(j)}_{x}\}_{x \in \Omega_{\B^{(j)}}}} = \bigcap_{i=1}^{2m+1} \cone{\{g_i,g_{i+1},g_{i+m+1}\}}.
\end{equation*}

The above statement is about cones but by Lemma \ref{lemma:polygon-triangle} we can equivalently consider it in terms of the triangles that represent the observables in $\obs^{ext}_{irr}(\state_{2m+1})$. By using the previously introduced notation for the effects of the extreme simulation irreducible observables, the above statement about the triangles then reads as
$$
\bigcap_{j=1}^{q_m} \conv{\{g^{(j)}_{i^{(j)}_1},g^{(j)}_{i^{(j)}_2},g^{(j)}_{i^{(j)}_3}\}} =\bigcap_{i=1}^{2m+1} \conv{\{g_i,g_{i+m},g_{i+m+1}\}}.
$$

The inclusion ``$\subseteq$" is clear since among the $q_m$ triangles that represent the extreme simulation irreducible observables the triangles with vertices $g_i, g_{i+m}$ and $g_{i+m+1}$ for $i=1, \ldots,2m+1$ are included. 

For the inclusion ``$\supseteq$", we use Lemma \ref{lemma:triangles} for observables $\{\B^{(j)}\}_{j=1}^{q_m}$ which states that
\begin{equation*}
\bigcap_{k=1}^3 \conv{\left\lbrace g^{(j)}_{i^{(j)}_k}, g^{(j)}_{i^{(j)}_k+m},g^{(j)}_{i^{(j)}_k+m+1} \right\rbrace} \subseteq \conv{\left\lbrace g^{(j)}_{i^{(j)}_1}, g^{(j)}_{i^{(j)}_2},g^{(j)}_{i^{(j)}_3} \right\rbrace}
\end{equation*}
for all $j \in \{1,\ldots,q_m\}$. By taking the intersection of all $j \in \{1,\ldots, q_m\}$ we get
\begin{align*}
 \bigcap_{i=1}^{2m+1} \cone{\{g_i,g_{i+1},g_{i+m+1}\}}  &= \bigcap_{j=1}^{q_m} \bigcap_{k=1}^3 \conv{\left\lbrace g^{(j)}_{i^{(j)}_k}, g^{(j)}_{i^{(j)}_k+m},g^{(j)}_{i^{(j)}_k+m+1} \right\rbrace} \\
& \subseteq \bigcap_{j=1}^{q_m} \conv{\left\lbrace g^{(j)}_{i^{(j)}_1}, g^{(j)}_{i^{(j)}_2},g^{(j)}_{i^{(j)}_3} \right\rbrace}
\end{align*}
which proves the statement.
\end{proof}

\section{Proof of Lemma \ref{lemma:polygon_extreme}}
\begin{proof}[\textbf{Proof of Lemma \ref{lemma:polygon_extreme}}]
We first see that either $x_{i,j}$ is an extreme point of $C_m$ or $x_{i,j} \notin C_m$. Namely, assume that $x_{i,j} \in C_m$ but it is not an extreme point of $C_m$, then there exists some open line segment $M$, such that $x_{i, j} \in M$ and $M \subset C_m$. We must have $M \subset L_i$ since if $M$ would intersect $L_i$, then we would get a contradiction with $M \subset C_m$. But then we must also have $M \subset L_{i+j}$ which is a contradiction with $L_i \neq L_{i+j}$.

Next fix $i \in \{1, \ldots, 2m+1\}$. From $x_{i,j} = L_i \cap L_{i+j}$ we get $x_{i,j} = \lambda_j g_i + (1-\lambda_j) g_{i+m}$, where
\begin{align*}
\lambda_j &= \dfrac{\cos\left(\frac{(2j+1)\pi}{4m+2}\right)}{2 \cos\left(\frac{j\pi}{2m+1}\right)\cos\left(\frac{\pi}{4m+2}\right)} = \dfrac{1}{2}\left[1-\tan\left(\frac{\pi}{4m+2}\right)\tan\left(\frac{j\pi}{2m+1}\right) \right].
\end{align*}
Since $j \pi/(2m+1) \in \left[ 0, \pi/2 \right]$ for all $j \in \{1,\ldots,m\}$, we have that $\tan\left(\frac{j\pi}{2m+1}\right)$ is an increasing function of $j$ so that $\lambda_j \leq \lambda_1$. We then see that 
\begin{align*}
x_{i,j} &= \lambda_j g_i +(1-\lambda_j) g_{i+m} = \dfrac{\lambda_j}{\lambda_1} x_{i,1} + \left(1- \dfrac{\lambda_j}{\lambda_1} \right) g_{i+m}
\end{align*}
for all $j \in \{1,\ldots,m\}$, where now $\frac{\lambda_j}{\lambda_1} \in [0,1]$ so that $x_{i,j} \in \conv{\{x_{i,1} , g_{i+m} \}}$. Since
\begin{equation*}
x_{i,1}, g_{i+m} \in \conv{ \{g_{i+1}, g_{i+2}, \ldots, g_{i+m+1} \}},
\end{equation*}
it follows that also $x_{i,j}  \in \conv{ \{g_{i+1}, g_{i+2}, \ldots, g_{i+m+1} \}}$ for all $j \in \{1, \ldots,m\}$.

Clearly $x_{i,j} \in L_i$ for all $j \in \{1, \ldots,m\}$ but $x_{i,j} \notin L_{i+1}$ for all $j \in \{2, \ldots,m\}$, where $L_{i+1}$ can be expressed as
\begin{equation*}
\conv{ \{g_{i+1}, \ldots, g_{i+m+1} \}} \cap \conv{ \{g_{i+1}, g_{i+m+1}, g_{i+m+2} \}},
\end{equation*}
so that $x_{i,j} \notin \conv{ \{g_{i+1}, g_{i+m+1}, g_{i+m+2} \}}$ for $j\in \{2, \ldots,m\}$. Thus, it follows that $x_{i,j} \notin C_m$ for $j\in \{2, \ldots,m\}$. 

The only candidates for the extreme points of $C_m$ are then $x_{i,1}$ for all $i \in \{1, \ldots,2m+1\}$. From the symmetry it follows that all $x_{i,1}$ indeed must be extreme since if $x_{i',1}$ is not extreme for some $i' \in \{1,\ldots,2m+1\}$ it would follow that $x_{i,1}$ is not extreme for any $i \in \{1,\ldots,2m+1\}$. Hence, the claim follows.
\end{proof}

\end{document}